\documentclass[11pt, letterpaper]{article}
\usepackage[utf8]{inputenc}
\usepackage{fullpage,times}

\usepackage[dvipsnames]{xcolor}

\usepackage[procnumbered,ruled,vlined,linesnumbered]{algorithm2e}

\DontPrintSemicolon
\SetKw{KwAnd}{and}
\SetProcNameSty{textsc}
\SetFuncSty{textsc}

\usepackage{mathbbol}

\usepackage{amsmath,amsfonts,amsthm,amssymb,multirow}
\usepackage{times}
\usepackage{wrapfig}
\usepackage{graphicx}
\usepackage{enumitem}
\usepackage{subcaption} 
\usepackage{calc}
\usepackage{tikz}
\usepackage{hyperref}
\usetikzlibrary{decorations.markings, patterns}
\tikzstyle{vertex}=[circle, draw, inner sep=0pt, minimum size=4pt, fill = black]

\usepackage{graphicx}
\usepackage{tabularx}
\makeatletter
\newcommand{\multiline}[1]{%
  \begin{tabularx}{\dimexpr\linewidth-\ALG@thistlm}[t]{@{}X@{}}
    #1
  \end{tabularx}
}
\makeatother
\usepackage{mathtools}
\mathtoolsset{showonlyrefs}

\makeatletter
\def\BState{\State\hskip-\ALG@thistlm}
\makeatother

\newcommand{\ceil}[1]{\lceil #1 \rceil}
\newcommand{\floor}[1]{\lfloor #1 \rfloor}

\usepackage[compact]{titlesec}
\titlespacing{\section}{0pt}{3ex}{2ex}
\titlespacing{\subsection}{0pt}{2ex}{1ex}
\titlespacing{\subsubsection}{0pt}{0.5ex}{0ex}

\newtheorem{theorem}{Theorem}[section]

\newtheorem{corollary}[theorem]{Corollary}

\newtheorem{definition}[theorem]{Definition}
\newtheorem{lemma}[theorem]{Lemma}
\newtheorem{claim}[theorem]{Claim}

\makeatletter
\let\c@fconjecture\c@conjecture
\makeatother

\makeatletter
\let\c@fconj\c@conj
\makeatother

\def \eps {\varepsilon}

\newcommand{\ignore}[1]{}

\def \popped { \text{\rm popped} }

\def\defeq{\coloneqq}

\title{Hardness of Approximate Diameter: Now for Undirected Graphs}
\author{Mina Dalirrooyfard\thanks{Department of Electrical Engineering and Computer Science and CSAIL, MIT. minad@mit.edu. Research supported by the National Science Foundation (NSF) under NSF CAREER Award 1651838} ,\, Ray Li\thanks{Department of Computer Science, Stanford University. rayyli@cs.stanford.edu. Research supported by the National Science Foundation (NSF) under Grant No. DGE - 1656518 and by Jacob Fox's Packard Fellowship.} ,\, Virginia Vassilevska Williams\thanks{Department of Electrical Engineering and Computer Science and CSAIL, MIT. virgi@mit.edu. Supported by NSF CAREER Award 1651838, NSF Grants CCF-1528078, CCF-1514339 and CCF-1909429, BSF Grant BSF:2012338, a Google Research Fellowship and a Sloan Research Fellowship.}}
\date{}
\begin{document}

\maketitle
\thispagestyle{empty}

\begin{abstract}
Approximating the graph diameter is a basic task of both theoretical and practical interest. A simple folklore algorithm can output a 2-approximation to the diameter in linear time by running BFS from an arbitrary vertex. It has been open whether a better approximation is possible in near-linear time. A series of papers on fine-grained complexity have led to strong hardness results for diameter in directed graphs, culminating in a recent tradeoff curve independently discovered by [Li, STOC'21] and [Dalirrooyfard and Wein, STOC'21], showing that under the Strong Exponential Time Hypothesis (SETH), for any integer $k\ge 2$ and $\delta>0$, a $2-\frac{1}{k}-\delta$ approximation for diameter in directed $m$-edge graphs requires $mn^{1+1/(k-1)-o(1)}$ time. In particular, the simple linear time $2$-approximation algorithm is optimal for directed graphs.

In this paper we prove that the same tradeoff lower bound curve is possible for undirected graphs as well, extending results of [Roditty and Vassilevska W., STOC'13], [Li'20] and [Bonnet, ICALP'21] who proved the first few cases of the curve, $k=2,3$ and $4$, respectively. Our result shows in particular that the simple linear time $2$-approximation algorithm is also optimal for undirected graphs. To obtain our result we develop new tools for fine-grained reductions that could be useful for proving SETH-based hardness for other problems in undirected graphs related to distance computation.
\end{abstract}
\newpage
\setcounter{page}{1}

\section{Introduction}
\label{sec:intro}
One of the most basic graph parameters, the {\em diameter} is the largest of the shortest paths distances between pairs of vertices in the graph. Estimating the graph diameter is important in many applications (see e.g. \cite{diam-prac2,diam-prac4,diam-prac5}. For instance, the diameter measures how fast information spreads in networks, which is central for paradigms such as distributed and sublinear algorithms.

The fastest known algorithms for computing the diameter of an $n$-node, $m$-edge graph with nonnegative edge weights solve All-Pairs Shortest Paths (APSP) and
 run in $O(\min\{mn+n^2\log\log n, n^3/exp(\sqrt{\log n})\})$ time \cite{Pettie04,ryanapsp}. For unweighted graphs one can use fast matrix multiplication \cite{vstoc12,legallmult,almanvw21,seidel1995all,alon1997exponent} and solve the problem in $O(n^{2.373})$ time.

Any algorithm that solves APSP naturally needs $n^2$ time, just to output the $n^2$ distances. Meanwhile, the diameter is a single number, and it is apriori unclear why one would need $n^2$ time, especially in sparse graphs, for which $m\leq n^{1+o(1)}$.

There is a linear time folklore algorithm that is guaranteed to return an estimate $\hat{D}$ for the diameter $D$ so that $D/2\leq \hat{D}\leq D$, a so called $2$-approximation. The algorithm picks an arbitrary vertex and runs BFS from it, returning the largest distance found. The same idea achieves a near-linear time $2$-approximation in directed and nonnegatively weighted graphs by replacing BFS with Dijkstra's algorithm to and from the vertex.

Roditty and Vassilevska W. \cite{rv13}, following Aingworth, Chekuri, Indyk and Motwani \cite{aingworth}, designed a $3/2$-approximation algorithm running in $\tilde{O}(m\sqrt n)$ time, for the case when the diameter is divisible by $3$, and with an additional small additive error if it is not divisible by $3$. Chechik, Larkin, Roditty, Schoenebeck, Tarjan and Vassilevska W. \cite{ChechikLRSTW14} gave a variant of the algorithm that runs in $\tilde{O}(m^{3/2})$ time and always achieves a $3/2$-approximation (with no additive error). These algorithms work for directed or undirected graphs with nonnegative edge weights.

Cairo, Grossi and Rizzi \cite{CGR} extended the techniques of \cite{rv13} and developed an approximation scheme that for every integer $k\geq 0$, achieves an ``almost'' $2-1/2^k$-approximation (i.e. it has an extra small additive error, similar to \cite{rv13})
and runs in $\tilde{O}(mn^{1/(k+1)})$ time. The scheme only works for undirected graphs, however.

These are all the known approximation algorithms for the diameter problem in arbitrary graphs: the scheme of \cite{CGR,rv13} for undirected graphs, and the three algorithms for directed graphs: the exact $\tilde{O}(mn)$ time algorithm using APSP, the $\tilde{O}(m)$ time $2$-approximation and the $3/2$-approximation algorithms of \cite{rv13,ChechikLRSTW14}.
In Figure \ref{fig:history} the known algorithms are represented as purple and pink points.

A sequence of works \cite{rv13,BRSVW18, Li20,Bonnet21,Li21,DW21,Bonnet21ic} provided lower bounds for diameter approximation, based on the Strong Exponential Time Hypothesis (SETH) \cite{ipz1,CIP10} that CNF-SAT on $n$ variables and $O(n)$ clauses requires $2^{n-o(n)}$ time. The first such lower bound by \cite{rv13} showed that any $3/2-\eps$ approximation to the diameter of a directed or undirected unweighted graph for $\eps>0$, running in $O(m^{2-\delta})$ time for $\delta>0$, would refute SETH, and hence the \cite{rv13} $3/2$-approximation algorithm has a (conditionally) optimal approximation ratio for a subquadratic time algorithm for diameter.
Later, Backurs, Roditty, Segal, Vassilevska W. and Wein \cite{BRSVW18} showed that under SETH, any $O(m^{3/2-\delta})$ time algorithm can at best achieve a $1.6$-approximation to the diameter of an undirected unweighted graph. Thus, the \cite{rv13} $3/2$-approximation algorithm has a (conditionally) optimal running time for a $(1.6-\eps)$-approximation algorithm.

Following work of Li \cite{Li20} and Bonnet \cite{Bonnet21}, Li \cite{Li21} and independently Dalirrooyfard and Wein \cite{DW21}, provided a scheme of tradeoff lower bounds for diameter in {\em directed} graphs. They showed that under SETH, for every integer $k\geq 2$, a $(2-1/k-\eps)$-approximation algorithm for $\eps>0$ for the diameter in $m$-edge directed graphs, requires at least $m^{1+1/(k-1)-o(1)}$ time.
Thus in particular, under SETH, the linear time $2$-approximation algorithm for diameter is optimal for directed graphs.

\begin{figure}
    \centering
    \includegraphics{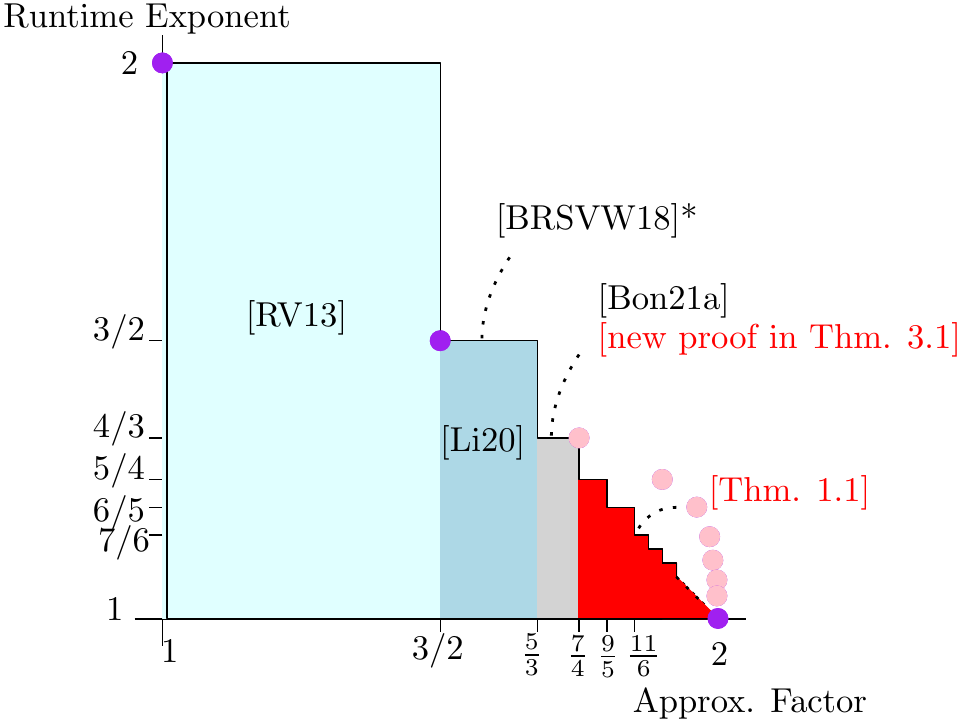}
    \caption{Hardness results for Diameter (in undirected graphs). The $x$-axis is the approximation factor and the $y$-axis is the runtime exponent. Black lines represent lower bounds. Purple dots represent algorithms, and pink dots represent algorithms that also lose an additive factor. The red region represents new approximation ratio vs. runtime tradeoffs that are unachievable (under SETH) in light of this work.
    In \cite{BRSVW18}, the labeled lower bound was proved for weighted graphs, and in unweighted graphs they proved a weaker lower bound that a $1.6-\varepsilon$ approximation needs $m^{3/2-o(1)}$ time.}
    \label{fig:history}
\end{figure}

For undirected graphs, however, only three conditional lower bounds are known: the $m^{2-o(1)}$ \cite{rv13}  lower bound for $(3/2-\eps)$-approximation, the $m^{3/2-o(1)}$ \cite{Li20}  lower bound for $(5/3-\eps)$-approximation, and the $m^{4/3-o(1)}$ \cite{Bonnet21ic}  lower bound for $(7/4-\eps)$-approximation (see Figure~\ref{fig:history}).

The tradeoff lower bounds for directed diameter of \cite{DW21} and \cite{Li21} crucially exploited the directions of the edges. One might think that one can simply replace the directed edges with undirected gadgets. However, this does not seem possible. A very high level reason is that the triangle inequality in undirected graphs can be used in both directions. The directed edges in the prior constructions were used to make sure that some pairs of vertices have short paths between them, while leaving the possibility of having large distances between other pairs. If undirected edges (or even gadgets) are used instead however, the triangle inequality implies short paths for pairs of vertices that the construction wants to avoid. A short path from $u$ to $v$ and a short path from $x$ to $v$ does imply a short path from $u$ to $x$ in undirected graphs, but not in directed graphs. This simple reason is basically why no simple extensions of the results of \cite{DW21} and \cite{Li21} to undirected graphs seem to work. (See Section~\ref{sec:proofsketch} for more about this.)

The fact that the triangle inequality can be used in both directions in undirected graphs, makes it difficult to extend the lower bound constructions to undirected graphs, but it also seems to make more algorithmic tradeoffs possible for undirected than for directed graphs, as evident from the Cairo, Grossi, Rizzi \cite{CGR} algorithms. It thus seems possible that a better than $2$ approximation algorithm running in linear time could be possible for undirected graphs.

The main result of this paper is a delicate construction that achieves the same tradeoff lower bounds for diameter in undirected graphs as the ones in directed graphs, thus showing that undirected diameter is just as hard. Namely:

\begin{theorem}
  Assuming SETH, for all integers $k\ge 2$, for all $\varepsilon>0$, a $(2-\frac{1}{k}-\varepsilon)$-approximation of Diameter in \emph{unweighted, undirected} graphs on $m$ edges requires $m^{1+1/(k-1)-o(1)}$ time. 
\label{thm:main}
\end{theorem}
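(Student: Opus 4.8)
The plan is to reduce from $k$-Orthogonal Vectors ($k$-OV): given sets $A_1,\dots,A_k \subseteq \{0,1\}^d$ each of size $n$, decide whether there exist $a_i \in A_i$ with $\langle a_1,\dots,a_k\rangle = 0$ (no coordinate where all $k$ vectors are $1$). Under SETH, for every $\eps>0$ this requires $n^{k-o(1)}$ time even when $d = \mathrm{polylog}(n)$, and the standard self-reduction lets us also assume an unbalanced split so that the eventual graph has $m = n^{k-1+o(1)}$ edges; a gap of $2-\frac1k-\eps$ in the diameter on such a graph then forces $m^{1+1/(k-1)-o(1)}$ time, matching the theorem. The heart of the argument is a gadget construction: an unweighted undirected graph $G$ whose diameter is $\le D$ in the YES case (an orthogonal $k$-tuple exists) and $\ge (2-\frac1k)D \cdot (1-o(1))$ in the NO case, for a suitable small constant $D$ (morally $D = 2k-1$ or similar).

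I would build $G$ in $k$ layers corresponding to the $k$ OV sets, together with a coordinate layer, following the paradigm of \cite{Li21, DW21} but redesigned so that the distance lower bounds survive the two-directional triangle inequality. Concretely: include a vertex for each $a_i \in A_i$ in layer $i$, and a vertex for each coordinate $c \in [d]$; wire the $i$-th vertex layer to the coordinate layer according to the incidence pattern of the vectors (e.g.\ edges between $a_i$ and $c$ when $a_i[c]=1$, or its complement, tuned per layer). Then string the layers together through a central path / coordinate bundle so that the only way for two ``extremal'' vertices (one from $A_1$, one from $A_k$, say) to be at small distance is to route through a coordinate that is simultaneously compatible with all $k$ chosen vectors — i.e.\ a witness that they are \emph{not} orthogonal. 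If the $k$-tuple \emph{is} orthogonal, no such short route exists and the distance blows up by the claimed factor. To control the graph size I place the larger OV sets on the ``cheap'' side: split one set into $n^{k-2}$ groups of size $n$ so the number of edges is $\tilde O(n^{k-1})$ while the number of vertex-pairs that must be separated is $\tilde O(n)$ on the expensive side, giving the $m^{1+1/(k-1)}$ shape.

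The new tools promised in the abstract are what make the undirected lower bounds go through, and I expect this to be the main obstacle. In the directed constructions one can use one-way edges to guarantee that a vertex $u$ in an early layer can reach a vertex $v$ in a late layer cheaply while keeping $d(v,u')$ large for a sibling $u'$; in the undirected setting a short $u\!-\!v$ path plus a short $v\!-\!u'$ path immediately yields a short $u\!-\!u'$ path, collapsing the gap. The fix I would pursue is to make the ``short'' certificates \emph{one-time-use} and \emph{destination-specific}: route short paths only through carefully spaced coordinate-bundle vertices so that concatenating two certificates forces a detour long enough to preserve the gap, and add padding paths so that within each layer all siblings are already mutually far apart (distance $\approx D$), so that a spurious short $u\!-\!u'$ path would itself be a contradiction. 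One then proves two lemmas — a YES-case upper bound (exhibit, for every pair of vertices, an explicit short path, using the orthogonal tuple where necessary) and a NO-case lower bound (a case analysis over which layers the two endpoints lie in, showing every path is long unless it certifies non-orthogonality of \emph{every} $k$-tuple, contradiction) — and finishes by checking $|V(G)|, |E(G)|$ and substituting the $k$-OV hardness. The delicate part is getting the spacing constants in the coordinate bundle exactly right so that the NO-case bound is $(2-\frac1k)D$ and not something weaker, since an off-by-one in the gap degrades the approximation factor below $2-\frac1k$.
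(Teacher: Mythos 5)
There is a genuine gap, on two levels. First, your assignment of the cases is inverted and internally inconsistent. In every reduction of this family (including the paper's), \emph{non}-orthogonality is what produces short paths: if every $k$-tuple drawn from two vertices' vectors is non-orthogonal, one can find a common coordinate (or coordinate array) witnessing a length-$k$ path, so the \emph{no-solution} case yields small diameter $k$, and the existence of an orthogonal $k$-tuple yields a pair at distance $2k-1$, giving the ratio $(2k-1)/k = 2-\frac1k$. Your second paragraph describes exactly this mechanism (``If the $k$-tuple is orthogonal, no such short route exists and the distance blows up''), yet your opening paragraph and your two concluding lemmas state the opposite (diameter $\le D$ in the YES case, long paths in the NO case, a YES-case upper bound ``using the orthogonal tuple''). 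As written, the two halves of the proposal contradict each other, and the lemmas you propose to prove are the wrong ones.

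Second, and more fundamentally, the construction --- which is the entire content of the theorem --- is absent. A layered graph with one vertex per vector $a_i\in A_i$ and one per coordinate has only $\tilde O(n+d)$ vertices and cannot carry enough information; the paper's vertices are \emph{configurations} encoding up to $k-1$ vectors arranged in stacks together with $O(k^2)$ coordinate arrays, which is where the $\tilde O(n^{k-1})$ size comes from and where the gap mechanism lives: a vector at the bottom of a length-$(k-1)$ stack needs $k-1$ operations to remove, and Lemma~\ref{lem:yes-1} converts a too-short path into a non-orthogonality witness. Your proposed fix for the undirected triangle-inequality problem (``one-time-use, destination-specific certificates'' and ``padding paths'') is a wish rather than a mechanism; the paper's actual answer is structural --- every valid configuration must keep at least $(k-2)/2$ vectors in a distinguished root stack, coordinate arrays are attached to edges of a star graph over the stacks, and the No-case path is forced to tear down one configuration before building the other --- and none of that is recoverable from your sketch. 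Finally, minor points: the quantity that must stay small is the number of \emph{edges} of a single diameter instance, not ``the number of vertex-pairs that must be separated,'' and the paper reduces from a single set $A$ (the unbalanced-split device you describe is not used and is not needed).
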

Theorem~\ref{thm:main} was proved previously for $k=2$ \cite{rv13}, $k=3$ \cite{BRSVW18,Li20}, and $k=4$ \cite{Bonnet21ic}. The theorem is stated in terms of the number of edges $m$; our lower bound constructions are for the special case when $m=n^{1+o(1)}$ (i.e. very sparse graphs).

The main consequence of our theorem is that under SETH, there can be no better near-linear time approximation algorithm for undirected unweighted diameter than the simple $2$-approximation algorithm that runs BFS from an arbitrary vertex.

\paragraph{Outline}

In Section~\ref{sec:prelim}, we give some preliminaries for our construction.
In Section~\ref{sec:k4} we show how to prove Theorem~\ref{thm:main} for small cases $k=4$ and $k=5$ to illustrate some of our ideas.
We (re)prove Theorem~\ref{thm:main} for $k=4$, giving a simplified proof of Bonnet's result, and show how the proof can be modified to give a proof for $k=5$. The full proof for $k=5$ is deferred to Appendix~\ref{app:k5}.
In Section~\ref{sec:proofsketch}, we highlight some of the ideas in the construction.
Afterwards, we prove our formal results.
In Section~\ref{sec:all}, we prove Theorem~\ref{thm:main} in full generality.

\section{Preliminaries} 
\label{sec:prelim}

\noindent For a positive integer $a$, let $[a]=\{1,2,\dots,a\}$.

\paragraph{$k$-OV.} A $k$-OV instance $\Phi$ is a set $A\subseteq\{0,1\}^{\mathbb{d}}$ of $n$ binary vectors of dimension $\mathbb{d}=\theta{(\log{n})}$ and the $k$-OV problem asks if we can find $k$ vectors $a_1,\ldots,a_k\in A$ such that they are orthogonal, i.e. $a_1\cdot \ldots \cdot a_k=0$. The \textit{$k$-OV Hypothesis} says that solving $k$-OV requires $n^{k-o(1)}$ time, and it is implied by SETH \cite{W04,TCS05}. 

Now we give the definitions that we use for our construction. At a very high level, we are going to start from a $k$-OV instance and create a diameter instance. To do so, we are going to make a graph where each node is a ``configuration", which we are going to define later. Each configuration consists of a number of ``stacks", where each stack has some of the vectors of the $k$-OV instance. There are relationships between different stacks in a configuration, and we define those relationships using ``coordinate arrays". Below we define these notions more formally.

\paragraph{Stacks.}
Given a $k$-OV instance $A\subset\{0,1\}^{\mathbb{d}}$, we make the following definitions.
A \emph{stack} $S=(a_1,\dots,a_{|S|})$ is a vector of elements of $A$ whose \emph{length} $|S|$ is a nonnegative integer.
Denote $a_1$ as the \emph{bottom} element of the stack and $a_{|S|}$ as the \emph{top} element of the stack.
We let $()$ denote the \emph{empty stack}, i.e., a stack with 0 vectors.
Given a stack $S=(a_1,\dots,a_\ell)$, a \emph{substack} $S_{\le \ell'}=(a_1,\dots,a_{\ell'})$ is given by the bottom $\ell'$ vectors of $S$, where $\ell'\le \ell$.
We call these tuples \emph{stacks}, because of the following operations.
The stack $popped(S)$ is the stack $(a_1,\dots,a_{\ell-1})$, i.e., the stack $S$ with the top element removed.
For a vector $b\in A$ and a stack $S=(a_1,\dots,a_\ell)$, the stack $S+b$ is the stack $S+b=(a_1,\dots,a_\ell,b)$.
The use of stacks as a primitive in our construction is motivated in Section~\ref{sec:proofsketch}.

\paragraph{Coordinate arrays.}
\begin{definition}
  \label{def:coord-1}
A \emph{$k$-coordinate-array $x$} is an element of $[\mathbb{d}]^{k-1}$. 
\end{definition}
In the reduction from $k$-OV, we only consider $k$-coordinate arrays, so we omit $k$ when it is understood.
For a $k$-coordinate array $x\in[\mathbb{d}]^{k-1}$ and an integer $\ell\in[k-1]$, let $x[\ell]$ denote the $\ell$th coordinate in the coordinate array $x$.
Also for a coordinate $c$ and a vector $a\in A$, $a[c]$ is the $c$th coordinate of $a$.
We index coordinate arrays by $x[\ell]$ and vectors in $A$ by $a[c]$, rather than $x_\ell$ and $a_c$ (respectively), for clarity. For a set of non-orthogonal vectors $\{a_1,\ldots,a_s\}$ for $s\le k$, let $ind(a_1,\ldots,a_s)$ return a coordinate $c$ such that $a_i[c]=1$ for all $i=1,\ldots,s$.

\begin{definition}[Stacks satisfying coordinate arrays]
\label{def:coord-2}
Let $S=(a_1,\ldots,a_{s})$ be a stack where $|S|\le k-1$, and let $x\in[\mathbb{d}]^{k-1}$ be a $k$-coordinate array.
We say that $S$ \emph{satisfies} $x$ if there exists sets $[k-1]= I_1\supset\cdots\supset I_s$ such that, for all $h=1,\dots,s$, we have $|I_h| = k-h$ and $a_h[x[i]] = 1$ for all $i\in I_h$. 
\end{definition}

\begin{table}
  \begin{minipage}[t]{0.24\textwidth}
  \vspace{0pt}
  \begin{tabular}{c|ccc}
    &$x[1]$ & $x[2]$ & $x[3]$\\ \hline
    $a_1$ & $1$ & $1$ & $1$ \\
    $a_2$ & $1$ & $1$ &     \\
    $a_3$ & $1$ &  & \\
  \end{tabular}
  \end{minipage}
  \begin{minipage}[t]{0.24\textwidth}
  \vspace{0pt}
  \begin{tabular}{c|ccc}
    & $x[1]$ & $x[2]$ & $x[3]$\\ \hline
    $a_1$ & $1$ & $1$ & $1$ \\
    $a_2$ & & $1$ & $1$     \\
    $a_3$ & & $1$ &   \\
  \end{tabular}
  \end{minipage}
  \begin{minipage}[t]{0.24\textwidth}
  \vspace{0pt}
  \begin{tabular}{c|ccc}
    & $x[1]$ & $x[2]$ & $x[3]$\\ \hline
    $a_1$ & $1$ & $1$ & $1$ \\
    $a_2$ & $1$ & $1$ &     \\
  \end{tabular}
  \end{minipage}
  \begin{minipage}[t]{0.24\textwidth}
  \vspace{0pt}
  \begin{tabular}{c|ccc}
    & $x[1]$ & $x[2]$ & $x[3]$\\ \hline
    $a_1$ & $1$ & $1$ & $1$ \\
    $a_2$ & & $1$ & $1$     \\
  \end{tabular}
  \end{minipage}
  \caption{In each of the above, $x=(x[1],x[2],x[3])$ is a 4-coordinate array. The left two tables depict that stack $(a_1,a_2,a_3)$ satisfies $x$, and the right two tables depict that stack $(a_1,a_2)$ satisfies $x$.}
\end{table}

\begin{lemma}
  \label{lem:stack-1}
  If stack $S$ satisfies a coordinate array $x$, then any substack of $S$ satisfies $x$ as well.
\end{lemma}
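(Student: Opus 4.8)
The plan is to simply restrict (truncate) the witnessing chain of coordinate sets that comes with the hypothesis. Write $S=(a_1,\dots,a_s)$ with $s=|S|\le k-1$, and let $S_{\le \ell'}=(a_1,\dots,a_{\ell'})$ be an arbitrary substack, so $\ell'\le s$. Since $S$ satisfies $x$, Definition~\ref{def:coord-2} gives a chain $[k-1]=I_1\supset\cdots\supset I_s$ with $|I_h|=k-h$ and $a_h[x[i]]=1$ for all $i\in I_h$, for each $h=1,\dots,s$.

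The key step is to observe that the prefix chain $[k-1]=I_1\supset\cdots\supset I_{\ell'}$ is itself a valid witness that $S_{\le\ell'}$ satisfies $x$. Indeed, all the required properties are inherited from the original chain: it still starts at $[k-1]$, it is still a strictly decreasing chain of sets, we still have $|I_h|=k-h$ for every $h=1,\dots,\ell'$, and the coordinate condition $a_h[x[i]]=1$ for all $i\in I_h$ holds for $h=1,\dots,\ell'$ because it held for $h=1,\dots,s\ge\ell'$. The only side condition to note is that $S_{\le\ell'}$ has length $\ell'\le s\le k-1$, so Definition~\ref{def:coord-2} is applicable to it. This completes the argument.

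There is no real obstacle here: the statement is essentially immediate once one notes that a substack is a \emph{prefix} of $S$ (by the definition of $S_{\le\ell'}$), so the nested chain $I_1\supset\cdots\supset I_s$ restricts cleanly to a nested chain of the correct length for the shorter stack. If anything, the only point worth stating explicitly in the write-up is why we are allowed to throw away $I_{\ell'+1},\dots,I_s$ — namely, because the definition of "satisfies" only constrains as many $I_h$'s as there are elements in the stack.
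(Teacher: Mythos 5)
Your proof is correct and matches the paper's approach: the paper simply asserts that the claim "follows from the definition of satisfiability," and your truncation of the witnessing chain $I_1\supset\cdots\supset I_s$ to its prefix $I_1\supset\cdots\supset I_{\ell'}$ is exactly the detail being elided there.
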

\begin{proof}
  This follows from the definition of satisfiability.
\end{proof}

\begin{lemma}
  \label{lem:stack-2}
  Let $S=(a_1,\dots,a_{|S|})$ and $S'=(b_1,\dots,b_{|S'|})$ be stacks, each with at most $k-1$ vectors from $A$, the $k$-OV instance, such that any $k$ vectors from among $a_1,\dots,a_{|S|},b_1,\dots,b_{|S'|}$ are not orthogonal.  Then there exists a coordinate array $x$ such that $S$ and $S'$ both satisfy $x$.
\end{lemma}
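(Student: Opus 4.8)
The plan is to first rewrite Definition~\ref{def:coord-2} in terms of orderings of the $k-1$ coordinate positions, and then build the coordinate array $x$ one position at a time using a cleverly matched pair of orderings for $S$ and $S'$.

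\textbf{Step 1 (reformulation).} I would first record the following restatement: a stack $(a_1,\dots,a_s)$ with $s\le k-1$ satisfies a coordinate array $x$ if and only if there is an ordering $(p_1,\dots,p_{k-1})$ of $[k-1]$ such that for every $j\in[k-1]$, the coordinate $x[p_j]$ is a common $1$-coordinate of $a_1,\dots,a_{\min(j,s)}$ (equivalently, a witness that these $\min(j,s)$ vectors are not orthogonal). To see the equivalence, take the nested family $[k-1]=I_1\supset\cdots\supset I_s$ from the definition, let $p_h$ be the unique element of $I_h\setminus I_{h+1}$ for $h<s$, and list the elements of $I_s$ arbitrarily as $p_s,\dots,p_{k-1}$; then $I_h=\{p_h,\dots,p_{k-1}\}$ for all $h\le s$, and ``$a_h[x[i]]=1$ for all $i\in I_h$'' is precisely ``$a_h[x[p_j]]=1$ for all $j\ge h$'', which is the condition above. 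The converse direction is the same computation read backwards (set $I_h=\{p_h,\dots,p_{k-1}\}$ for $h\le s$).

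\textbf{Step 2 (construction).} Now I would make $S=(a_1,\dots,a_s)$ satisfy $x$ via the identity ordering $p_j=j$ and make $S'=(b_1,\dots,b_t)$ satisfy $x$ via the \emph{reversed} ordering $q_j=k-j$ (both are permutations of $[k-1]$ since $s,t\le k-1$). By Step 1 it then suffices to choose, for each position $\ell\in[k-1]$, a coordinate $x[\ell]$ that is simultaneously a common $1$-coordinate of $a_1,\dots,a_{\min(\ell,s)}$ and of $b_1,\dots,b_{\min(k-\ell,\,t)}$. This is a sub-collection of at most $\min(\ell,s)+\min(k-\ell,t)\le \ell+(k-\ell)=k$ of the vectors $a_1,\dots,a_s,b_1,\dots,b_t$, so by hypothesis it is not orthogonal (interpreting the hypothesis, as is standard for $k$-OV, to allow repeated vectors, so that it gives non-orthogonality of any at most $k$ of the vectors; pad to size exactly $k$ with a repeated vector if fewer than $k$ are present), and I can take $x[\ell]$ to be the output of $\mathrm{ind}$ on that sub-collection. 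This works because for $1\le h\le s$ and $j\ge h$ one has $h\le\min(j,s)$, so $a_h[x[p_j]]=a_h[x[j]]=1$; and for $1\le h'\le t$ and $j\ge h'$, writing $\ell=k-j\in[k-1]$, one has $h'\le\min(j,t)=\min(k-\ell,t)$, so $b_{h'}[x[q_j]]=b_{h'}[x[\ell]]=1$.

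\textbf{Main obstacle.} The one genuinely nontrivial choice is the pairing of orderings: pairing the identity ordering with its reversal is exactly what forces the two ``depths'' $\min(\ell,s)$ and $\min(k-\ell,t)$ required at position $\ell$ to sum to at most $k$, which is the most vectors for which the $k$-OV non-orthogonality hypothesis supplies a common $1$-coordinate. Everything else is bookkeeping; the degenerate cases (an empty stack, or a sub-collection of fewer than $k$ vectors) are absorbed by the ``vacuously satisfies'' convention and the padding remark.
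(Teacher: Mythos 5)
Your proposal is correct and is essentially the paper's own argument: your choice $x[\ell]=\mathrm{ind}(a_1,\dots,a_{\min(\ell,s)},b_1,\dots,b_{\min(k-\ell,t)})$ is (up to reversing the roles of $\ell$ and $k-\ell$) exactly the paper's $x[\ell]=\mathrm{ind}(a_1,\dots,a_{k-\ell},b_1,\dots,b_\ell)$, and your identity/reversed orderings correspond to its witness sets $I_h=\{1,\dots,k-h\}$ and $I_h=\{h,\dots,k-1\}$. The only cosmetic differences are that you handle short stacks directly via the $\min$'s (and explicitly flag the at-most-$k$/padding convention), whereas the paper first reduces to $|S|=|S'|=k-1$ via Lemma~\ref{lem:stack-1}.
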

\begin{proof}
  By Lemma~\ref{lem:stack-1}, it suffices to prove this in the case that $|S|=|S'|=k-1$.
  Then $S=(a_1,\dots,a_{k-1})$ and $S'=(b_1,\dots,b_{k-1})$.
  Let $x[\ell]=ind(a_1,\dots,a_{k-\ell},b_1,\dots,b_\ell)$.
  Then for all $h=1,\dots,k-1$, we have $a_h[x[\ell]]=1$ for $\ell\le k-h$, so $S$ satisfies $x$ with sets $I_h = \{1,\dots,k-h\}$.
  Additionally, for all $h=1,\dots,k-1$, we have $b_h[x[\ell]]=1$ for $\ell=h,\dots,k-1$ so $S'$ satisfies $x$ with sets $I_h = \{h,\dots,k-1\}$.
  Hence, both $S$ and $S'$ satisfy $x.$ 
\end{proof}

\begin{lemma}\label{lem:yes-1}
  Let $a_1,\dots,a_k$ be $k$ orthogonal vectors.
  Suppose that $j$ is an index, $x$ is a coordinate array and $S=(a_1,\dots,a_j)$ and $S'=(a_k,\dots,a_{j+1})$ are two stacks.
  Then stacks $S$ and $S'$ cannot satisfy $x$ simultaneously. 
\end{lemma}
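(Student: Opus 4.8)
The plan is to argue by contradiction. Suppose $S$ and $S'$ both satisfy $x$; I will produce a single coordinate on which all of $a_1,\dots,a_k$ equal $1$, which contradicts that they are orthogonal. First I unpack Definition~\ref{def:coord-2}; note that since satisfaction is only defined for stacks with at most $k-1$ vectors, we implicitly have $1\le j\le k-1$. Since $S=(a_1,\dots,a_j)$ satisfies $x$, there are nested sets $[k-1]=I_1\supseteq I_2\supseteq\cdots\supseteq I_j$ with $|I_h|=k-h$ and $a_h[x[i]]=1$ for all $i\in I_h$. Since $S'=(a_k,a_{k-1},\dots,a_{j+1})$ satisfies $x$, and its $h$-th element from the bottom is $b_h=a_{k+1-h}$, there are nested sets $[k-1]=J_1\supseteq J_2\supseteq\cdots\supseteq J_{k-j}$ with $|J_h|=k-h$ and $a_{k+1-h}[x[i]]=1$ for all $i\in J_h$.

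The one step requiring care is reindexing the constraints from $S'$ back in terms of the original vector labels: for $m=j+1,\dots,k$ put $J'_m:=J_{k+1-m}$, so that $a_m[x[i]]=1$ for all $i\in J'_m$, we have $|J'_m|=m-1$, and the nesting becomes $J'_k\supseteq J'_{k-1}\supseteq\cdots\supseteq J'_{j+1}$. The upshot is that the \emph{smallest} constraint set coming from $S$ is $I_j$, of size $k-j$, while the smallest coming from $S'$ is $J'_{j+1}$, of size $j$, and these sizes are complementary: $(k-j)+j=k$.

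Finally, apply pigeonhole inside $[k-1]$: since $|I_j|+|J'_{j+1}|=k>k-1$, the sets $I_j$ and $J'_{j+1}$ must intersect. Pick $\ell\in I_j\cap J'_{j+1}$ and set $c:=x[\ell]$. For every $h\le j$ we have $\ell\in I_j\subseteq I_h$, hence $a_h[c]=1$; for every $m\ge j+1$ we have $\ell\in J'_{j+1}\subseteq J'_m$, hence $a_m[c]=1$. Therefore $a_i[c]=1$ for all $i\in[k]$, so $a_1\cdots a_k\ne 0$, contradicting orthogonality. The main obstacle is entirely bookkeeping: getting the reversal in $S'$'s indexing correct so that the two bottom constraint sets have complementary sizes $k-j$ and $j$ inside $[k-1]$; once that is in place, the nonempty intersection — and hence the contradiction — is immediate.
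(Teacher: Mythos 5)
Your proof is correct and follows essentially the same route as the paper's: extract the nested sets from each of the two satisfaction witnesses, observe that the two smallest sets have sizes $k-j$ and $j$ inside $[k-1]$, and conclude by pigeonhole that they intersect, yielding a common coordinate where all of $a_1,\dots,a_k$ are $1$. The paper does the same thing, merely labeling the second family as $I_k\supset\cdots\supset I_{j+1}$ from the outset rather than performing your explicit reindexing step.
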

\begin{proof}
  Suppose for contradiction that $S$ and $S'$ both satisfy $x$.
  Let $[k-1]= I_1\supset I_2\supset\cdots \supset I_j$ be the sets showing that stack $S$ satisfies coordinate array $x$, and let $[k-1]= I_k\supset \cdots \supset I_{j+1}$ be the sets showing that stack $S'$ satisfies coordinate array $x$.
  Here, $I_j$ has size $k-j$ and $I_{j+1}$ has size $j$. 
  We have $|I_j\cap I_{j+1}| = |I_j| + |I_{j+1}|- |I_j\cup I_{j+1}| =k - |I_j\cup I_{j+1}| > 0$.
  Then $I_1\cap I_2\cap \cdots \cap I_k = I_j \cap I_{j+1} \neq \emptyset$, so there exists some $i\in I_1\cap \cdots\cap I_k$.
  For this $i$, we have $a_1[x[i]]=a_2[x[i]]=\cdots=a_k[x[i]] = 1$, so $a_1,\dots,a_k$ are not orthogonal, a contradiction.
  Thus, stacks $S$ and $S'$ cannot satisfy $x$ simultaneously. 
\end{proof}

\section{Main theorem for $k=4$}
\label{sec:k4}

In this section, we prove Theorem~\ref{thm:main} for $k=4$.
Theorem~\ref{thm:main} for $k=4$ was previously proven by Bonnet~\cite{Bonnet21ic}. 
Here we present a simpler proof that also illustrates some ideas in our general construction.
Furthermore, our construction for $k=4$ can be easily modified to give a hardness construction that proves Theorem~\ref{thm:main} for $k=5$.
We point out how this can be done in the $k=4$ construction below.
Since the modification is simple, and the proof of correctness is similar but more involved, we defer the full proof of the $k=5$ construction to Appendix~\ref{app:k5}, which can be read for more intuition for the main construction.
For two stacks $S=(a_1,\dots,a_s)$ and $T=(b_1,\dots,b_t)$, let $S\circ T$ denote the stack $(a_1,\dots,a_s,b_1,\dots,b_t)$.
\begin{theorem}
   Assuming SETH, for all $\varepsilon>0$, a $(\frac{7}{4}-\varepsilon)$-approximation of Diameter in \emph{unweighted, undirected} graphs on $m$ edges needs $m^{4/3-o(1)}$ time.
   \label{thm:74}
\end{theorem}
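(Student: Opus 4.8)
The plan is to reduce from $4$-OV: given an instance $A \subseteq \{0,1\}^{\mathbb{d}}$ with $|A| = n$, we build an unweighted undirected graph $G$ on $n^{1+o(1)}$ vertices and edges such that, for some threshold $D$, if the $4$-OV instance has no orthogonal $4$-tuple then $\mathrm{diam}(G) \le \frac{4}{7}\cdot 7 = 4$ (i.e., we want the "no" case to force small diameter $\le 4$ and the "yes" case to force diameter $\ge 7$, or some rescaling thereof so that distinguishing them is a $(\frac 7 4 - \varepsilon)$-approximation). Since $4$-OV requires $n^{4-o(1)}$ time under SETH, and our graph has $m = n^{1+o(1)}$ edges, an $m^{4/3-o(1)} = n^{4/3 - o(1)}$-time algorithm would not suffice to solve $4$-OV — wait, we need to be careful: we want $m^{4/3}$ to be the barrier, so we should blow up the instance size. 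Concretely, following the established template, we make the graph have roughly $n^{3}$ vertices arranged so that one "side" has $n^3$-ish many configuration-nodes and the sparsity is maintained by bounding degrees; then $m = n^{3+o(1)}$ and $m^{4/3-o(1)} = n^{4-o(1)}$, matching the $4$-OV lower bound. The exact size bookkeeping is routine once the construction is fixed.

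**The construction.** I would use the stacks-and-coordinate-arrays machinery set up in the preliminaries. The vertex set consists of several "layers" indexed by how full the stacks are. A configuration-node carries a pair of stacks (an "$a$-stack" and a "$b$-stack") whose total length is $k-1 = 3$, together with a coordinate array $x \in [\mathbb{d}]^{3}$ that the stacks must satisfy (Definition~\ref{def:coord-2}). Edges connect a configuration to the configuration obtained by a single $popped$/$+b$ operation on one of the stacks, subject to the satisfies-relation being maintained, roughly as in the Cairo–Grossi–Rizzi and Li/Bonnet constructions. Crucially, we put a central "hub" layer (the all-empty-stack configurations, or a bipartite set keyed only by coordinate arrays) so that two configurations that are "compatible" — meaning the union of all their vectors contains no orthogonal $4$-tuple — can be connected through a short path of length $\le 4$ via Lemma~\ref{lem:stack-2}, which guarantees a common satisfying coordinate array. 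Conversely, two configurations whose vectors \emph{do} contain an orthogonal $4$-tuple $(a_1,a_2,a_3,a_4)$, split as an $a$-stack $(a_1,\dots,a_j)$ and a reversed $b$-stack $(a_k,\dots,a_{j+1})$, cannot share a satisfying coordinate array by Lemma~\ref{lem:yes-1}, and we design the layer structure so that the only short paths between such nodes would have to route through a common coordinate array — hence no short path exists and the distance is $\ge 7$.

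**Proving correctness.** The argument has two halves. (i) \textbf{Soundness (no orthogonal tuple $\Rightarrow$ small diameter):} take any two vertices $u, v$; walk each down to a "base" configuration by $popped$ operations (each step cost $1$, total cost bounded by the max stack length), then since no $4$ of the combined vectors are orthogonal, Lemma~\ref{lem:stack-2} gives a coordinate array satisfied by both base stacks, which lets us connect them through the hub in $O(1)$ steps; summing gives $\mathrm{dist}(u,v) \le 4$ (after rescaling/choosing the number of layers). One must also handle pairs of vertices within the hub or auxiliary gadgets, which is easy. (ii) \textbf{Completeness (orthogonal tuple $\Rightarrow$ large diameter):} if $a_1,\dots,a_4$ are orthogonal, exhibit two specific configuration-nodes — one holding the stack $(a_1,\dots,a_j)$ "fully loaded," the other holding $(a_4,\dots,a_{j+1})$ fully loaded — and argue any $u$–$v$ path of length $< 7$ must pass through a single layer where a common coordinate array is forced, contradicting Lemma~\ref{lem:yes-1}. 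This "path must funnel through a coordinate-array bottleneck" step is the heart of the proof and the main obstacle: in undirected graphs one must rule out the path sneaking back and forth (the triangle-inequality-in-both-directions issue flagged in the introduction), which is exactly why the stack primitive and the nested-sets condition $I_1 \supset \cdots \supset I_s$ are needed — they make the "partial progress" of a short path monotone enough that it cannot both build up $(a_1,\dots,a_j)$ and $(a_4,\dots,a_{j+1})$ without certifying a shared coordinate. I would prove this by a careful case analysis on where a hypothetical short path crosses the layers, tracking which vectors have been "committed" on each side. Finally, I would verify the size/sparsity bounds ($n^{3+o(1)}$ vertices and edges, each vertex of degree $n^{o(1)}$) and conclude via the $4$-OV hypothesis.
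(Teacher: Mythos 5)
Your proposal correctly identifies the framework (reduce from $4$-OV, build a graph on $\tilde O(n^3)$ vertices and edges, use Lemma~\ref{lem:stack-2} to get diameter $\le 4$ in the No case and Lemma~\ref{lem:yes-1} to force distance $\ge 7$ in the Yes case), and you correctly flag that the hard part is ruling out short undirected paths. But that hard part is exactly what you do not supply. Saying you ``would prove this by a careful case analysis on where a hypothetical short path crosses the layers'' names the obstacle without overcoming it. The paper's actual argument here is specific and nontrivial: for a hypothetical path $u_0,\dots,u_6$ it defines progress indices $p_i$ (last vertex still containing $(a_1,\dots,a_i)$ as a substack) and $q_i$ (first vertex from which $(a_4,\dots,a_{5-i})$ persists), proves that a coordinate-change edge must separate $u_{p_i}$ from $u_{q_{4-i}}$ for each $i$ (Claim~\ref{cl:4-yes}, which is where Lemma~\ref{lem:yes-1} is actually invoked), observes that the six edges $u_{p_i}u_{p_i+1}$, $u_{q_i-1}u_{q_i}$ are not coordinate-change edges and contain at least four distinct edges, and then derives a contradiction by counting in each of the two cases (one or two coordinate-change edges). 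Without an argument of this kind, the claim that no length-$6$ path exists is unsubstantiated; the whole difficulty of the undirected setting, as the introduction explains, is that such paths can ``sneak back and forth.''

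Two of the concrete design choices you do commit to would also break the proof. First, you attach a \emph{single} coordinate array $x$ to each two-stack configuration. The paper attaches \emph{two} coordinate arrays $x,y$ to an \emph{unordered} pair $\{S_1,S_2\}$, with the symmetric condition that $S_1\circ S_2$ satisfies one and $S_2\circ S_1$ satisfies the other. This is essential in the No case: to connect $(a,b,c)_{L_1}$ to $(\{(a'),(d')\},x',y')_{L_2}$ one needs a coordinate array satisfied by $(a,b,c)$ and $(a',d')$ \emph{and a different one} satisfied by $(a,b,c)$ and $(d',a')$; Lemma~\ref{lem:stack-2} only produces a common witness for one ordering at a time, so a single-coordinate-array vertex set cannot realize these paths. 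Second, a ``hub'' of all-empty-stack configurations is dangerous: an empty stack vacuously satisfies every coordinate array, so if such a hub were reachable in two steps from every configuration it would cap the diameter at $4$ in the Yes case as well, and in any event it violates the size invariant ($|S_1|+|S_2|$ constant) that makes the pop/push edges well defined. The paper has no hub; No-case paths route through $L_2$ vertices such as $(\{(a),(a')\},x,x)_{L_2}$ that still carry vector information.
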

\begin{proof}
Start with a 4-OV instance $\Phi$ given by a set $A\subset \{0,1\}^{\mathbb{d}}$ with $|A|=n_{OV}$ and $\mathbb{d} = \theta(\log n_{OV})$.
We show how to solve $\Phi$ using an algorithm for Diameter.
First check in time $O(n_{OV}^3)$ whether there are three orthogonal vectors in $A$. If so, we know that $\Phi$ also has 4 orthogonal vectors, as we can add an arbitrary fourth vector to the triple and obtain a $4$-OV solution.

Thus, let us assume that there are no three orthogonal vectors.
We construct a graph with $\tilde O(n_{OV}^3)$ vertices and edges from the 4-OV instance such that (1) if $\Phi$ has no solution, any two vertices are at distance 4, and (2) if $\Phi$ has a solution, then there exists two vertices at distance 7.
Any $(7/4-\varepsilon)$-approximation for Diameter distinguishes between graphs of diameter 4 and 7.
Since solving $\Phi$ needs $n_{OV}^{4-o(1)}$ time under SETH, a $7/4-\varepsilon$ approximation of diameter needs $n^{4/3-o(1)}$ time under SETH.

\begin{figure}
    \centering
    \includegraphics[width=\linewidth]{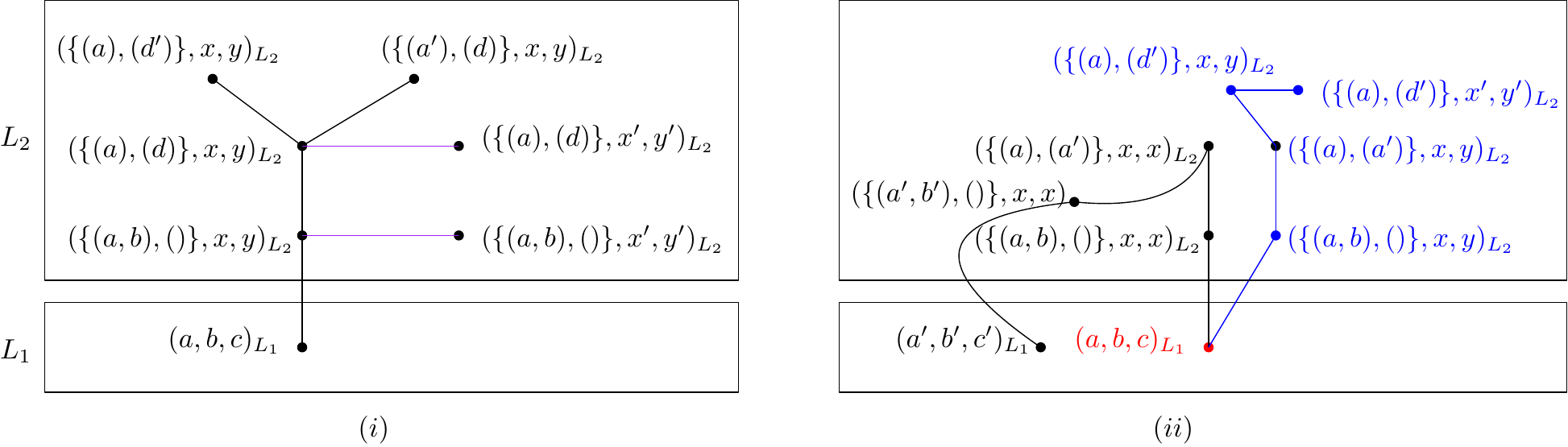}
    \caption{$(i)$ $4$-OV reduction graph. The purple edges are coordinate change edges. 
    $(ii)$ Paths in the first two cases of the NO case. Black path is for the case where both vertices are in $L_1$, blue path is for the case where one vertex is in $L_1$ and the other is in $L_2$ with two stacks of size $1$.}
    \label{fig:4-OV}
\end{figure}

\paragraph{Construction of the graph}
The graph $G$ is illustrated in Figure~\ref{fig:4-OV}(i) and constructed as follows.

The vertex set $L_1\cup L_2$ is defined on
\begin{align}
  L_1 &= \{S: \text{$S$ is a stack with $|S|=3$}\},  \nonumber\\
  L_2 &= \big\{(\{S_1,S_2\},x,y): \text{$S_1,S_2$ are stacks with $|S_1|+|S_2|=2$, }  \nonumber\\
   & \qquad\qquad\qquad\qquad\qquad \text{$x,y\in[\mathbb{d}]^3$ are coordinate arrays such that} \nonumber\\
   & \qquad\qquad\qquad\qquad\qquad \text{$S_1\circ S_2$ satisfies $x$ and $S_2\circ S_1$ satisfies $y$, OR} \nonumber\\
   & \qquad\qquad\qquad\qquad\qquad \text{$S_1\circ S_2$ satisfies $y$ and $S_2\circ S_1$ satisfies $x$}\big\}.   
\end{align}

In vertex subset $L_2$, the notation $\{S_1,S_2\}$ denote an \emph{unordered} set of two stacks.
As shown in Figure~\ref{fig:4-OV}, the vertices in $L_2$ are of two types: $(\{(a),(b)\},x,y)_{L_2}$ and $(\{(a,b),()\},x,y)_{L_2}$ for $a,b\in A$, $x,y\in [d]$.

Throughout, we identify tuples $(a,b,c)$ and $(\{S_1,S_2\},x,y)$  with vertices of $G$, and we denote vertices in $L_1$ and $L_2$ by $(a,b,c)_{L_1}$ and $(\{S_1,S_2\},x,y)_{L_2}$ respectively.
The (undirected unweighted) edges are the following.
\begin{itemize}
\item ($L_1$ to $L_2$) Edge between $(S)_{L_1}$ and $(\{\popped(S),()\},x,y)_{L_2}$ if stack $S$ satisfies both $x$ and $y$.

\item (vector change in $L_2$, type 1) For some vector $a\in A$ and stacks $S_1, S_2$ with $|S_1|\ge 1$, an edge between $(\{S_1,S_2\},x,y)_{L_2}$ and $(\{\popped(S_1),S_2+a\},x,y)_{L_2}$ if both vertices exist. 

In particular, as Figure~\ref{fig:4-OV} shows, $(\{(a,b),()\},x,y)_{L_2}$ has an edge to $(\{(a),(b')\},x,y)_{L_2}$ if both vertices exist. These are the only type 1 edges.

\item (vector change in $L_2$, type 2) For some vector $a\in A$ and stacks $S_1, S_2$ with $|S_1|\ge 1$, an edge between $(\{S_1,S_2\},x,y)_{L_2}$ and $(\{\popped(S_1)+a,S_2\},x,y)_{L_2}$ if both vertices exist.

In particular, as Figure~\ref{fig:4-OV} shows, $(\{(a),(b)\},x,y)_{L_2}$ has edges to $(\{(a'),(b)\},x,y)_{L_2}$ and $(\{(a),(b')\},x,y)_{L_2}$ if the vertices exist. These are the only type 2 edges.

\item (coordinate change in $L_2$) Edge between $(\{S_1,S_2\},x,y)_{L_2}$ and $(\{S_1,S_2\},x',y')_{L_2}$ if both vertices exist.
\end{itemize}

There are at most $n_{OV}^3$ vertices in $L_1$ and at most $n_{OV}^2\mathbb{d}^6$ vertices in $L_2$.
Note that each vertex of $L_1$ has $O(\mathbb{d}^2)$ neighbors, each vertex of $L_2$ has $O(n_{OV}+\mathbb{d}^2)$ neighbors.
The total number of edges and vertices is thus $O(n_{OV}^3\mathbb{d}^2)=\tilde O(n_{OV}^3)$.
We first show below how to change this construction for $k=5$, and then we show that the construction for $k=4$ has diameter 4 when $\Phi$ has no solution and diameter at least 7 when $\Phi$ has a solution.

\paragraph{Modifications for $k=5$.}
The construction of the Diameter instance $G$ when $k=5$ is very similar.
We instead start with a 5-OV (rather than 4-OV) instance $A\subset\{0,1\}^d$, and use the exact same graph, except the nodes in $L_1$ have a stack of size $4$ (rather than 3), and the total sizes of the stacks in $L_2$ is 3 (rather than 2). The descriptions of the edges are exactly the same.
We defer the proof of correctness of this construction for $k=5$ to Appendix~\ref{app:k5}. 
It is similar to the proof for $k=4$, but is more involved.

\paragraph{4-OV no solution}
Assume that the 4-OV instance $A\subset\{0,1\}^{\mathbb{d}}$ has no solution, so that no four (or three or two) vectors are orthogonal.
We show that any pair of vertices have distance at most 4, by casework:
\begin{itemize}
\item \textbf{Both vertices are in $L_1$:} Let the vertices be $(a,b,c)_{L_1}$ and $(a',b',c')_{L_1}$. 
By Lemma~\ref{lem:stack-2} there exists coordinate array $x$ satisfied by both stacks $(a,b,c)$ and $(a',b',c')$.
We claim the following is a valid path (see Figure \ref{fig:4-OV}ii):
\begin{align}
  (a,b,c)_{L_1} - (\{(a,b),()\},x,x)_{L_2}-(\{(a),(a')\},x,x)_{L_2}-(\{(a',b'),()\},x,x)_{L_2}-(a',b',c')_{L_1} 
\end{align}
The first edge and second vertex are valid because $(a,b,c)$ satisfies $x$ (and thus, by Lemma~\ref{lem:stack-1}, stack $(a,b)$ satisfies $x$).
By the same reasoning the last edge and fourth vertex are valid.
The third vertex is valid because each of $a$ and $a'$ have a 1 in all coordinates of $x$, so both $(a,a')$ and $(a',a)$ satisfy $x$.

\item \textbf{One vertex is in $L_1$ and the other vertex is in $L_2$ with two stacks of size 1:} Let the vertices be $(a,b,c)_{L_1}$ and $(\{(a'),(d')\},x',y')_{L_2}$.
By Lemma~\ref{lem:stack-2}, there exists a coordinate array $x$ satisfied by both stacks $(a,b,c)$ and $(a',d')$, and there exists a coordinate array $y$ satisfied by both stacks $(a,b,c)$ and $(d',a')$.
We claim the following is a valid path (see Figure \ref{fig:4-OV}ii):
\begin{align}
  (a,b,c)_{L_1}
  &-(\{(a,b),()\},x,y)_{L_2} \nonumber\\
  &-(\{(a),(a')\},x,y)_{L_2} \nonumber\\
  &-(\{(d'),(a')\},x,y)_{L_2}
  -(\{(a'),(d')\},x',y')_{L_2}.
\end{align}
The first edge and second vertex are valid because $(a,b,c)$ satisfies $x$ and $y$.
Vector $a$ has a one in each coordinate of $x$ and $y$, and stack $(a',d')$ satisfies $x$ and stack $(d',a')$ satisfies $y$, so stack $(a',a)$ satisfies $x$ and stack $(a,a')$ satisfies $y$, so the third vertex $(\{(a),(a')\},x,y)_{L_2}$ is valid, and thus the second edge is also valid.
The fourth vertex is valid because $(a',d')$ satisfies $x$ and $(d',a')$ satisfies $y$ by construction of coordinate arrays $x$ and $y$, and thus the third and fourth edges are valid.
Hence, this is a valid path.

\item \textbf{Both vertices are in $L_2$ with two stacks of size 1:}
Let the vertices be $(\{(a),(d)\},x,y)_{L_2}$ and $(\{(a'),(d')\},x',y')_{L_2}$.
Let $z_1\in[\mathbb{d}]$ be a coordinate where $a,d,a',d'$ are all 1, and let $z=(z_1,z_1,z_1)$ be a coordinate array.
Then the following is a valid path:
\begin{align}
  (\{ (a),(d)\},x,y)_{L_2}
  &-(\{(a),(d)\},z,z)_{L_2} \nonumber\\
  &-(\{(a'),(d)\},z,z)_{L_2} \nonumber\\
  &-(\{(a'),(d')\},z,z)_{L_2}
  -(\{(a'),(d')\},x',y')_{L_2}.
\end{align}
Indeed it's easy to check that any stack of two of $a,d,a',d'$ satisfies $z$, so all the vertices are valid and thus all the edges are valid, so this is a valid path.

\item \textbf{One vertex is in $L_2$ with two stacks of size 2 and 0:} 
For every vertex $u=(\{(a,b),()\},x,y)_{L_2}$ in $L_2$ with two stacks of size 2 and 0, any vertex of the form $v=(a,b,c)_{L_1}$ in $L_1$ has the property that the neighborhood of $u$ is a superset of the neighborhood of $v$ (by considering coordinate change edges from $u$).
Thus, any vertex that $v$ can reach in 4 edges can also be reached by $u$ in 4 edges.
In particular, since any two vertices in $L_1$ are at distance at most 4, any vertex in $L_1$ is distance at most 4 from any vertex in $L_2$ with two stacks of size 2 and 0.
Applying a similar reasoning, any vertex in $L_2$ with two stacks of size 2 and 0 is distance at most 4 from any vertex in $L_2$ with two stacks of size 2 and 0, and any vertex in $L_2$ with two stacks of size 1.
\end{itemize}
We have thus shown that any two vertices are at distance at most 4, proving the diameter is at most 4.

\paragraph{4-OV has solution.}
Now assume that the 4-OV instance has a solution.
That is, assume there exists $a_1,a_2,a_3,a_4\in A$ such that $a_1[i]\cdot a_2[i]\cdot a_3[i]\cdot a_4[i] = 0$ for all $i$.
Since there are no 3 orthogonal vectors, vectors $a_1,a_2,a_3,a_4$ are pairwise distinct.

Suppose for contradiction there exists a path of length at most 6 from $u_0=(a_1,a_2,a_3)_{L_1}$ to $u_6=(a_4,a_3,a_2)_{L_1}$.

Since all vertices in $L_2$ have self-loops with trivial coordinate-change edges, we may assume the path has length exactly 6.
Let the path be 
$u_0=(a_1,a_2,a_3)_{L_1}, u_1,\dots,u_6=(a_4,a_3,a_2)_{L_1}$.
We may assume the path never visits $L_1$ except at the ends: if $u_i=(S)\in L_1$, then $u_{i-1}=(\{\popped(S),()\},x,y)$ and $u_{i+1}=(\{\popped(S),()\},x',y')$ are in $L_2$, and in particular $u_{i-1}$ and $u_{i+1}$ are adjacent by a coordinate change edge, so we can replace the path $u_{i-1}-u_i-u_{i+1}$ with $u_{i-1}-u_{i+1}-u_{i+1}$, where the last edge is a self-loop.

For $i=1,2,3$, let $p_i$ denote the largest index such that vertices $u_0,u_1,\dots,u_{p_i}$ all contain a stack that has $(a_1,\dots,a_i)$ as a substack.
Because we never revisit $L_1$, we have $p_3 = 0$.
For $i=1,2,3$, let $q_i$ be the smallest index such that vertices $u_{q_i},\dots,u_6$ all contain a stack with $(a_4,\dots,a_{5-i})$ as a substack.
Because we never revisit $L_1$, we have $q_3 = 6$.
We show that, 
\begin{claim}
For $i=1,2,3$, between vertices $u_{p_i}$ and $u_{q_{4-i}}$, there must be a coordinate change edge.
\label{cl:4-yes}
\end{claim}
\begin{proof}
Suppose for contradiction there is no coordinate change edge between $u_{p_i}$ and
$u_{q_{4-i}}$. We show a contradiction for each of $i=1,2,3$.

First, consider $i=3$.
Here, $u_{p_i}=u_0 = (a_1,a_2,a_3)_{L_1}$.
By minimality of $q_1$, vertex $u_{q_1}$ is of the form $(\{(e), (a_4)\},x,y)_{L_2}$ 
for some vector $e$.
Then stack $(a_4)$, satisfies one of $x$ and $y$.
Since there is no coordinate change edge between $u_0$ and $u_{q_1}$, we must have 
$u_1 = (\{(a_1,a_2),()\},x,y)$  for the same coordinate arrays $x$ and $y$, so stack $(a_1,a_2,a_3)$ satisfies both $x$ and $y$.
Hence, there is some coordinate array satisfied by both $(a_1,a_2,a_3)$ and $(a_4)$, which is a contradiction of Lemma~\ref{lem:yes-1}.
By a similar argument, we obtain a contradiction if $i=1$.

Now suppose $i=2$.
By maximality of $p_2$, vertex $u_{p_2}$ is of the form $(\{(a_1,a_2),()\},x,y)_{L_2}$.
By minimality of $q_2$, vertex $u_{q_2}$ is of the form $(\{(a_4,a_3),()\},x,y)_{L_2}$.
The coordinate arrays $x$ and $y$ are the same between the two vertices because there is
no coordinate change edge between them by assumption.
Then stacks $(a_1,a_2)$ and $(a_4,a_3)$ satisfy both coordinate arrays $x$ and $y$, which contradicts Lemma~\ref{lem:yes-1}.
\end{proof}

Since coordinate change edges do not change any vectors, by maximality of $p_i$, the edge $u_{p_i}u_{p_{i}+1}$ cannot be a coordinate change edge for all $i=1,2,3$.
Similarly, by minimality of $q_i$, the edge $u_{q_{i}-1}u_{q_i}$ cannot be a coordinate change edge for all $i=1,2,3$.
Consider the set of edges
\begin{align}
  \label{eq:4-yes-1}
  u_{p_3}u_{p_3+1}, u_{p_2}u_{p_2+1}, u_{p_1}u_{p_1+1}, 
  u_{q_1-1}u_{q_1}, u_{q_2-1}u_{q_2}, u_{q_3-1}u_{q_3}.
\end{align}
By the above, none of these edges are coordinate change edges.
These edges are among the 6 edges $u_0u_1,\dots,u_5u_6$.
Additionally, the edges $u_{p_i}u_{p_i+1}$ for $i=1,2,3$ are pairwise distinct, and the edges $u_{q_i-1}u_{q_i}$ for $i=1,2,3$ are pairwise distinct.
Edge $u_{p_3}u_{p_3+1}$ cannot be any of $u_{q_i-1}u_{q_i}$ for $i=1,2,3$, because we
assume our orthogonal vectors $a_1,a_2,a_3,a_4$ are pairwise distinct and $u_{p_3-1}=u_1$ does not
have any stack containing vector $a_4$.
Similarly, $u_{q_3-1}u_{q_3}$ cannot be any of $u_{p_i}u_{p_i+1}$ for $i=1,2,3$.
Thus, the edges in \eqref{eq:5-yes-1} have at least 4 distinct edges, so our path has at most 2 coordinate change edges.
By Claim~\ref{cl:4-yes}, there must be at least one coordinate change edge.
We now do casework on the number of coordinate change edges.

\textbf{Case 1: the path $u_0,\dots,u_6$ has one coordinate change edge.}
By Claim~\ref{cl:4-yes}, since vertex $u_{p_3}=u_0$ is before the coordinate change edge, edge $u_{q_1-1}u_{q_1}$ must be after the coordinate change edge, and similarly edge $u_{p_1}u_{p_1+1}$ must be before the coordinate change edge.
Then all of the edges in \eqref{eq:4-yes-1} are pairwise distinct, so then the path has 6 edges from \eqref{eq:4-yes-1} plus a coordinate change edge, for a total of 7 edges, a contradiction.

\textbf{Case 2: the path has two coordinate change edges.}
Again, by Claim~\ref{cl:4-yes}, for $i=1,2,3$, edges $u_{q_i-1}u_{q_i}$ must be after the first coordinate change edge, and edge $u_{p_i}u_{p_i+1}$ must be before the second coordinate change edge. 
Since we have 6 edges total, we have at most 4 distinct edges from \eqref{eq:4-yes-1}, so
there must be at least two pairs $(i,j)$ such that the edges $u_{p_i}u_{p_i+1}$ and $u_{q_j-1}u_{q_j}$ are equal.
By above this edge must be between the two coordinate change edges, so edges
$u_{p_2}u_{p_2+1}, u_{p_1}u_{p_1+1}, u_{q_2-1}u_{q_2}, u_{q_1-1}u_{q_1}$ are all between
the two coordinate change edges.
However, this means that vertices $u_{p_2}$ and $u_{q_2}$ are between the two coordinate
change edges, contradicting Claim~\ref{cl:4-yes}.

This shows that $(a_1,a_2,a_3)_{L_1}$ and $(a_4,a_3,a_2)_{L_1}$ are at distance at least 7, completing the proof.
\end{proof}

\section{Overview of the general $k$ reduction}
\label{sec:proofsketch}

\subsection{The basic setup}
\label{ssec:proofsketch-0}
To prove Theorem~\ref{thm:main} in general, we start with a $k$-OV instance $A\subset\{0,1\}^{\mathbb{d}}$ with size $|A|=n_{OV}$ and dimension $\mathbb{d}\le O(\log n_{OV})$, and construct a graph $G$ on $\tilde O(n_{OV}^{k-1})$ vertices and edges such that, if the set $A$ has $k$ orthogonal vectors (Yes case), the diameter of $G$ is at least $2k-1$, and otherwise (No case) the diameter of $G$ is at most $k$.
Throughout, we refer to elements of $A$ as \emph{vectors} and elements of $[\mathbb{d}]$ as \emph{coordinates}.
Each vertex of $G$ is identified by a \emph{configuration} $H$, which contains vectors (in $A$) and coordinates (in $[\mathbb{d}]$), along with some meta-data.
Vertices must be \emph{valid} configurations $H$, meaning vectors of $H$ have 1s in specified coordinates of $H$.
Edges between configurations of $G$ change up to one vector and/or some coordinates, and we think of edges as performing \emph{operations} on configurations.
We ensure the graph is undirected by choosing operations that are invertible.
\begin{figure}
\begin{center}
\includegraphics[width=15cm]{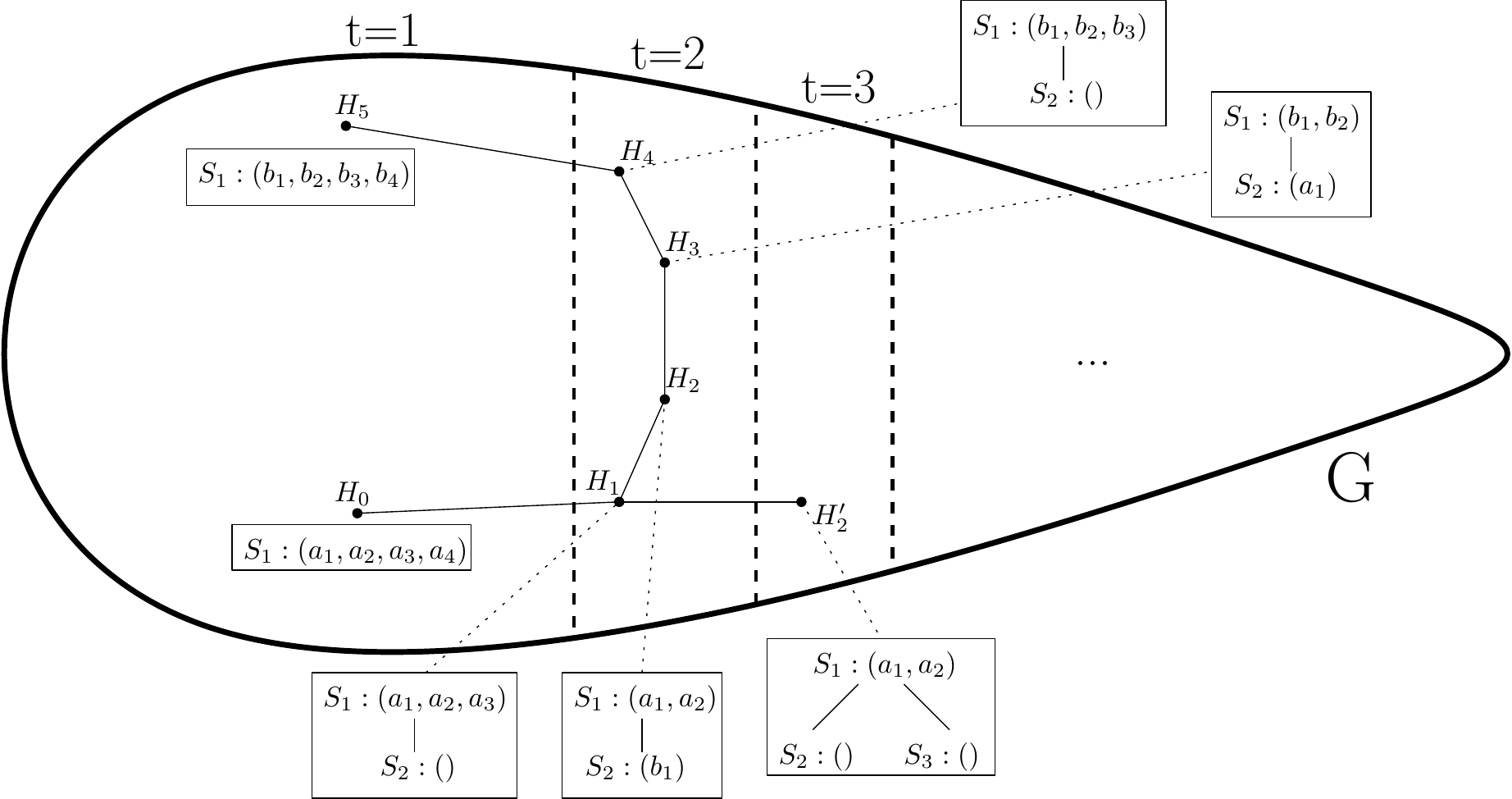}
\end{center}
\caption{Our Diameter instance $G$, illustrated for $k=5$. Vertices are \emph{configurations} and edges are \emph{operations} on configurations. Edges within configurations hold coordinate arrays (suppressed in the figure).}
\label{fig:bigpicture}
\end{figure}

\subsection{The Diameter instance construction}
\label{ssec:config}
We now sketch the definitions of configurations and operations, which define the vertices and edges, respectively, of the Diameter instance $G$.
Figure~\ref{fig:bigpicture} illustrates our graph $G$ and some vertices and edges.
\paragraph{Configurations.}
A \emph{configuration} $H$ is identified by the following:
\begin{enumerate}
\item A positive integer $t$ and a sequence of $t$ lists of vectors $S_1,\dots,S_t$, which we call \emph{stacks}.
Stack $S_1$ is special and is called the \emph{root}, and we require $S_1$ to have at least $(k-2)/2$ vectors.
\item A collection of $O(k^2)$ elements of $[\mathbb{d}]^{k-1}$, which we call \emph{coordinate arrays}, which are each tagged with one or two of the stacks $S_1,\dots,S_t$ (here, we omit the details of this tagging).
\end{enumerate}
The \emph{size} of a configuration is $t+\sum_{i=1}^{t} |S_i|$, the number of stacks plus the number of vectors.
The vertices of our Diameter instance $G$ correspond to the valid (defined below) size-$k$ configurations (see Figure~\ref{fig:bigpicture}).\footnote{Prior lower bounds \cite{BRSVW18, Bonnet21, DW21,Li21, Bonnet21ic} resemble this construction but with only $t\le 2$ stacks. Handling more than two stacks is nontrivial but seems necessary for our undirected, general-$k$ result.}

\paragraph{Valid configurations.}

A configuration is \emph{valid} if every coordinate array is \emph{satisfied} (defined in Definition~\ref{def:coord-2}) by its one or two tagged stacks.
This technical notion of ``stacks satisfying coordinate arrays'', implicit in prior constructions \cite{BRSVW18, Bonnet21,DW21,Li21, Bonnet21ic}, has two key properties.
\begin{enumerate}
\item (Lemma~\ref{lem:stack-2}, used in No case) 
If every $k$ vectors among the vectors of stacks $S$ and $S'$ are not orthogonal, $S$ and $S'$ satisfy a common coordinate array.
\item (Lemma~\ref{lem:yes-1}, used in Yes case) If stacks $(a_1,\dots,a_j)$ and $(a_k,\dots,a_{j+1})$ satisfy a common coordinate array, then $a_1,\dots,a_k$ are not orthogonal.
\end{enumerate}


\paragraph{Operations.}
Operations (Figure~\ref{fig:op}) are composed of half-operations, which are one of the following.
\begin{enumerate}
\item (Vector insertion) Insert a vector at the end of a stack.
\item (Vector deletion) Delete the last vector of a stack.
\item (Node\footnote{We say ``Node insertion'' instead of ``stack insertion'' because in the actual construction, we place the stacks at nodes of a graph.} insertion) Insert an empty non-root stack.
\item (Node deletion) Delete an empty non-root stack.\footnote{In the formal construction, we require that the deleted stack is either $S_{t-1}$ or $S_t$, and require an analogous condition for node insertions. The proof holds without this requirement, but it is a notational convenience in the proof of Lemma~\ref{lem:op-4}.}
\item (Flip) If $t=2$, switch the two stacks $S_1$ and $S_2$, making $S_2$ the new root.
\end{enumerate}
Note, vectors are inserted and deleted ``First In Last Out'', hence the term ``stack'' (see \textbf{Why stacks?}).
\begin{figure}
\begin{center}
\includegraphics[height=3cm]{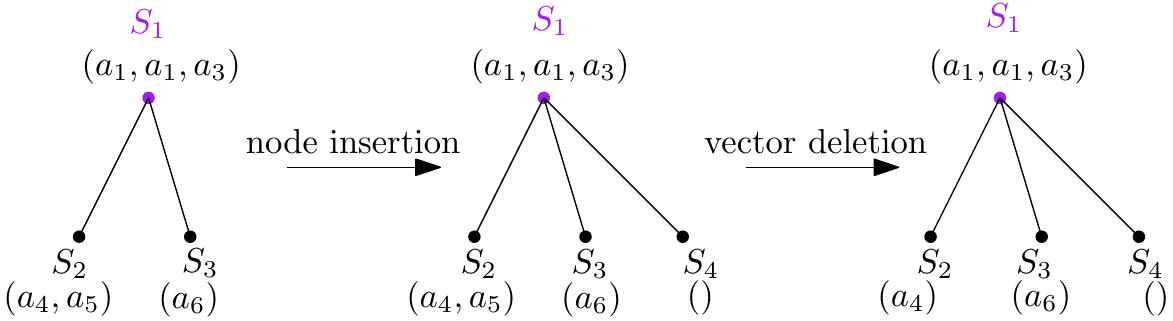}
\end{center}
\caption{(Full) Operation on configuration of size $k=7$. Root stack $S_1$ is in purple. Coordinate arrays (suppressed in figure) are attached to edges.}
\label{fig:op}
\end{figure}

During node insertion and deletions, we also insert and delete, respectively, coordinate arrays from the configuration.
Specifying how to do this is a significant challenge.
At a high level, we associate with each configuration a star graph\footnote{We emphasize there are now two types of graphs: the Diameter instance, and the star graphs of each configuration.} having vertices $S_1,\dots,S_t$ and edges $S_1S_i$ for $i=2,\dots,t$ (hence $S_1$ is called the root, see Figures~\ref{fig:bigpicture} and \ref{fig:op}).
We attach each coordinate array to an edge (the edge's endpoints may be different from the coordinate array's tagged stack(s)), and insert and delete coordinate arrays when their associated edge is inserted or deleted.



A \emph{(full) operation} consists of two half-operations: a vector insertion or node insertion followed by a vector deletion or node deletion.
We also allow operations to include a flip operation after the half-operations.
To ensure at most $\tilde O(n_{OV}^{k-1})$ edges, we do not allow operations between two configurations with one stack ($t=1$).
An operation is \emph{valid} if the starting and ending configuration are valid.\footnote{\label{footnote:plus1}We also require validity of intermediate configurations after one of the two half-operations. In the Yes case, this gives an extra +2, proving the diameter is $2k-1$, rather than $2k-3$.}
The Diameter instance $G$ has the edge $(H,H')$ if applying a valid operation to $H$ gives $H'$.

\paragraph{Basic properties.}
We check a few basic properties of the construction.
\begin{itemize}
\item Operations leave the size $t+\sum_{i=1}^{t} |S_i|$ of a configuration invariant, so the edges are well-defined. (this is why we defined size as $t+\sum_{i=1}^{t} |S_i|$.)
\item Since the Diameter instance deals with size $k$ configurations, each configuration has at most $k-1$ vectors, so there are at most $\tilde O(n_{OV}^{k-1})$ vertices. Similarly, one can check that there are $\tilde O(n_{OV}^{k-1})$ edges, and that the graph can be constructed in $\tilde O(n_{OV}^{k-1})$ time.
\item Operations are invertible, so the graph is undirected. 
For example, a vector insertion/node deletion can be inverted by a node insertion/vector deletion. 
\end{itemize}

\paragraph{Why stacks?}
That is, why are vectors inserted ``First In Last Out'' from stacks?
Crucially, stacks ensure that $k-1$ operations are needed to delete the bottom vector of a configuration with one stack.
As in prior constructions, the Yes case shows that if $a_1,\dots,a_k$ are orthogonal, the one-stack configurations $H$ and $H'$ with stacks $(a_1,\dots,a_{k-1})$ and $(a_k,\dots,a_2)$ are at distance $2k-1$.
If we could delete $a_1$ from $H$ in less than $k-1$ operations, we could arrive in $k-2$ operations at a configuration $H''$ such that any $k$ vectors among $H''$ and $H'$ are not orthogonal.
Then $H''$ and $H'$ are distance $k$ by the No case, so $d(H,H') \le d(H,H'') + d(H'',H') \le 2k-2$ by the triangle inequality, a contradiction.

\subsection{Correctness}
We now sketch why $G$ has diameter at most $k$ in the No case and at least $2k-1$ in the Yes case.
\paragraph{No case.}
\begin{figure}
\begin{center}
\includegraphics[width=16cm]{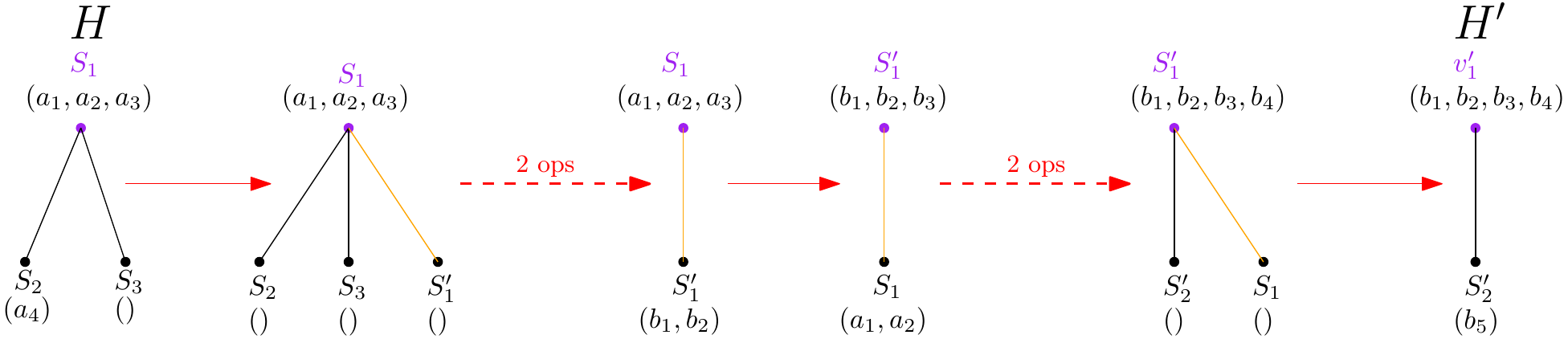}
\end{center}
\caption{No-case path between configurations $H$ and $H'$ for $k=7$. We delete all non-root stacks of $H$ before inserting any non-root stacks of $H'$. Orange edges hold auxiliary coordinate arrays not belonging to $H$ or $H'$.}
\label{fig:proofsketch-no}
\end{figure}
Suppose any $k$ vectors are not orthogonal.
We want to show we can reach any configuration $H'$ from any other configuration $H$ with $k$ valid operations.
If the operations do not need to be valid, this is easy:
insert the nodes and vectors of $H'$ while deleting the vectors and nodes from $H$.
We need $k$ deletions to remove $H$ (because it has size $k$), and $k$ insertions to build $H'$, so we pair the insertions and deletions to get from $H$ to $H'$ in $k$ full operations.

Since these operations may not all be valid, we must carefully choose the order of the insertions and deletions.
The root stack is key in choosing the path.
Let $S_1$ and $S_1'$ be the root stacks of $H$ and $H'$.
Because $S_1$ and $S_1'$ each have at least $(k-2)/2$ vectors (by definition, and crucially), we can choose a path from $H$ to $H'$ that first deletes all the non-root stacks of $H$ while only adding stack $S_1'$ and its vectors (see Figure~\ref{fig:proofsketch-no}).
Then when $S_1'$ has at least $(k-2)/2$ vectors, we apply a flip operation, so that $S_1'$ is the new root, and we build the remainder of $H'$ while deleting stack $S_1$.\footnote{\label{footnote:path}
By viewing a path $H_1,H_2,\dots$ as a sequence of operations on $H_1$, we can naturally identify stacks and coordinates across different configurations in the path, talking about, for example, a stack $S_1$ of $H_1$ being in $H_i$.
For this overview, this informality suffices.
To avoid ambiguity in the formal proof, we give stacks a label that does not change between operations (and contract pairs of configurations that are equivalent up to permuting labels).
}


Roughly, this path works because all coordinate arrays tagged with both a stack in $H$ and a stack in $H'$ are ``auxiliary'', belonging to neither $H$ nor $H'$; they are attached to $S_1S_1'$, the orange edges in Figure~\ref{fig:proofsketch-no}.
This requirement is necessary, as $H$ and $H'$ are generic configurations, so stacks of $H$ may not satisfy any coordinate array of $H'$ and vice-versa. 
Furthermore, Lemma~\ref{lem:stack-2} and non-orthogonality let us choose these auxiliary coordinate arrays to always be satisfied by their tagged stacks, making the path valid.

\paragraph{Yes case.}

Suppose that there are $k$ orthogonal vectors $a_1,\dots,a_k$.
We sketch why our graph $G$ has diameter at least $2k-3$. 
The formal proof shows the diameter is at least $2k-1$ (see footnote \ref{footnote:plus1}).

Let $H_0$ be the configuration with one stack $S_1=(a_1,\dots,a_{k-1})$, and let $H_{2k-4}$ be the configuration with one stack $S_1'=(a_k,\dots,a_2)$.
Suppose for contradiction there is a path $H_0,H_1,\dots,H_{2k-4}$.
At the highest level, we find two stacks $(a_k,\dots,a_{j+1})$ and $(a_1,\dots,a_j)$ from intermediate configurations satisfying a common coordinate array, contradicting Lemma~\ref{lem:yes-1}.



\begin{figure}
\begin{center}
\includegraphics[height=3cm]{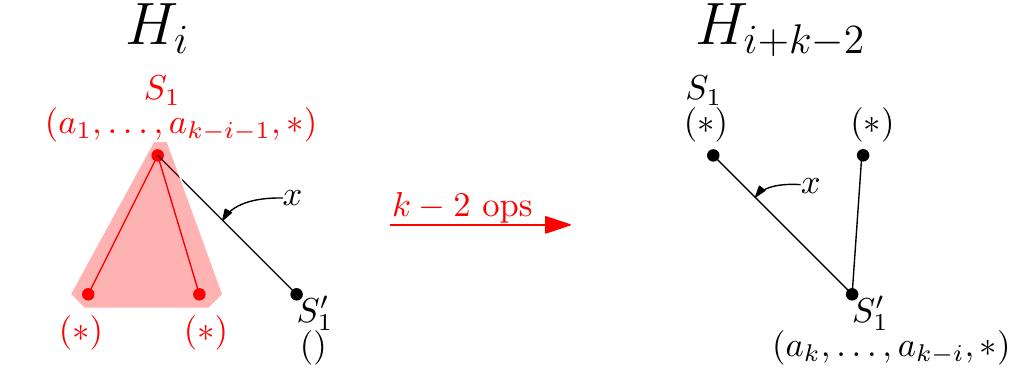}
\end{center}
\caption{The Yes case. We find a coordinate array $x$ satisfied by stack $(a_1,\dots,a_{k-i-1})$ in some configuration and satisfied by stack $(a_k,\dots,a_{k-i})$ in another configuration, contradicting Lemma~\ref{lem:yes-1}. 
Here, coordinate array $x$ is both attached to edge $S_1S_1'$ (so it is inserted and deleted with the edge) and tagged with stacks $S_1$ and $S_1'$ (so stacks $S_1$ and $S_1'$ need to satisfy $x$).
The *s represents some (possibly zero) vectors.}
\label{fig:proofsketch-yes}
\end{figure}

Let $i$ be the smallest index such that configurations $H_i,H_{i+1},\dots,H_{2k-4}$ all contain stack $S_1'$.
It is easy to check that $i\le k-3$ so $H_i$ also contains stack $S_1$.
For this sketch, assume that stacks $S_1$ and $S_1'$ are adjacent in the configuration $H_i$'s star graph.\footnote{There are two other cases: $S_1$ and $S_1'$ are the same stack, and $S_1$ and $S_1'$ are nonadjacent stacks in the star graph. The first case is easy, and the nonadjacent case is similar but more technical, depending on the details of tagging coordinate arrays with stacks.}
Since this star graph is always a tree, and valid operations can only delete leaf nodes, stack $S_1$ can only be deleted by deleting all of $H_i$ minus stack $S_1'$ (The red stacks/vectors in Figure~\ref{fig:proofsketch-yes}), which takes $k-1$ operations (Lemma~\ref{lem:yes-3}).
Thus, configurations $H_i,\dots,H_{i+k-2}$ all have stacks $S_1$ and $S_1'$ and the edge between them.

Our construction guarantees a coordinate array $x$ attached to edge $S_1S_1'$ that is satisfied by $S_1$ and $S_1'$.
Hence, $x$ is satisfied by $S_1$ and $S_1'$ in each of $H_i,\dots,H_{i+k-2}$. 
In $H_i$, stack $S_1$ must have a prefix of $(a_1,\dots,a_{(k-1)-i})$, which thus satisfies $x$. \footnote{If a stack satisfies coordinate array, its prefixes (substacks) also satisfy that coordinate array (Lemma~\ref{lem:stack-1}).}
In $H_{i+k-2}$, stack $S_1'$ must have a prefix of $(a_k,\dots,a_{(2k-4)-(i+k-2)+2})$, which also satisfies $x$.
Hence stacks $(a_1,\dots,a_{k-i-1})$ and $(a_k,\dots,a_{k-i})$ satisfy a common coordinate array, contradicting Lemma~\ref{lem:yes-1}.

\section{The main theorem for general $k$}
\label{sec:all}

We describe below a reduction from $k$-OV to $2k-1$ vs. $k$ Diameter with time $O(n^{k/(k-1)})$ on graphs with edges of weight 1 or 0.
This immediately gives a reduction from $k$-OV to $2k-1$ vs. $k$ Diameter with time $O(n^{k/(k-1)})$ on unweighted graphs, by contracting the edges of weight 0.
For clarity of exposition, we describe the reduction to the 0/1-weighted graph.

Throughout the construction, fix $k'=\floor{k/2}+1$.
Throughout the construction, all coordinate arrays are $k$-coordinate arrays.
Let $\Phi$ be a $k$-OV instance given by a set $A$ of $n$ vectors of length $O(\log{n})$. We create a graph $G$ using this instance.
First we need a few definitions.

\subsection{Configurations}

\paragraph{Edge constraints.}
The vertices of our construction are ``configurations" which we are going to define formally later. 
Each configuration is a small graph, in which each vertex is assigned a stack and each edge puts constraints between those stacks. 
These \emph{edge-constraints} on edges are of the following form.
Recall that a coordinate array is an element of $[\mathbb{d}]^{k-1}$. 
\begin{definition}[Edge-constraint]
In a graph with an edge connecting vertices $v$ and $v'$, a \emph{$(v,v')$-edge-constraint $X$} (or \emph{edge-constraint} when $(v,v')$ is implicit) is a tuple of $2k'+1$ coordinate-arrays: $X_{v,i}$ and $X_{v',i}$ for $i\in[k']$, and $X_{*}$.
\end{definition}
We later define how these $2k'+1$ coordinate arrays of a $(v,v')$-edge-constraint relate with the stacks assigned to $v$ and $v'$, as well as the stacks of other vertices.

\paragraph{Configurations.}
With these edge-constraints defined, we can now define a configuration.
\begin{definition}[Configuration]
  A \emph{configuration} $H$ is an undirected star\footnote{Recall a star graph is a tree with a \emph{center} vertex adjacent to all other vertices.} graph $H$ with nodes $V(H)$ labeled by distinct elements of $[2k']$, such that 
  \begin{enumerate}
  \item The center node, denoted $\rho(H)$, of the star graph $H$ is called the \emph{root} (if the graph has two nodes, either one could be the root),
  \item $H$ is equipped with a total order $\prec_H$ on the vertices of $H$ such that the root is the smallest node of $\prec_H$,
  \item Each node $v$ of $H$ is assigned a stack $S_v(H)$, and
  \item Each edge $(v,v')$ of $H$ is labeled with an $(v,v')$-edge constraint $X^{v,v'}$. As graph $H$ is undirected, we equivalently denote $X^{v,v'}$ by $X^{v',v}$.
  \end{enumerate}
  \label{def:configuration}
\end{definition}
Again, we emphasize that there are now two types of graphs, the configuration graph, and the Diameter instance, whose vertices are identified by configuration graphs.
We say configuration $H$ is a \emph{$t$-stack} configuration if $H$ has $t$ vertices. 
The vertices of our Diameter instance are identified with configurations. 
We use the following definition to specify how many nodes and vectors are in these configuration.
As we specify later, the vertices of our Diameter instance are identified by configurations of size $k$.
\begin{definition}[Size of a configuration]
  The \emph{size} of a configuration $H$ is the integer $\sum_{v\in V(H)}^{} (1+|S_v(H)|)$.
  \label{def:size}
\end{definition}
Note that the size of a configuration is the number of stacks plus the total number of vectors in all the stacks. 

\paragraph{Equivalent configurations.}
The node labels of a configuration $H$ in $[2k']$ are irrelevant except so that we can reason about operations on configurations (defined later) in a well defined way (see footnote~\ref{footnote:path}).
Accordingly, we say two configurations are \emph{equivalent} if, informally, one can be obtained by permuting the node labels of the other.
Formally, we have the following definition.
\begin{definition}[Equivalence of configurations]
\label{def:equiv}
We say two configurations $H$ and $H'$ are \emph{equivalent} if there is some permutation $\pi:[2k']\to [2k']$ such that, 
\begin{itemize}
\item Configuration $H'$ contains node $\pi(v)$ for each node $v$ of $H$, and an edge $(\pi(v),\pi(v'))$ with $(\pi(v),\pi(v'))$-edge constraint $Y^{\pi(v),\pi(v')}$ for each edge $(v,v')$ of $H$ with $(v,v')$-edge-constraint $X^{v,v'}$, such that $Y^{\pi(v),\pi(v')}_{\pi(v),j}=X^{v,v'}_{v,j}$ and $Y^{\pi(v),\pi(v')}_{\pi(v'),j}=X^{v,v'}_{v',j}$ for all $j\in[k']$, and $Y^{\pi(v),\pi(v')}_* = X^{v,v'}_*$.
\item The stacks satisfy $S_{\pi(v)}(\pi(H))=S_v(H)$ for every node $v$ of $H$.
\item The ordering $\prec_{H'}$ on $H'$  has $\pi(v)\prec_{H'} \pi(v')$ if and only if $v\prec_{H} v'$.
\item The root $\rho(\pi(H))$ of $\pi(H)$ satisfies $\rho(\pi(H))=\pi(\rho(H))$.
\end{itemize}
In this case, we write $H'=\pi(H)$.
\end{definition}
It is easy to check the following fact from Definition~\ref{def:equiv}. Taking $\pi'=\pi^{-1}$ below verifies that the equivalence in Definition~\ref{def:equiv} is indeed an equivalence relation.
\begin{lemma}
  For two permutations $\pi$ and $\pi'$, we have $\pi(\pi'(H)) = (\pi\circ\pi')(H)$
\label{lem:perm-0}
\end{lemma}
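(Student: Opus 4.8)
The statement to prove is Lemma~\ref{lem:perm-0}: for two permutations $\pi$ and $\pi'$, we have $\pi(\pi'(H)) = (\pi\circ\pi')(H)$.

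The plan is to unwind Definition~\ref{def:equiv} twice and check that each of the four defining conditions for the action of $\pi\circ\pi'$ on $H$ is met by the configuration $\pi(\pi'(H))$; since a configuration's data (node set, stack assignment, edge-constraints, total order, root) is determined set-theoretically by these conditions, verifying them for every node, edge, and pair of nodes suffices to conclude equality of configurations. Concretely, write $H'' = \pi'(H)$ and $H''' = \pi(H'')$, and separately consider the candidate $\tilde H = (\pi\circ\pi')(H)$; I would show $H''' = \tilde H$ by matching up components.

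First I would handle the node sets: by Definition~\ref{def:equiv} applied to $\pi'$, the nodes of $H''$ are exactly $\{\pi'(v) : v \in V(H)\}$, and applying it again to $\pi$, the nodes of $H'''$ are $\{\pi(\pi'(v)) : v\in V(H)\} = \{(\pi\circ\pi')(v): v\in V(H)\}$, which is precisely $V(\tilde H)$. Then for stacks, $S_{\pi'(v)}(H'') = S_v(H)$ and $S_{\pi(w)}(H''') = S_w(H'')$ for each node $w$ of $H''$; taking $w = \pi'(v)$ gives $S_{(\pi\circ\pi')(v)}(H''') = S_{\pi'(v)}(H'') = S_v(H) = S_{(\pi\circ\pi')(v)}(\tilde H)$. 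The edge-constraint check is the most bookkeeping-heavy: for an edge $(v,v')$ of $H$ with constraint $X^{v,v'}$, the image edge in $H''$ is $(\pi'(v),\pi'(v'))$ with constraint $Y$ satisfying $Y_{\pi'(v),j} = X^{v,v'}_{v,j}$, $Y_{\pi'(v'),j} = X^{v,v'}_{v',j}$, $Y_* = X^{v,v'}_*$; applying the definition once more to the edge $(\pi'(v),\pi'(v'))$ of $H''$ with constraint $Y$ produces the edge $((\pi\circ\pi')(v),(\pi\circ\pi')(v'))$ of $H'''$ with constraint $Z$ satisfying $Z_{(\pi\circ\pi')(v),j} = Y_{\pi'(v),j} = X^{v,v'}_{v,j}$ and similarly for the other index and for $Z_*$ — which is exactly the condition defining the edge-constraint of this edge in $\tilde H$. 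Finally, the total order transports correctly since $\pi(w)\prec_{H'''}\pi(w')$ iff $w\prec_{H''}w'$ iff (for $w=\pi'(v),w'=\pi'(v')$) $v\prec_H v'$, matching $\prec_{\tilde H}$; and the root satisfies $\rho(H''') = \pi(\rho(H'')) = \pi(\pi'(\rho(H))) = (\pi\circ\pi')(\rho(H)) = \rho(\tilde H)$.

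There is no real obstacle here — the lemma is essentially the statement that composing relabelings composes the relabeling permutations, which is a routine ``diagram chase'' through Definition~\ref{def:equiv}. The only thing requiring mild care is keeping the subscript indices on the edge-constraints straight (the $X_{v,i}$ vs.\ $X_{v',i}$ asymmetry), so that when we relabel twice we consistently track which endpoint's slot each coordinate array occupies; once that is done correctly, the identity $Y^{\pi(v),\pi(v')}_{\pi(v),j} = X^{v,v'}_{v,j}$ composes cleanly. I would also remark, as the excerpt does, that instantiating this with $\pi' = \pi^{-1}$ (together with the trivial observation that the identity permutation fixes $H$) shows the relation of Definition~\ref{def:equiv} is symmetric, and combined with reflexivity and transitivity — the latter being an immediate consequence of exactly this lemma — it is an equivalence relation.
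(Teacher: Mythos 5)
Your proposal is correct and matches the paper's (omitted) argument: the paper simply asserts the lemma is "easy to check" from Definition~\ref{def:equiv}, and your component-by-component verification (nodes, stacks, edge-constraints, ordering, root) is exactly that routine check, including the remark about $\pi'=\pi^{-1}$ yielding the equivalence relation.
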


\paragraph{Edge-satisfying and valid configurations.}
For a configuration to be a valid vertex of our diameter instance, the stacks of a configuration need to satisfy particular coordinate arrays in the configuration. 
We now make precise how we want the coordinate arrays to constrain the stacks.
This is the most technical definition in the construction.
\begin{definition}[Edge-satisfying and $\mathcal{X}_v(H)$]
  \label{def:edge}
  A configuration $H$ with $s\ge 1$ vertices $v_1\prec_H\cdots\prec_H v_s$ is \emph{edge-satisfying} if and only if for every $i\in[s]$, the stack $S_{v_i}(H)$ satisfies each coordinate array in the following set $\mathcal{X}_{v_i}(H)$ of coordinate arrays.
  \begin{enumerate}
  \item For every neighbor $v'$ of $v_i$, and every index $j\in[k']$, set $\mathcal{X}_{v_i}(H)$ includes the coordinate array $X^{v_i,v'}_{v_i,j}$ and $X^{v_i,v'}_*$. Note that either $v'$ or $v_i$ is the root.
  \item For every $i' > i$, set $\mathcal{X}_{v_i}(H)$ includes the coordinate array $X^{v_{i'},v_1}_{v_{i'},i}$, where recall that $v_1$ is the root $\rho(H)$. See Figure \ref{fig:edgesatisfying}. 
  \end{enumerate}
\end{definition}

\begin{figure}
    \centering
    \includegraphics{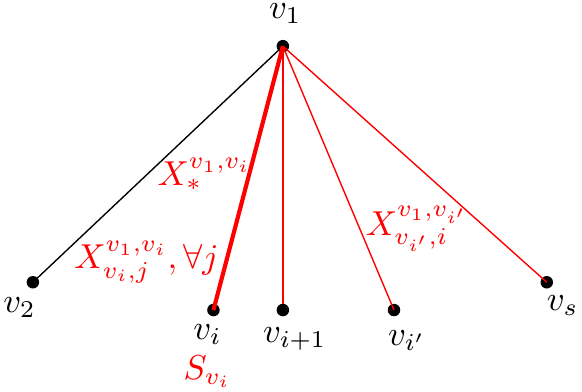}
    \caption{The coordinate arrays $\mathcal{X}_{v_i}(H)$ that stack $S_{v_i}(H)$ satisfies, for $i\ge 2$. The relevant edges are colored red, and the coordinate array that is satisfied by $S_{v_i}(H)$ is written on them. the edge $v_1v_i$ is shown in bold since many coordinate arrays on this edge-constraint are satisfied by $S_{v_i}(H)$.
    }
    \label{fig:edgesatisfying}
\end{figure}

We highlight the subtle detail that the edge constraint $X^{v_1,v_i}$ belonging to the edge $v_1v_i$ where $v_1=\rho(H)$ might hold coordinate arrays constraining the stacks of the nodes other than its endpoints $v_1$ and $v_i$. 
To get more intuition, the coordinate arrays a given stack $S_{v_i}$ needs to satisfy are illustrated in Figure~\ref{fig:edgesatisfying}, and for an edge $v_1v_i$ in configuration $H$ the stacks that must satisfy each coordinate array in $X^{v_1,v_i}$ are illustrated in Table~\ref{tab:edgesatisfying}. 
Table~\ref{tab:edgesatisfying} shows that every coordinate array in the edge-constraint $X$ constrains at most two stacks.

\begin{table}[]
    \centering
    \begin{tabular}{|c|c|c|c|c|c|c|c|c|c|}
    \hline
         & $*$& 1&$\ldots$ &$j$ & $\ldots$ & $i-1$ & $i$ & $\ldots$ & $k'$ \\ \hline
         $*$& $S_{v_1},S_{v_i}$&$-$ &$\ldots$ & $-$ & $\ldots$ & $-$ & $-$ & $\ldots$ & $-$ \\ \hline
         $v_1$ & $-$& $S_{v_1}$&$\ldots$& $S_{v_1}$ & $\ldots$ & $S_{v_1}$ & $S_{v_1}$ &$\ldots$& $S_{v_1}$ \\ \hline
         $v_i$ & $-$&$S_{v_1},S_{v_i}$ &$\ldots$ &$S_{v_j},S_{v_i}$ & $\ldots$ & $S_{v_{i-1}},S_{v_i}$ & $S_{v_i}$ &$\ldots$& $S_{v_i}$ \\ \hline
    \end{tabular}
    \caption{Edge satisfying constraints for $X^{v_1,v_i}$ in a configuration $H$. The entry in row $u$ and column $t$ represent the stacks satisfying $X^{v_1,v_i}_{u,t}$. The entry in row $*$ and column $*$ represent the stacks satisfying $X^{v_1,v_i}_*$. We drop $H$ in $S_{u}(H)$ for space constraints.}
    \label{tab:edgesatisfying}
\end{table}

\begin{definition}[Valid configuration]
\label{def:valid}
  The configuration $H$ is \emph{valid} if it is edge-satisfying and the stack of the root node  satisfies $|S_{\rho(H)}(H)|\ge (k-2)/2$. Here, $k$ is the parameter of our reduction. We use this definition even when the size of configuration $H$ is not $k$. 
\end{definition}
The choice of our global constant $k'$ is motivated by this definition: Since all valid configurations have a stack with at least $(k-2)/2$ vectors, all valid size-$k$ configurations, and hence all configurations at vertices of our Diameter instance, have at most $k-\ceil{(k-2)/2} = k'$ nodes. 

\paragraph{Operations on configurations.}
As mentioned earlier, our final construction consists of configurations. To relate different configurations to each other, we define operations as follows.

\begin{definition}[Operations on configurations]
  We define the following \emph{half-operations} on configurations $H$, that produce a resulting configuration $H'$.
  \begin{enumerate}
  \item \textbf{Vector insertion}. $H'$ has the same nodes, root node, edges, stacks, and ordering as $H$, except that $S_v(H') = S_v(H) + b$ for some vector $b\in A$ and some node $v$.
  \item \textbf{Vector deletion}. $H'$ has the same nodes, root node, edges, stacks, and ordering as $H$, except that $S_v(H') = popped(S_v(H))$ for some node $v$.
  \item \textbf{Node insertion}. $H'$ has the same nodes, root node, edges, stacks as $H$, except that $H'$ also contains a node $v$ labeled in $[2k']\setminus V(H)$, assigned with an empty stack $S_v(H')=\emptyset$, and an edge from node $v$ to the root node $\rho(H')=\rho(H)$ with a $(v,\rho(H'))$-edge constraint $X$, and such that the total order $\prec_{H'}$ is a total order consistent with $\prec_H$ on the nodes of $H$ and the new node $v$ as either the largest or second largest node of $\prec_H$.\footnote{This requirement that the new node $v$ is either the largest or second largest node of $\prec_H$ is not necessary, but makes the rest of the proof, especially Lemma~\ref{lem:op-4}, easier to write. Similarly, for node deletions, the deleted node does not need to be the largest or second-largest node of $\prec_H$.}
  \item \textbf{Node deletion}. $H'$ has the same nodes, root node, edges, stacks as $H$, except that for some non-root (leaf) node $v$ with $S_v(H)=\emptyset$ that is either the second-largest or largest node of $\prec_H$, $H'$ does not contain node $v$ or the edge incident to it, and the order $\prec_{H'}$ is the order $\prec_H$ restricted to the nodes of $H'$
  \item \textbf{Flip}. This half-operation is ``only" defined when $H$ has \emph{exactly two} nodes $v$ and $v'$ with $v=\rho(H)$ as the root and $|S_v(H)|=|S_{v'}(H)|$. Then $H'$ has the same nodes, edges, and stacks as $H$, but $v'=\rho(H')$ is the root of $H'$ and the ordering $\prec_{H'}$ of the nodes of $H'$ is switched accordingly, so that $v'\prec_{H'} v$.
  \end{enumerate}
  Call such a half-operation \emph{valid} if configurations $H$ and $H'$ are both valid.  
    
  A \emph{full operation} is obtained by applying a vector or node insertion, possibly applying a flip (if applicable), and then applying a vector or node deletion.
  We say a full operation from $H$ to $H'$ is \emph{valid} if each of the two (or three, if there is a flip) participating half-operations is valid, and if at least one of $H$ or $H'$ has at least two nodes.
  \label{def:op}
\end{definition}

By combining a ``delete" (node or vector) half operation to an insert (node or vector) half operation, we make sure that the endpoints of a full-operation have the same size. For examples of full operations, see Figure \ref{fig:op} and Figure \ref{fig:conf-ex2}. Full operations have the following useful properties.


\begin{figure}
    \centering
    \includegraphics{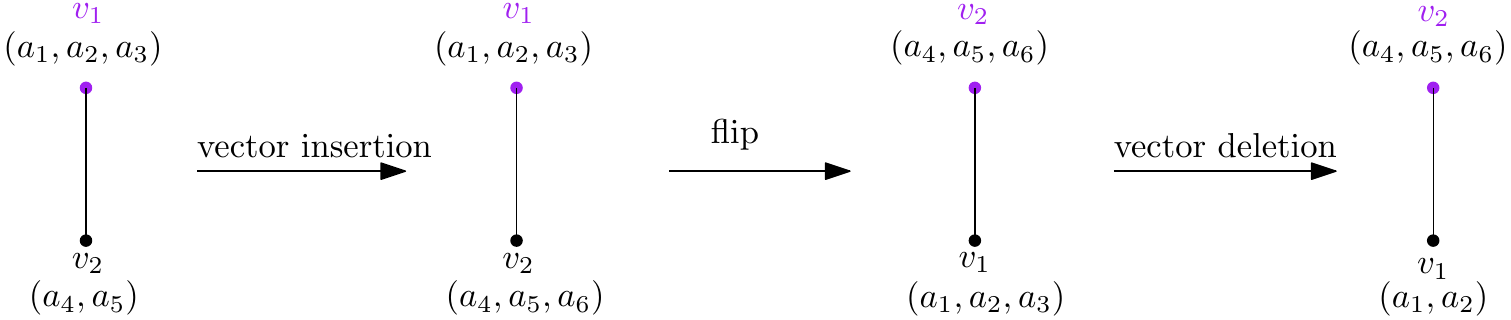}
    \caption{Example of a full operation consisting of a vector insertion (in $ v_2$), a flip and a vector deletion (from $v_1$). We assume that $k=7$ in this example. Note that when the flip operation happens, the two nodes have the same number of vectors in their stacks. The root in all four configurations is colored purple.}
    \label{fig:conf-ex2}
\end{figure}

\begin{lemma}[Properties of half and full operations]
  Let $H$ and $H'$ be configurations.
  \begin{itemize}
  \item If applying a vector insertion to $H$ gives $H'$, it is possible to apply a vector deletion to $H'$ to get $H$.
  \item If applying a vector deletion to $H$ gives $H'$, it is possible to apply a vector insertion to $H'$ to get $H$.
  \item If applying a node insertion to $H$ gives $H'$, it is possible to apply node deletion to $H'$ to get $H$.
  \item If applying a node deletion to $H$ gives $H'$, it is possible to apply node insertion to $H'$ to get $H$.
  \item Applying two flip operations to a 2-stack configuration gives the same configuration.
  \item If applying a valid full operation to $H$ gives $H'$, it is possible to apply a valid full operation to $H'$ to get $H$. 
  \end{itemize}
  \label{lem:op-1}
\end{lemma}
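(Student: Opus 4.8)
The plan is to verify each of the six bullets, mostly by unpacking Definition~\ref{def:op} and exhibiting the inverse half-operation explicitly; the first five are essentially bookkeeping, and the sixth follows by composing them.

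\textbf{The four insertion/deletion inversions.} For the first bullet, suppose a vector insertion takes $H$ to $H'$, so $S_v(H')=S_v(H)+b$ for some node $v$ and vector $b$, and $H'$ agrees with $H$ on everything else. Then $popped(S_v(H'))=popped(S_v(H)+b)=S_v(H)$, so applying a vector deletion at node $v$ to $H'$ recovers $H$. The second bullet is the same statement read backwards: if a vector deletion takes $H$ to $H'$ then $S_v(H')=popped(S_v(H))$, and letting $b$ be the top vector of $S_v(H)$ we get $S_v(H')+b=S_v(H)$, so a vector insertion of $b$ at $v$ takes $H'$ back to $H$. For the third bullet, a node insertion takes $H$ to $H'$ where $H'$ has one extra leaf node $v$ with empty stack, adjacent to the root, with some edge-constraint $X$ and with $v$ placed as the largest or second-largest element of $\prec_{H'}$; since $v$ is then a leaf with empty stack that is the largest or second-largest node, a node deletion of $v$ is available and returns exactly $H$. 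The fourth bullet is again this read backwards: if a node deletion of $v$ takes $H$ to $H'$, then $v$ was a leaf adjacent to the root with empty stack and was largest or second-largest in $\prec_H$, so a node insertion re-adding $v$ with the same edge-constraint $X^{v,\rho(H)}$ and the same position in the order reconstructs $H$. In each case one should note that if the original half-operation is valid (both $H,H'$ valid) then so is the inverse, since it has the same pair of endpoints.

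\textbf{The flip is an involution.} For the fifth bullet, a flip is only defined on a 2-node configuration $H$ with nodes $v=\rho(H)\prec_H v'$ and $|S_v(H)|=|S_{v'}(H)|$, and it produces $H'$ with the same nodes, edge, and stacks but with $v'=\rho(H')$ and $v'\prec_{H'}v$. Since the stack sizes are equal, the flip is again applicable to $H'$, and applying it returns the root to $v$ and the order to $v\prec v'$, i.e. returns $H$; note the edge-constraint is unchanged because a $(v,v')$-edge-constraint and a $(v',v)$-edge-constraint are the same object (Definition~\ref{def:configuration}).

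\textbf{Inverting a full operation.} For the last bullet, write the valid full operation from $H$ to $H'$ as a composition $H \xrightarrow{\alpha} H_1 \xrightarrow{\phi} H_2 \xrightarrow{\beta} H'$, where $\alpha$ is a vector or node insertion, $\phi$ is either a flip or the identity, and $\beta$ is a vector or node deletion, with all of $H,H_1,H_2,H'$ valid and at least one of $H,H'$ having $\ge 2$ nodes. By the first four bullets, $\beta$ can be inverted by an insertion half-operation $\beta^{-1}: H'\to H_2$, and $\alpha$ by a deletion half-operation $\alpha^{-1}: H_1\to H$; by the fifth bullet $\phi$ is inverted by a flip $\phi^{-1}:H_2\to H_1$ when $\phi$ is a flip (and is trivially inverted when $\phi$ is the identity). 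Composing, $H' \xrightarrow{\beta^{-1}} H_2 \xrightarrow{\phi^{-1}} H_1 \xrightarrow{\alpha^{-1}} H$ is an insertion, possibly a flip, then a deletion --- i.e. a full operation --- and all intermediate configurations $H',H_2,H_1,H$ are valid, so each participating half-operation is valid. The size-$2$-node condition for full-operation validity is symmetric in $H$ and $H'$, so it still holds. Hence this is a valid full operation from $H'$ to $H$.

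\textbf{Main obstacle.} None of the steps is mathematically deep; the only thing requiring care is matching the bookkeeping in Definition~\ref{def:op} exactly --- in particular checking that the ``largest or second-largest in $\prec$'' conditions for node insertion and deletion are mutually compatible so that the inverse half-operation is genuinely \emph{available}, and checking that the flip's equal-stack-size precondition is preserved. Once those are observed, the rest is immediate.
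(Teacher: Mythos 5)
Your proposal is correct and follows essentially the same route as the paper's proof: invert each half-operation explicitly, observe that flip is an involution, and compose the inverses (noting all intermediate configurations remain valid and the two-node condition is symmetric) to invert a full operation. The paper is terser—it dismisses the middle bullets as "similar"—whereas you spell out the bookkeeping about the largest/second-largest ordering conditions, which is a reasonable level of extra care but not a different argument.
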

\begin{proof}
  For the first item, if $H'$ is obtained from a vector insertion at node $v$ in $H$, then $H$ is obtained by a vector deletion at node $v$ in $H$.
  The second, third, and fourth items are similar.
  For the fifth item, flip operations do not change the 2-node graph, and two flips preserve the root node and the ordering of the two nodes.

  For the sixth item, note that the first five items imply that every half-operation has an inverse.
  If $H'$ is obtained by applying two half-operations to $H$ that give $H''$ then $H'$, and both half operations are valid, then configurations $H,H'',H'$ are all valid configurations. 
  Then the full operation $H'\to H''\to H$ is a valid full operation.
  Similarly if $H'$ is obtained with a valid full operation including a flip, having intermediate configurations $H\to H''\to H'''\to H'$, then all the intermediate configurations are valid, and $H'\to H'''\to H''\to H$ is a valid full operation.
\end{proof}

\subsection{Defining the Diameter graph $G$}
We are now ready to define our graph $G$.
The vertex set of $G$ is the set of valid size-$k$ configurations. Recall that for all size-$k$ configurations, the number of stacks plus the total number of vectors in all stacks is $k$, and a configuration is valid if it is edge-satisfying (Definition~\ref{def:edge}) and the root stack has at least $(k-2)/2$ vectors in it. 
The edge-set of $G$ includes the following types of edges:
\begin{itemize}
\item edges $(H,H')$ such that configuration $H$ can be obtained from configuration $H'$ by a valid full operation. 
We call these edges \emph{operation edges}.
By the last part of Lemma~\ref{lem:op-1}, $(H,H')$ are connected by an operation edge \emph{if and only if} $H$ can be obtained from $H'$ by a valid full operation, so these edges can indeed by undirected.

\item weight-0 edges $(H,\pi(H))$ for all valid size-$k$ configurations $H$ and all permutations $\pi:[2k']\to [2k']$ (recall $\pi(H)$ is defined in Definition~\ref{def:equiv}).
We call these edges \emph{permutation edges}.

\item (if $k$ is even) weight-0 edges $(H,H')$ if $H'$ can be obtained by applying a flip to $H$. 
We call these edges \emph{flip edges}.
\end{itemize}
For disambiguation, we always refer to vertices of configurations as \emph{nodes}, and vertices of the Diameter instance $G$ as \emph{vertices} or \emph{configurations}.

\paragraph{Runtime analysis.}
We first show that the graph $G$ can be constructed in time $O_k(n_{OV}^{k-1}\mathbb{d}^{O(k^2)})$.
One can construct the vertices of $G$ by enumerating over all possible star graphs labeled by $[2k']$, of which there are at most $O_k(1)$, and then enumerating over all possible orderings $\prec$ of the nodes of star graphs, of which there are at most $O_k(1)$, and then enumerating over all possible stacks for each star graph, of which there are at most $O_k(n_{OV}^{k-1})$ (each configuration is size-$k$, meaning the total number of nodes (stacks) plus the total number of vectors equals $k$, and since there is always at least one node (stack), the total number of vectors is at most $k-1$), and enumerating over all possible edge-constraints, of which there are at most $O_k(\mathbb{d}^{(k'-1)\cdot (2k'+1)}) \le O_k(\mathbb{d}^{2k^2})$.
Hence, there are at most $O_k(n_{OV}^{k-1}\mathbb{d}^{2k^2})$ vertices of $G$.
Furthermore, for $t\ge 2$, there are at most $O_k(n_{OV}^{k-2}\mathbb{d}^{2k^2})$ many $t$-stack configurations of $G$.

For any configuration, there are $O_k(n_{OV})$ vector insertions possible, $O_k(1)$ vector deletions possible, $O_k(\mathbb{d}^{2k'+1})$ node insertions possible, and $O_k(1)$ node deletions possible.
Hence, each configuration of $G$ has at most $O_k(n_{OV}+\mathbb{d}^{2k'+1})$ neighbors.
Every edge of $G$ has at least one endpoint that has $t\ge 2$ stacks (by definition of valid full operation), so the total number of edges of $G$ is at most $O_k(n_{OV}+\mathbb{d}^{2k'+1})\cdot O_k(n_{OV}^{k-2}\mathbb{d}^{2k^2})\le O_k(n_{OV}^{k-1}\mathbb{d}^{4k^2})$.

Hence, $G$ has $\tilde O(n_{OV}^{k-1})$ vertices (configurations) and edges.
Checking whether any half-operation is valid takes time $O_k(\mathbb{d}) = \tilde O_k(1)$. 
Hence enumeration of vertices (configurations) and edges of the Diameter graph $G$ is standard and can be done in time near-linear in the number of vertices and edges, so the construction of $G$ takes time $\tilde O(n_{OV}^{k-1})$.

\subsection{Some useful properties of configurations}

We now move on to proving the correctness of our configurations, showing that the Diameter is at least $2k-1$ when the $k$-OV instance $\Phi$ has a solution (Yes case), and the Diameter is at most $k$ when $\Phi$ has no solution (No case).
We begin with some useful lemmas about configurations.

\paragraph{Lemma for the No case.}
In the No case, we need to construct length $k$ paths between every pair of nodes and verify that those paths are valid paths in the Diameter instance.
The following natural lemma facilitates these verifications.
Call $H'$ a  \emph{subconfiguration} of $H$ if $H'$ can be obtained from $H$ by vector deletions and node deletions.
\begin{lemma}
  If $H'$ is a subconfiguration of $H$ and $H$ is edge-satisfying, then $H'$ is also edge-satisfying.
  \label{lem:op-4}
\end{lemma}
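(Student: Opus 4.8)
\textbf{Proof proposal for Lemma~\ref{lem:op-4}.}

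The plan is to reduce to the case of a single deletion and then induct on the number of deletions. Since a subconfiguration is obtained by a sequence of vector deletions and node deletions, it suffices to prove: if $H$ is edge-satisfying and $H'$ is obtained from $H$ by one vector deletion or one node deletion, then $H'$ is edge-satisfying. The general statement then follows by composing such single steps.

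First I would handle a vector deletion at a node $v$. Here $H'$ has exactly the same star graph, the same node labels, the same ordering $\prec$, the same root, and the same edge-constraints as $H$; the only change is $S_v(H') = popped(S_v(H))$, a substack of $S_v(H)$. Consequently, for every node $v_i$ of $H'$, the set $\mathcal{X}_{v_i}(H')$ equals $\mathcal{X}_{v_i}(H)$ verbatim (the definition of $\mathcal{X}_{v_i}$ in Definition~\ref{def:edge} depends only on the star-graph structure, the ordering, and the edge-constraints, none of which changed). For $v_i \ne v$, the stack $S_{v_i}(H') = S_{v_i}(H)$ still satisfies every coordinate array in $\mathcal{X}_{v_i}(H') = \mathcal{X}_{v_i}(H)$ since $H$ is edge-satisfying. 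For $v_i = v$, the stack $S_v(H')$ is a substack of $S_v(H)$, which satisfies $\mathcal{X}_v(H) = \mathcal{X}_v(H')$, so by Lemma~\ref{lem:stack-1} the substack $S_v(H')$ satisfies all of these coordinate arrays as well. Hence $H'$ is edge-satisfying.

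Next I would handle a node deletion of a leaf node $v$ (with $S_v(H) = \emptyset$). Now $H'$ is the star graph $H$ with node $v$ and its incident edge removed; the ordering is the restriction of $\prec_H$, the root is unchanged, and all remaining stacks and edge-constraints are unchanged. The key point is to check that for each surviving node $v_i$, the set $\mathcal{X}_{v_i}(H')$ is a \emph{subset} of $\mathcal{X}_{v_i}(H)$. Indeed, clause (1) of Definition~\ref{def:edge} for $v_i$ in $H'$ ranges over neighbors of $v_i$ in $H'$, which is a subset of its neighbors in $H$ (we possibly removed $v$), so those coordinate arrays already appear in $\mathcal{X}_{v_i}(H)$; and clause (2) for $v_i$ in $H'$ ranges over indices $i' > i$ in the restricted order, picking coordinate arrays $X^{v_{i'}, v_1}_{v_{i'}, i}$ from surviving edges to the root, all of which were already in $\mathcal{X}_{v_i}(H)$ (note the index $i$ of $v_i$ in $\prec_{H'}$ equals its index in $\prec_H$, since we only delete a largest or second-largest node, or more simply since deleting a leaf and restricting the order preserves relative positions). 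Therefore $\mathcal{X}_{v_i}(H') \subseteq \mathcal{X}_{v_i}(H)$, and since $S_{v_i}(H') = S_{v_i}(H)$ satisfies all of $\mathcal{X}_{v_i}(H)$, it satisfies all of $\mathcal{X}_{v_i}(H')$. Thus $H'$ is edge-satisfying. I expect the main obstacle to be the bookkeeping in the node-deletion case — specifically, making sure the index of each surviving node in the restricted order $\prec_{H'}$ matches its index in $\prec_H$ so that clause (2) lines up correctly — but this is a routine consequence of restricting a total order, aided by the convention that only a largest or second-largest node is deleted. Finally, the composition step is immediate: if $H_0 = H, H_1, \dots, H_r = H'$ is a chain of single deletions, edge-satisfiability propagates from $H_0$ to $H_r$ by the single-step argument, completing the proof. $\hfill\Box$
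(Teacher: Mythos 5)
Your proposal follows essentially the same route as the paper: reduce to a single deletion, observe that a vector deletion leaves every $\mathcal{X}_{v_i}$ unchanged and apply Lemma~\ref{lem:stack-1} to the popped stack, and show that a node deletion only shrinks each $\mathcal{X}_{v_i}$. The argument is correct in substance, with one small inaccuracy worth flagging: your parenthetical claim that ``the index of $v_i$ in $\prec_{H'}$ equals its index in $\prec_H$'' is false when the \emph{second}-largest node $v_{s-1}$ is deleted, since the largest node $v_s$ then drops from position $s$ to position $s-1$; restricting a total order preserves relative positions but not absolute indices, and clause (2) of Definition~\ref{def:edge} depends on the absolute index. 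The conclusion survives because $v_s$ is the largest node in both orders, so its clause-(2) set is empty in both $H$ and $H'$ (and its clause-(1) set is unchanged since $v_s$ and $v_{s-1}$, both non-root leaves, are not adjacent); the paper's proof makes exactly this case split explicit, and you should too rather than asserting index preservation for all surviving nodes.
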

\begin{proof}
  It suffices to prove that if $H'$ is obtained by applying a single vector deletion or node deletion to $H$, and $H$ is valid, then $H'$ is valid.
  The full lemma follows from induction of the number of deletions needed to obtain $H'$ from $H$.
  Let $H$ have vertices $v_1\prec_H\cdots\prec_H v_s$.

  Suppose $H'$ is obtained from $H$ by a vector deletion.
  Then $H$ and $H'$ have the same node set and edge set.
  Let $i\in[s]$.
  In the Definition~\ref{def:edge}, the set of coordinate arrays $\mathcal{X}_{v_i}(H')$ is the same as the set of coordinate arrays $\mathcal{X}_{v_i}(H)$, because $H$ and $H'$ are the same graph with the same edge-constraints.
  Since we assume $H$ is edge-satisfying, we have that $S_{v_i}(H)$ satisfies all the coordinate arrays in $\mathcal{X}_{v_i}(H)=\mathcal{X}_{v_i}(H')$, so $S_{v_i}(H')$ does as well, by Lemma~\ref{lem:stack-1}.
  This holds for all $i\in[s]$, so we have that $H'$ is edge-satisfying.

  Now suppose $H'$ is obtained from $H$ by a node deletion, so that the graph $H'$ is a subgraph of the graph $H$ with a leaf node deleted.
  We claim that, for all $i$ such that node $v_i$ is in $H'$, we have $\mathcal{X}_{v_i}(H')\subseteq\mathcal{X}_{v_i}(H)$.
  First, suppose $v_s$ is deleted from $H$ to give $H'$.
  Then, for each $i=1,\dots,s-1$, by Definition~\ref{def:edge}, the set $\mathcal{X}_{v_i}(H')$ is the same as the set of coordinate arrays $\mathcal{X}_{v_i}(H)$, except with $X^{v_s,v_1}_{v_s,i}$ deleted, and, if $v_i$ is a neighbor of $v_s$ (only true for $i=1$), with coordinate arrays $X^{v_s,v_i}_{v_i, j}$ deleted for $j\in[k']$, so indeed $\mathcal{X}_{v_i}(H')\subset\mathcal{X}_{v_i}(H)$.
  Now suppose $v_{s-1}$ is deleted from $H$ to give $H'$.
  For $1\le i\le s-2$, we have $\mathcal{X}_{v_i}(H')\subset\mathcal{X}_{v_i}(H)$ by the same reasoning as when $v_s$ is deleted.
  Additionally, we can show $\mathcal{X}_{v_s}(H')=\mathcal{X}_{v_s}(H)$: nodes $v_s$ and $v_{s-1}$ are not adjacent in $H$ (node deletions can only delete non-root nodes so $v_{s-1}$ is not the root) so all of the coordinate arrays of $\mathcal{X}_{v_s}(H)$ and $\mathcal{X}_{v_s}(H')$ in part 1 of Definition~\ref{def:edge} are the same, and $\mathcal{X}_{v_s}(H)$ and $\mathcal{X}_{v_s}(H')$ have no coordinate arrays in part 2 of Definition~\ref{def:edge} since $v_s$ is the largest node each of $\prec_{H}$ and $\prec_{H'}$.

  Thus, we have that $\mathcal{X}_{v_i}(H')\subseteq\mathcal{X}_{v_i}(H)$ for all nodes $v_i$ in $H'$.
  For all nodes $v_i$ in $H'$, we have the stacks $S_{v_i}(H')$ and $S_{v_i}(H)$ are the same, since no vector insertions/deletions were applied.
  Thus, since stack $S_{v_i}(H)$ satisfies all the coordinate arrays in $\mathcal{X}_{v_i}(H)$, we have $S_{v_i}(H')$ satisfies all the coordinate arrays in $\mathcal{X}_{v_i}(H')$, as desired.

  We have shown that if $H'$ is obtained by applying a single vector deletion or node deletion to $H$, and $H$ is valid, then $H'$ is valid.
  By the first paragraph of the proof, this completes the proof.
\end{proof}


\paragraph{Lemma for the Yes case.}
The next lemma is useful for the Yes case.
\begin{figure}
    \centering
    \includegraphics{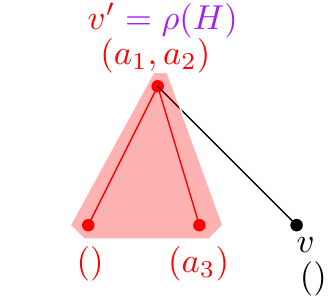}
    \caption{Lemma~\ref{lem:yes-3}. In the example configuration of size $k=7$, to delete the root node $v'=\rho(H)$ (purple) without deleting $v$, one needs to delete all the red vectors and red nodes. This requires 3 node deletions and 3 vectors deletions for a total of $6=k-1$ deletions.}
    \label{fig:yes-lemma}
\end{figure}
\begin{lemma}
  Suppose $H$ is a size-$k$ configuration containing a non-root leaf node $v$ with $S_v(H)=\emptyset$ and edge $(v,v')$ where $v'=\rho(H)$.
  Suppose that one applies $c$ full operations among which node $v'$ is deleted but node $v$ is never deleted.
  Then $c\ge k-1$.
\label{lem:yes-3}
\end{lemma}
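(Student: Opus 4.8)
The plan is to bound $c$ from below by counting deletion half-operations. Every full operation performs exactly one deletion — a vector deletion or a node deletion — so $c$ equals the number of deletions performed during the $c$ operations, and it suffices to exhibit $k-1$ distinct mandatory deletions. The final bound will then fall out of the hypothesis that $H$ has size $k$.

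The first step is structural. A node-deletion half-operation can only remove a non-root leaf with an empty stack, so at the moment $v'$ is deleted it is not the root; but $v'=\rho(H)$ is the root of the starting configuration, and the root is changed by no half-operation other than a flip. Hence some flip occurs before the deletion of $v'$. A flip is only applied to a configuration with exactly two nodes; since $v$ is present in every configuration along the way (it is never deleted) and $v'$ is present in every configuration up to the one from which it is deleted, the configuration at which this flip is applied has node set exactly $\{v,v'\}$. Consequently, every node of $H$ other than $v$ and $v'$ must already have been deleted before this flip, and hence within the $c$ operations; moreover, since a node can be deleted only once its stack is empty, for each such node $w$ at least $|S_w(H)|$ vector deletions (emptying $w$'s stack, which holds $|S_w(H)|$ vectors in $H$) and one node deletion (removing $w$) are performed.

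Now I would collect the mandatory deletions: (i) for each node $w$ of $H$ with $w\notin\{v,v'\}$, at least $|S_w(H)|+1$ deletions as above; (ii) at least $|S_{v'}(H)|$ vector deletions on $v'$, since $v'$'s stack goes from $|S_{v'}(H)|$ vectors in $H$ to empty when $v'$ is deleted and any vector insertion into $v'$ only increases the number of deletions required; and (iii) one node deletion removing $v'$. All of these occur within the $c$ operations, and they are pairwise distinct, since a vector deletion is charged to a single identified stack, a node deletion to a single identified node, and the two kinds are different half-operations. Hence
\[
  c \;\ge\; 1 \;+\; |S_{v'}(H)| \;+\; \sum_{w\in V(H)\setminus\{v,v'\}}\bigl(1+|S_w(H)|\bigr).
\]
Since $H$ has size $k$, Definition~\ref{def:size} gives $1 + \bigl(1+|S_{v'}(H)|\bigr) + \sum_{w\in V(H)\setminus\{v,v'\}}\bigl(1+|S_w(H)|\bigr) = k$, where the leading $1$ is the contribution of the empty-stacked node $v$; so the right-hand side above equals $1+(k-2)=k-1$, and therefore $c\ge k-1$.

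I expect the main obstacle to be the structural claim of the second step: that the flip demoting $v'$ is applied to a configuration whose only nodes are $v$ and $v'$, and therefore that all other nodes of $H$, together with their vectors, are deleted before $v'$ is. Once this is nailed down the counting is routine. Two bookkeeping points would still need care: node labels may be reused after a deletion, but the deletions charged to the original nodes $w$ and to $v'$ are unambiguous because those nodes are still present when the charged deletions happen; and one should check the degenerate cases, such as $V(H)=\{v,v'\}$ (where the sum is empty) and small $k$ (where the hypothesis that $v'$ can ever be deleted may be vacuous).
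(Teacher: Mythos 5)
Your proof is correct, and its counting half is identical to the paper's: both arguments reduce to showing that every node of $H$ other than $v$, together with all of its vectors, must be deleted before $v'$ can be, which forces at least $\sum_{w\neq v}\bigl(1+|S_w(H)|\bigr)=k-(1+0)=k-1$ distinct deletion half-operations, at most one per full operation. Where you genuinely differ is in how you establish that structural claim. The paper argues on the star graph itself: for any third node $v''$, the root $v'$ lies on the path from $v$ to $v''$, and since node deletions only remove leaves, $v'$ cannot disappear while both $v$ and $v''$ survive; hence every such $v''$ is deleted before $v'$. You instead track the root pointer: $v'$ starts as the root, node deletions only remove non-root nodes, the root changes only via a flip, and a flip requires exactly two nodes --- so by the time $v'$ can be demoted and then deleted, the configuration must have node set exactly $\{v,v'\}$, forcing all other nodes (and their vectors) to have been deleted earlier. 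Both are sound. The paper's route is slightly more robust, using only the tree structure and the leaves-only deletion rule; yours leans on the specific flip mechanics (exactly two nodes, root preserved by every other half-operation), but in exchange it makes explicit why the flip rule is what gates the deletion of the original root. Your bookkeeping caveats --- label reuse, charging each deletion to a still-present node, and the degenerate case $V(H)=\{v,v'\}$ --- are all handled correctly.
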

\begin{proof}
  Let $H_0=H,H_1,\dots,H_c$ be the sequence of configurations such that $H_i$ is the result of applying a valid full operation to $H_{i-1}$ for $i=1,\dots,c$.
  Let $v''\notin \{v,v'\}$ be an arbitrary node in $H$.
  We claim that node $v''$ must be deleted before node $v'$.
  Let $i\in\{0,\dots,c\}$ be the largest index such that $v''$ and $v$ are both in configuration $H_i$.
  Node $v'$ is on the path from node $v''$ to node $v$ in configuration graph $H_0$.
  Only leaf nodes can be deleted in a node deletion.
  Thus, as $v$ and $v''$ are both in $H_0,\dots,H_i$, all the nodes on the path from $v$ to $v''$ are also nodes in $H_0,\dots,H_i$.
  In particular, node $v'$ is in $H_0,\dots,H_i$, so node $v''$ must be deleted before $v'$. 
  
  Hence, the only way to delete node $v'$ without deleting node $v$ is to first delete all nodes other than $v'$ (by first deleting the vectors in their stacks and then the node) and then deleting $v'$.
  This results in deleting all nodes other than $v$, which takes at least $\sum_{u\in V(H)}^{} (1+|S_u(H)|) - (1+|S_v(H)|) = k-(1+0)$ deletions.
  Since each full operation applies at most one deletion, the number of full operations needed to delete $v'$ without deleting $v$ is at least $k-1$.
\end{proof}

\paragraph{Permutations commute with valid full operations}

The next few lemmas justify the informal statement that ``permutations commute with valid full operations''. 
This statement is convenient in the Yes case because it allows us to assume that all permutation edges are at the end of a path.
Intuitively, we expect this lemma to be true because changing the node labels of a configuration gives essentially the same configuration.
\begin{lemma}
  \label{lem:perm-1}
  Let $\pi:[2k']\to [2k']$ be a permutation.
  Let $H$ be a configuration, and suppose that applying a vector insertion (vector deletion, node insertion, node deletion, flip) on $H$ gives configuration $H'$.
  Then there exists a vector insertion (vector deletion, node insertion, node deletion, flip) that, applied to $\pi(H)$, gives $\pi(H')$.
\end{lemma}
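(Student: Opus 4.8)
The plan is to prove Lemma~\ref{lem:perm-1} by unwinding Definition~\ref{def:equiv} and Definition~\ref{def:op} in parallel: each half-operation on $H$ is specified by a small amount of local data (which node is affected, which vector is inserted/deleted, which new label is used, which edge-constraint is attached), and I would simply transport that data through the permutation $\pi$. Concretely, I would fix the permutation $\pi$ and the configuration $H$, and for each of the five half-operation types define the ``corresponding'' half-operation on $\pi(H)$ and then verify that its output equals $\pi(H')$ by checking the four bullet points of Definition~\ref{def:equiv} (node set, edges and edge-constraints, stack assignment, ordering, and root).

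First I would handle the easy cases. For a \textbf{vector insertion} at node $v$ adding vector $b$, I would perform the vector insertion at node $\pi(v)$ in $\pi(H)$ adding the same vector $b$. The nodes, edges, edge-constraints, ordering, and root of $\pi(H')$ and of the result are inherited unchanged from $\pi(H)$ except at the stack of $\pi(v)$, where both equal $S_v(H)+b = S_v(H')$; by the stack-compatibility bullet of Definition~\ref{def:equiv} this matches. \textbf{Vector deletion} is identical with $popped(\cdot)$ in place of $+b$. \textbf{Flip}: if $H$ has exactly two nodes $v\prec_H v'$ with $v=\rho(H)$ and $|S_v(H)|=|S_{v'}(H)|$, then $\pi(H)$ also has exactly two nodes $\pi(v)\prec \pi(v')$ with $\pi(v)$ as root and equal stack sizes, so the flip on $\pi(H)$ is defined; its output has $\pi(v')$ as root and $\pi(v')\prec\pi(v)$, which is exactly $\pi(H')$ since $H'$ has $v'$ as root and $v'\prec_{H'}v$, and $\pi$ respects $\prec$ by the ordering bullet.

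The slightly more delicate cases are \textbf{node insertion} and \textbf{node deletion}, because of the label choice and the fresh edge-constraint. For a node insertion adding a node $v$ with label in $[2k']\setminus V(H)$, an empty stack, an edge $(v,\rho(H))$ with edge-constraint $X$, and $v$ placed as the largest or second-largest element of $\prec_H$: I would pick the new label $\pi(v)$, which lies in $[2k']\setminus V(\pi(H))$ since $\pi$ is a bijection and $V(\pi(H))=\pi(V(H))$; attach the edge-constraint defined by $Y_{\pi(v),j}=X_{v,j}$, $Y_{\rho(\pi(H)),j}=X_{\rho(H),j}$, $Y_*=X_*$, which is legitimate precisely because the root of $\pi(H)$ is $\pi(\rho(H))=\rho(\pi(H))$; and place $\pi(v)$ as the largest or second-largest element of $\prec_{\pi(H)}$, which is consistent since $\pi$ is order-preserving on the old nodes. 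Then each bullet of Definition~\ref{def:equiv} relating $H'$ and $\pi(H')$ is verified coordinate-by-coordinate from the definitions. Node deletion is the inverse bookkeeping: the node $v$ deleted from $H$ (a non-root leaf, largest or second-largest in $\prec_H$) has image $\pi(v)$ which is a non-root leaf of $\pi(H)$ (leaves and the root are preserved by the label-permutation, and $\pi$ preserves $\prec$), so the node deletion at $\pi(v)$ is valid and yields $\pi(H')$.

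I expect the main (only) obstacle to be purely notational: carefully matching the indexed edge-constraint coordinates $X^{v,v'}_{v,j}, X^{v,v'}_{v',j}, X^{v,v'}_*$ through $\pi$ in the node-insertion case, and making sure the ``largest or second-largest in $\prec$'' side condition transports correctly. There is no real mathematical content — each verification is a direct appeal to Definition~\ref{def:equiv} and Definition~\ref{def:op} — so the write-up is a short, mechanical case analysis over the five half-operation types, and Lemma~\ref{lem:perm-0} (composition of permutations) is available if I want to reduce to the case of a transposition, though handling general $\pi$ directly is equally clean.
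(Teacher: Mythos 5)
Your proposal is correct and follows essentially the same route as the paper: transport the defining data of each half-operation through $\pi$ (insert the same vector at $\pi(v)$, etc.) and verify the bullets of Definition~\ref{def:equiv}; the paper writes out only the vector-insertion case and declares the rest ``similar,'' whereas you also spell out the node insertion/deletion bookkeeping, which is a faithful elaboration rather than a different argument.
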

\begin{proof}
  A vector $b\in A$ is inserted at node $v$ in $H$ ($v$ is a node label in $[2k']$) to give a configuration $H'$. 
  Suppose that inserting vector $b$ at node $\pi(v)$ in $\pi(H)$ gives a configuration $H''$.
  We claim $H''=\pi(H')$.
  By definition of vector insertion, $H''$ has the same nodes, edges, edge-constraints, root node, and ordering as configuration $\pi(H)$.
  Furthermore, since $H$ has the same nodes, edges, edge-constraints, root node, and ordering as configuration $H'$, we have $\pi(H)$ and $\pi(H')$, and thus $H''$ and $\pi(H')$ have the nodes, edges, edge-constraints, root node, and ordering.
  Furthermore, the stacks $S_{\pi(v)}(H'')$ and $S_{\pi(v)}(\pi(H'))$ are both equal to $S_v(H) + b$, and the stacks $S_{\pi(v')}(H'')$ and $S_{\pi(v')}(\pi(H'))$ are both equal to $S_{v'}(H)$ for nodes $v'\neq v$ in $H$, so we indeed have $H''=\pi(H')$.

  This proves the lemma for vector insertions.
  The proofs for vector deletions, node insertions, node deletions, and flips are similar.
\end{proof}
\begin{lemma}
  Let $\pi:[2k']\to [2k']$ be a permutation.
  If configuration $H$ is valid, then configuration $\pi(H)$ is valid.
\label{lem:perm-2}
\end{lemma}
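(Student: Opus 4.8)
The plan is to deduce Lemma~\ref{lem:perm-2} directly from Definition~\ref{def:edge}, Definition~\ref{def:valid}, and Definition~\ref{def:equiv}, by checking that validity is preserved by the relabeling $\pi$. Recall that a configuration is valid iff it is edge-satisfying and its root stack has at least $(k-2)/2$ vectors, so I need to verify both conditions for $\pi(H)$.

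First I would observe that the correspondence of stacks under $\pi$ is an isometry in the sense that matters here: by Definition~\ref{def:equiv}, $S_{\pi(v)}(\pi(H)) = S_v(H)$ for every node $v$ of $H$. In particular, since $\rho(\pi(H)) = \pi(\rho(H))$, the root stack of $\pi(H)$ is exactly the root stack of $H$, so the size condition $|S_{\rho(\pi(H))}(\pi(H))| = |S_{\rho(H)}(H)| \ge (k-2)/2$ is immediate from the validity of $H$.

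The substantive step is edge-satisfying. Let $H$ have nodes $v_1 \prec_H \cdots \prec_H v_s$; by Definition~\ref{def:equiv} the ordering is transported, so $\pi(H)$ has nodes $\pi(v_1) \prec_{\pi(H)} \cdots \prec_{\pi(H)} \pi(v_s)$. I would show $\mathcal{X}_{\pi(v_i)}(\pi(H)) = \{$ coordinate arrays obtained from $\mathcal{X}_{v_i}(H)$ under the identification $Y^{\pi(v),\pi(v')}_{\pi(v),j} = X^{v,v'}_{v,j}$, $Y^{\pi(v),\pi(v')}_{\pi(v'),j} = X^{v,v'}_{v',j}$, $Y^{\pi(v),\pi(v')}_* = X^{v,v'}_* \}$. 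Concretely, for part~1 of Definition~\ref{def:edge}: $v'$ is a neighbor of $v_i$ in $H$ iff $\pi(v')$ is a neighbor of $\pi(v_i)$ in $\pi(H)$, and then $X^{\pi(v_i),\pi(v')}_{\pi(v_i),j} = X^{v_i,v'}_{v_i,j}$ and $X^{\pi(v_i),\pi(v')}_* = X^{v_i,v'}_*$ by the edge-constraint transport rule. For part~2: $i' > i$ is unchanged under relabeling since $\prec$ is transported, and $X^{\pi(v_{i'}),\pi(v_1)}_{\pi(v_{i'}),i} = X^{v_{i'},v_1}_{v_{i'},i}$, again by the transport rule (note $\pi(v_1) = \pi(\rho(H)) = \rho(\pi(H))$). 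Hence $\mathcal{X}_{\pi(v_i)}(\pi(H))$ consists of exactly the same coordinate arrays (as elements of $[\mathbb{d}]^{k-1}$) as $\mathcal{X}_{v_i}(H)$. Since $S_{\pi(v_i)}(\pi(H)) = S_{v_i}(H)$ and $H$ is edge-satisfying, $S_{v_i}(H)$ satisfies every coordinate array in $\mathcal{X}_{v_i}(H)$, hence $S_{\pi(v_i)}(\pi(H))$ satisfies every coordinate array in $\mathcal{X}_{\pi(v_i)}(\pi(H))$. As this holds for all $i \in [s]$, $\pi(H)$ is edge-satisfying, and therefore valid.

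The main obstacle, such as it is, is purely bookkeeping: tracking how the two indexed families of coordinate arrays on each edge-constraint (the $X_{v,j}$, $X_{v',j}$, and $X_*$ components) are permuted, and confirming that the set $\mathcal{X}_{v_i}(H)$ referenced in Definition~\ref{def:edge} — which pulls coordinate arrays off several incident edges including the special root edge $v_1 v_i$ with its asymmetric role (cf.\ Table~\ref{tab:edgesatisfying}) — is carried exactly onto $\mathcal{X}_{\pi(v_i)}(\pi(H))$. No genuine mathematical difficulty arises because $\pi$ only renames node labels and never touches the stacks themselves or the coordinates in $[\mathbb{d}]$; "satisfies" is a property of a stack and a coordinate array alone (Definition~\ref{def:coord-2}), both of which are preserved verbatim.
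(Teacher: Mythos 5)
Your proposal is correct and follows essentially the same route as the paper's proof: check the root-stack size condition via $S_{\rho(\pi(H))}(\pi(H))=S_{\rho(H)}(H)$, and observe that $\mathcal{X}_{\pi(v)}(\pi(H))=\mathcal{X}_v(H)$ as sets of coordinate arrays so that edge-satisfaction transfers verbatim. You merely spell out in more detail (via parts 1 and 2 of Definition~\ref{def:edge}) the set equality that the paper asserts in one line.
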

\begin{proof}
  The root node $\rho(\pi(H))$ of $\pi(H)$ has the same stack as the root node $\rho(H)$ of $H$, which has at least $(k-2)/2$ vectors. 
  By definition of $\pi(H)$, for each node $v\in V(H)$, the set of coordinate arrays $\mathcal{X}_{\pi(v)}(H)$ is the same as $\mathcal{X}_{v}(H)$.
  Since $H$ is valid, $S_v(H)$ satisfies every coordinate array in $\mathcal{X}_v(H)$, so $S_{\pi(v)}(\pi(H))=S_v(H)$ satisfies every coordinate array in $\mathcal{X}_{\pi(v)}(\pi(H))=\mathcal{X}_v(H)$.
  This holds for all $v$, so $\pi(H)$ is edge-satisfying and thus valid.
\end{proof}
As a corollary of Lemmas~\ref{lem:perm-1} and \ref{lem:perm-2}, we have that permutations commute with valid full operations.
\begin{corollary}
  Let $\pi:[2k']\to [2k']$ be a permutation.
  Let $H$ be a configuration, and suppose that applying some valid full operation on $H$ gives configuration $H'$.
  Then applying some valid full operation on $\pi(H)$ gives $\pi(H')$.
  \label{cor:perm-3}
\end{corollary}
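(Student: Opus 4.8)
The plan is to derive Corollary~\ref{cor:perm-3} directly by composing Lemma~\ref{lem:perm-1} (which handles each half-operation) with Lemma~\ref{lem:perm-2} (which handles validity), unwinding the definition of a valid full operation from Definition~\ref{def:op}. Recall that a valid full operation from $H$ to $H'$ consists of two or three consecutive valid half-operations: an insertion (vector or node), then optionally a flip, then a deletion (vector or node), passing through one or two intermediate configurations, and additionally at least one of $H$, $H'$ has at least two nodes. So I would first name the intermediate configurations: say the full operation is $H \to H'' \to H'$ (no flip) or $H \to H'' \to H''' \to H'$ (with a flip), where each arrow is a valid half-operation.

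First I would apply Lemma~\ref{lem:perm-1} to each half-operation in turn. For the first half-operation $H \to H''$ (an insertion), Lemma~\ref{lem:perm-1} gives a half-operation of the same type taking $\pi(H)$ to $\pi(H'')$. Similarly the flip half-operation $H'' \to H'''$ (if present) lifts to a flip $\pi(H'') \to \pi(H''')$, and the final deletion $H''' \to H'$ (or $H'' \to H'$) lifts to a deletion $\pi(H''') \to \pi(H')$. Chaining these, applying the corresponding sequence of half-operations to $\pi(H)$ produces $\pi(H')$, and this sequence has the right shape (insertion, optional flip, deletion) to be a full operation.

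Next I would check validity. Since the original full operation is valid, $H$, $H'$, and all intermediate configurations ($H''$, and $H'''$ if present) are valid. By Lemma~\ref{lem:perm-2}, their images $\pi(H)$, $\pi(H')$, $\pi(H'')$, $\pi(H''')$ are all valid, so each lifted half-operation is valid. It remains to verify the side condition that at least one of $\pi(H)$, $\pi(H')$ has at least two nodes: this is immediate because $\pi$ is a bijection on node labels, so $\pi(H)$ and $H$ have the same number of nodes, as do $\pi(H')$ and $H'$, and one of $H$, $H'$ has $\ge 2$ nodes by assumption. Hence the lifted sequence is a valid full operation from $\pi(H)$ to $\pi(H')$, proving the corollary.

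I do not expect any serious obstacle here: the corollary is essentially bookkeeping on top of the two preceding lemmas. The only mild care needed is to correctly enumerate the cases (flip present or not) and to confirm that the composed half-operations genuinely fit the template of a full operation rather than, say, two insertions in a row — but this follows because Lemma~\ref{lem:perm-1} preserves the \emph{type} of each half-operation, so an insertion-then-(flip-then-)deletion lifts to an insertion-then-(flip-then-)deletion. If anything, the subtlest point is remembering that the ``at least two nodes'' requirement is preserved under $\pi$, which is trivial since permutations of labels do not change the node count.
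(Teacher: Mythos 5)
Your proposal is correct and follows exactly the route the paper intends: the corollary is stated without an explicit proof precisely because it is the composition of Lemma~\ref{lem:perm-1} (applied to each half-operation, preserving its type) with Lemma~\ref{lem:perm-2} (for validity of all intermediate configurations), plus the trivial observation that $\pi$ preserves node counts. Nothing is missing.
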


\subsection{No case.}
We now prove that when $\Phi$ has no solution, our Diameter instance has diameter at most $k$.
To do so, we find a length $k$ path between any two configurations $H$ and $H'$.
As sketched in the overview, we apply $k$ full operations to get $H'$ from $H$, and each operation inserts a vector or node ``from $H'$'' and deletes a vector or node ``from $H$''.
For an example of such a path when $k=7$, see Figure~\ref{fig:no-case}.

\begin{figure}
    \centering
    \includegraphics{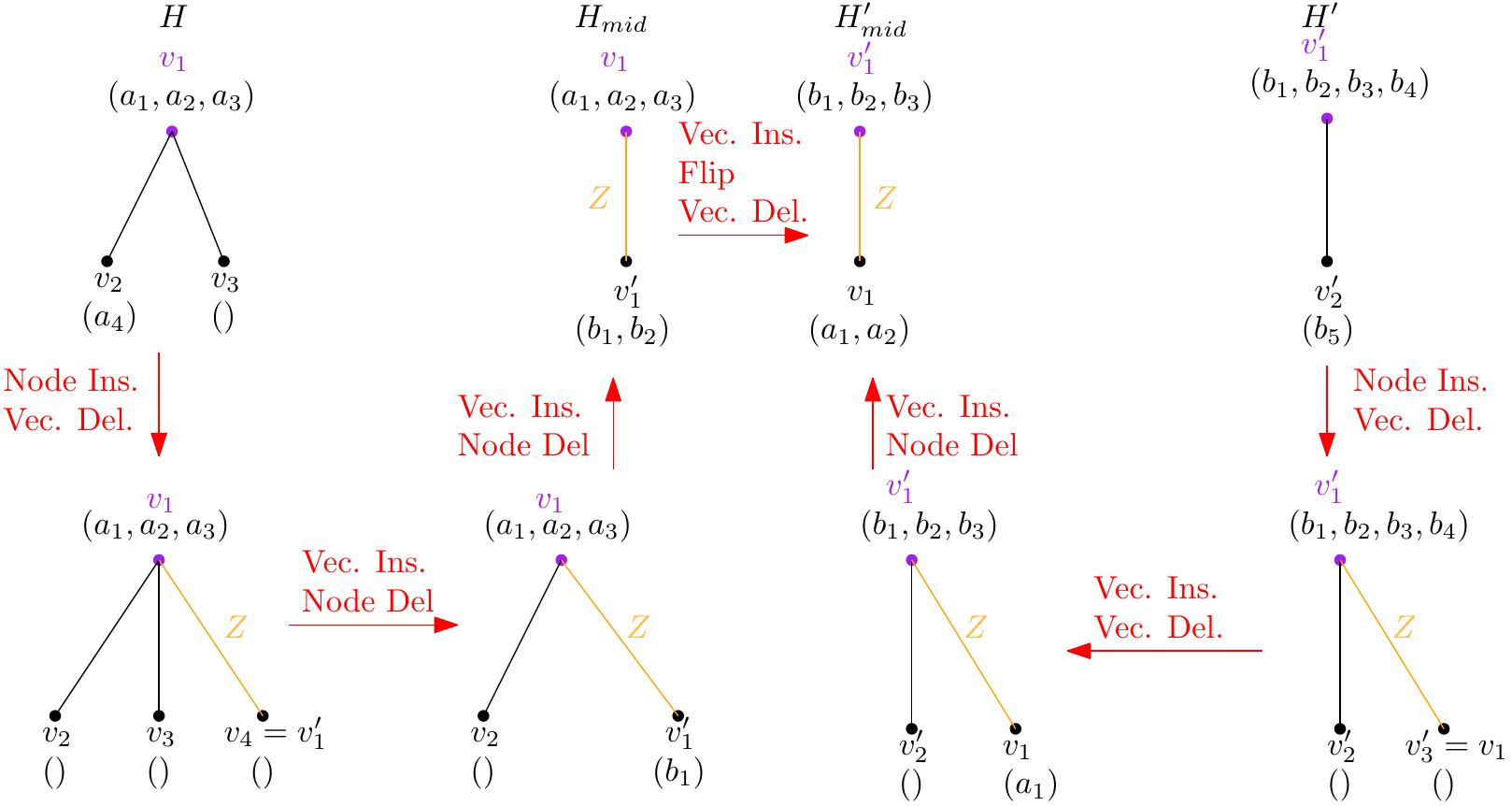}
    \caption{The path of length $7$ between $H$ and $H'$ for $k=7$. Full operations are indicated by red arrows and roots are indicated by purple.
    The ``extra'' edge-constraint $Z$ that belongs to neither $H$ nor $H'$ is labeled in orange.  }
    \label{fig:no-case}
\end{figure}

Let $H$ be an arbitrary size-$k$ configuration with vertices $v_1\prec_H\cdots\prec_H v_s$ for some $s\ge 1$, where $v_1=\rho(H)$ is the root, and with edges $v_1v_i$ with edge-constraint $X^{v_1,v_i}$ for $2\le i\le s$.
Let $H'$ be an arbitrary size-$k$ configuration with vertices $v'_1\prec_{H'}\dots\prec_{H'}v'_{s'}$ for some $s'\ge 1$, where $v'_1=\rho(H')$ is the root, and with edges $v'_1v'_i$ with edge-constraint $Y^{v_1',v_i'}$ for $2\le i\le s$.
By taking a permutation edge (of weight 0) from vertex $H'$ in the Diameter instance $G$ to obtain an equivalent configuration, we may assume without loss of generality that the set of node labels $\{v_1,\dots,v_s\}$ of $H$ are disjoint from the node labels $\{v_1',\dots,v_{s'}'\}$ of $H'$.

We now define an edge-constraint $Z$, containing the only ``extra'' coordinate arrays we need in the path from $H$ to $H'$.
Let $Z$ be a $(v_1,v_1')$-edge constraint such that, 
\begin{itemize}
\item For $i\in[k']$, coordinate array $Z_{v_1,i}$ is satisfied by stack $S_{v_1}(H)$ and, if $i\le s'$, by stack $S_{v'_i}(H')$, 
\item For $i\in[k']$, coordinate array $Z_{v'_1,i}$ is satisfied by stack $S_{v'_1}(H')$ and, if $i\le s$, by stack $S_{v_i}(H)$, and
\item $Z_{*}$ is satisfied by $S_{v_1}(H)$ and $S_{v_1'}(H')$.
\end{itemize}
As configurations $H$ and $H'$ are size-$k$ and have at least 1 stack, any stack of $H$ or $H'$ has at most $k-1$ vectors.
Hence, the coordinate arrays of $Z$ all exist by Lemma~\ref{lem:stack-2}.
Note that the definition of $Z$ is symmetric with respect to $H$ and $H'$, in the sense that if we switch $H$ with $H'$ (and $s$ with $s'$ and $(v_1,\dots,v_s)$ with $(v_1',\dots,v_s')$), the definition of $Z$ stays the same.

We now define two intermediate nodes $H_{mid}$ and $H_{mid}'$, which are on our desired path from $H$ to $H'$.
Let $H_{mid}$ be the configuration with nodes $v_1$ and $v'_1$, with the connecting edge having $(v_1,v_1')$-edge constraint $Z$, where 
\begin{itemize}
    \item $v_1=\rho(H_{mid})$ is the root, 
    \item $S_{v_1}(H_{mid})$ is the bottom $\ceil{(k-2)/2}$ elements of $S_{v_1}(H)$, and 
    \item $S_{v'_1}(H_{mid})$ is the bottom $\floor{(k-2)/2}$ elements of $S_{v'_1}(H')$. 
\end{itemize}

Let $H_{mid}'$ be the configuration with nodes $v_1$ and $v'_1$, with the connecting edge having $(v_1,v_1')$-edge constraint $Z$, where 
\begin{itemize}
    \item $v'_1=\rho(H_{mid}')$ is the root, 
    \item $S_{v'_1}(H_{mid}')$ is the bottom $\ceil{(k-2)/2}$ elements of $S_{v'_1}(H')$, and 
    \item $S_{v_1}(H_{mid}')$ is the bottom $\floor{(k-2)/2}$ elements of $S_{v_1}(H)$. 
\end{itemize}

We have that $H_{mid}$ and $H_{mid}'$ are valid: by the definition of the edge-constraint $Z$, we have that $S_{v_1}(H)$ and thus $S_{v_1}(H_{mid})$ satisfies $Z_{v_1,j}$ for all $j\in[k']$, and also satisfies coordinate array $Z_{v_1',1}$ and $Z_{*}$.
Similarly, $S_{v'_1}(H')$ and thus $S_{v'_1}(H_{mid})$ satisfies $Z_{v_1',j}$ for all $j\in[k']$, and also satisfies coordinate arrays $Z_{*}$.
Thus, $H_{mid}$ is edge-satisfying and thus valid.
By a symmetric argument, $H_{mid}'$ is also valid.
Note that $H_{mid}$ and $H_{mid}'$ are symmetric with respect to $H$ and $H'$, in the sense that if we switched $H$ and $H'$, then $H_{mid}$ becomes $H_{mid}'$ and vise-versa.

\begin{figure}
    \centering
    \includegraphics{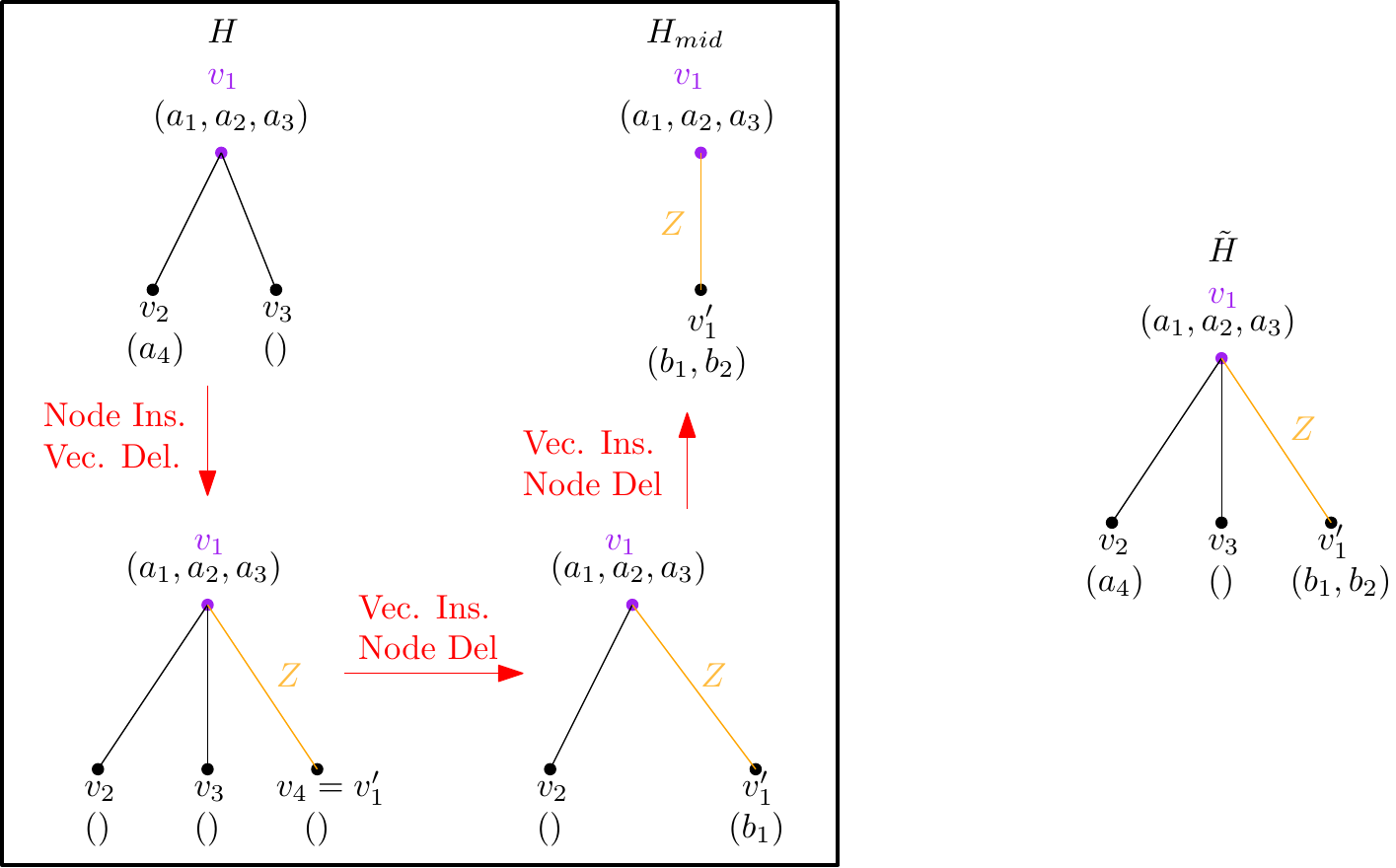}
    \caption{Claim~\ref{lem:no-4}, the configuration $\tilde H$ for Figure~\ref{fig:no-case}: all configurations on the path from $H$ to $H_{mid}$ are subconfigurations of $\tilde H$. By Lemma~\ref{lem:op-4}, showing $\tilde H$ is valid implies that the path from $H$ to $H_{mid}$ is valid.
    }
    \label{fig:no-case-tildeH}
\end{figure}

\begin{claim}
  One can apply $\floor{k/2}$ valid full operations on $H$ to obtain $H_{mid}$, and $\floor{k/2}$ valid full operations on $H'$ to obtain $H_{mid}'$.
\label{lem:no-4}
\end{claim}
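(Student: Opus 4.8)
The plan is to give an explicit sequence of $\floor{k/2}$ full operations transforming $H$ into $H_{mid}$; by symmetry of the construction (the definitions of $Z$, $H_{mid}$, $H_{mid}'$ are symmetric in $H,H'$), the same argument with the roles of $H$ and $H'$ swapped handles $H'\to H_{mid}'$. Recall $H$ has nodes $v_1\prec_H\cdots\prec_H v_s$ with root $v_1$, and we want to reach the $2$-node configuration with nodes $v_1,v_1'$, edge-constraint $Z$, root $v_1$, and stacks equal to the bottom $\ceil{(k-2)/2}$ elements of $S_{v_1}(H)$ and the bottom $\floor{(k-2)/2}$ elements of $S_{v_1'}(H')$ respectively. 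The natural route: in each full operation, pair one ``insertion from $H'$'' (either a vector insertion into node $v_1'$ building up $S_{v_1'}(H')$ from the bottom, or the single node insertion creating node $v_1'$ together with edge-constraint $Z$) with one ``deletion from $H$'' (either a vector deletion from the top of some stack of $H$, or a node deletion of a now-empty non-root leaf). First I would delete the non-root stacks of $H$ entirely (their vectors from the top down, then the node), and delete from $S_{v_1}(H)$ down to its bottom $\ceil{(k-2)/2}$ elements; concurrently I create node $v_1'$ (early, so that its stack can grow) and push onto it the bottom $\floor{(k-2)/2}$ vectors of $S_{v_1'}(H')$ in order.

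The counting is the clean part: $H$ has size $k$, so deleting it down to a single stack of size $\ceil{(k-2)/2}$ requires $k - (1 + \ceil{(k-2)/2})$ deletions; building the second node plus its $\floor{(k-2)/2}$ vectors requires $1 + \floor{(k-2)/2}$ insertions. One checks both of these equal $\floor{k/2}$ (using $\ceil{(k-2)/2}+\floor{(k-2)/2}=k-2$), so the insertions and deletions pair up exactly into $\floor{k/2}$ full operations, with possibly one operation being a pure ``node insert + node delete'' or ``node insert + vector delete'' etc.; since every full operation requires one insert half and one delete half and we have $\floor{k/2}$ of each, the pairing is forced to work out. I also need to respect the structural constraints on half-operations from Definition~\ref{def:op}: node deletions only remove non-root leaf nodes with empty stacks that are largest/second-largest in $\prec$, and node insertions add a node adjacent to the root as largest/second-largest in $\prec$. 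These are satisfied by deleting the non-root nodes of $H$ in decreasing $\prec_H$ order after emptying their stacks, and by inserting $v_1'$ as a leaf adjacent to the root.

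The main obstacle is \textbf{validity of every intermediate configuration} (Definition~\ref{def:op} requires each participating half-operation to be valid, hence each of the $\approx 2\floor{k/2}$ intermediate configurations to be valid). Validity has two parts: the root stack must always have $\ge (k-2)/2$ vectors, and the configuration must be edge-satisfying. For the root-size condition: the root stays $v_1$ throughout, and I order the operations so that the vectors removed from $S_{v_1}(H)$ are deleted last (after the non-root stacks of $H$ are already gone), so $|S_{v_1}|$ never drops below $\ceil{(k-2)/2}\ge (k-2)/2$. For edge-satisfaction: here I would invoke Lemma~\ref{lem:op-4}. Following the approach hinted by Figure~\ref{fig:no-case-tildeH}, I would exhibit a single ``master'' configuration $\tilde H$ — namely the star with center $v_1$, leaves $v_2,\dots,v_s$ and $v_1'$, stacks $S_{v_i}(H)$ on the old nodes and the full $\floor{(k-2)/2}$-vector stack of $H_{mid}$ on $v_1'$, with edge-constraints $X^{v_1,v_i}$ on the old edges and $Z$ on the edge $v_1v_1'$ — and check that every configuration along the path $H\to\cdots\to H_{mid}$ is a subconfiguration of $\tilde H$ (obtained by vector/node deletions). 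Then, provided $\tilde H$ is edge-satisfying, Lemma~\ref{lem:op-4} gives edge-satisfaction of every configuration on the path for free. So the real work reduces to: (i) verifying each step of my explicit operation sequence is a legal half-operation of the right type, (ii) confirming every intermediate configuration embeds in $\tilde H$ as a subconfiguration, and (iii) checking $\tilde H$ itself is edge-satisfying — and (iii) follows since $H$ is edge-satisfying (giving the constraints on the old edges, by Lemma~\ref{lem:stack-1} for the possibly-shortened stacks) and the defining properties of $Z$ together with $S_{v_1'}(\tilde H)$ being a substack of $S_{v_1'}(H')$ (giving the constraints involving the new edge $v_1v_1'$, again via Lemma~\ref{lem:stack-1}). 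I expect (iii), tracking exactly which coordinate arrays of $Z$ node $v_1'$ must satisfy in $\tilde H$ under Definition~\ref{def:edge}, to be the fiddliest point, since $v_1'$ sits as a leaf with a root-edge whose edge-constraint can carry coordinate arrays tagged to other nodes.
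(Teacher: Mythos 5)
Your proposal is correct and follows essentially the same route as the paper's proof: the same pairing of insertions building $v_1'$ with deletions dismantling $H$, the same $\floor{k/2}$ count, and the same master configuration $\tilde H$ combined with Lemma~\ref{lem:op-4} to reduce validity of all intermediate configurations to edge-satisfaction of $\tilde H$. The step you flag as fiddliest (checking $\tilde H$ is edge-satisfying, in particular the coordinate arrays $Z_{v_1',i}$ on edge $v_1v_1'$ tagged to the other nodes $v_i$) is exactly what the paper verifies, and it goes through precisely because the definition of $Z$ requires $Z_{v_1',i}$ to be satisfied by $S_{v_i}(H)$ for $i\le s$.
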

\begin{proof}
  We prove this for $H$ and $H_{mid}$, and the result for $H'$ and $H_{mid}'$ follows from a symmetric argument (the symmetry holds because the definition of $Z$ and the definitions of $H_{mid}$ and $H_{mid}'$ are symmetric with respect to $H$ and $H'$). 
  Let $\tilde H$ be the configuration obtained by adding node $v'_1$ to $H$ with stack $S_{v'_1}(H_{mid}')$ (of size $\floor{(k-2)/2}$), with an edge $(v_1,v_1')$ having edge constraint $Z$, and such that the ordering $\prec_{\tilde H}$ agrees with $\prec_H$ on the nodes of $H$, and $v'_1$ is the largest node of $\prec_{\tilde H}$ (see Figure~\ref{fig:no-case-tildeH}).
  Note that $\tilde H$ has size larger than $k$ (to be precise, it has size $k+\floor{k/2}$).

  We first prove that $\tilde H$ is edge-satisfying.
  First, the set $\mathcal{X}_{v_1'}(\tilde H)$ has coordinate arrays $Z_*$ and $Z_{v_1',j}$ for $j\in[k']$, by part 1 of Definition~\ref{def:edge}, and has no coordinate arrays from part 2 of Definition~\ref{def:edge} as $v'_1$ is the largest node of $\prec_H$.
  By definition of $Z$, stack $S_{v'_1}(H')$ satisfies all these coordinate arrays, and thus by Lemma~\ref{lem:stack-2} stack $S_{v'_1}(H_{mid})$ does as well,   satisfying the requirement of Definition~\ref{def:edge} for node $v_1'$.
  For $i\in [s]$, the set of coordinate arrays in $\mathcal{X}_{v_i}(\tilde H)$ is the same as the set of coordinate arrays $\mathcal{X}_{v_i}(H)$ plus the coordinate array $Z_{v_1',i}$, and, if $i=1$, plus the coordinate arrays $Z_*$ and $Z^{v_1', v_1}_{v_1,j}$ for $j\in[k']$.
  By definition of $Z$, we have that $S_{v_i}(\tilde H) = S_{v_i}(H)$ satisfies coordinate array $Z_{v_1',i}$.
  Furthermore, $S_{v_1}(\tilde H)=S_{v_1}(H)$ satisfies coordinate arrays $Z_*$ and $Z_{v_1,j}$ for $j\in[k']$.
  Since configuration $H$ is edge-satisfying and the above coordinate arrays are satisfied, we conclude that configuration $\tilde H$ is edge-satisfying.

  We now note that $H_{mid}$ can be obtained from $H$ by applying the following half-operations
  \begin{itemize}
  \item Insert node $v_1'$ as the largest node in the ordering
  \item Insert vectors into $v_1'$ $\floor{(k-2)/2}$ times.
  \item For each $i=s,s-1,\dots,2$, delete vectors from $S_{v_i}$ until the stack is empty, and then delete node $v_i$.
  \item Delete vectors from $S_{v_1}$ until the stack has size $\ceil{(k-2)/2}$.
  \end{itemize}
  We can check that there are $\floor{k/2}$ insertions and $\sum_{i=1}^{s} (1+|S_{v_i}|) - (1+\ceil{k-2}/2) = k-\ceil{k/2}=\floor{k/2}$ deletions.
  We can obtain $H$ from $H_{mid}$ by alternating applying these insertions and deletions, giving a configurations $H=H_0,H_{0.5},H_1,\dots,H_{\floor{k/2}-0.5},H_{\floor{k/2}}=H_{mid}$, so that applying the $i$th insertion to $H_{i-1}$ gives the size-$k+1$ configuration $H_{i-0.5}$, and applying the $i$th deletion to $H_{i-0.5}$ gives the size-$k$ configuration $H_i$.
  These half-operations indeed satisfy the definition of half-operations: all the vector insertions/deletions are legal, the single node insertion is legal as $v_1'$ is inserted as the largest node, and all the node deletions are legal as the deleted nodes are always the second-largest node in the ordering.
  Furthermore, if we perform only the insertions, we obtain configuration $\tilde H$.
  Hence, any $i=0,0.5,\dots,\floor{k/2}$, we can obtain configuration $H_i$ from configuration $\tilde H$ by applying vector deletions at node $v_1'$ until stack $S_{v_1'}$ is the right size, and then applying the first $\floor{i}$ node/vector deletions above (at nodes $v_s,v_{s-1},\dots$).
  Thus, for $i=0,0.5,\dots,\floor{k/2}$ configuration $H_i$ is a subconfiguration of configuration $\tilde H$. 
  Since configuration $\tilde H$ is valid, by Lemma~\ref{lem:op-4}, each $H_i$ and $H_{i+0.5}$ is valid, so we have a sequence of $\floor{k/2}$ valid full operations that gives $H_{mid}$ from $H$.
\end{proof}

With Claim~\ref{lem:no-4}, we have nearly proved the No case.
It remains to show that $H_{mid}$ and $H_{mid}'$ are at distance either 0 or 1, depending on the parity of $k$.

If $k$ is even, then $H_{mid}$ can be obtained by applying a flip to $H_{mid}'$, and thus the two configurations are at distance 0 in the Diameter graph $G$.
Thus, there is a length $2\cdot \floor{k/2} = k$ path from $H$ to $H'$ through $H_{mid}$ and $H_{mid}'$ by Claim~\ref{lem:no-4}.

If $k$ is odd, then $H_{mid}$ is distance 1 from $H_{mid}'$:
$H_{mid}'$ is obtained from $H_{mid}$ by applying a vector insertion at node $v_1'$, giving a configuration $H_{mid,+}$, followed by a flip, giving a configuration $H_{mid,+}'$, followed by a vector deletion at node $v_1$, giving configuration $H_{mid}$.
The flip can be done because $H_{mid,+}$ and $H_{mid,+}'$ both have two nodes, each of which has a stack of size $\ceil{(k-2)/2}$.
We now check these half-operations are all valid operations, by checking that configurations $H_{mid,+}$ and $H_{mid,+}'$ are valid configurations.
Since no vectors are deleted at node $v_1'$ from $H_{mid,+}$ to $H_{mid}'$, we have $S_{v_1'}(H_{mid,+}) = S_{v_1'}(H_{mid}')$ is a substack of $S_{v_1'}(H')$, and similarly $S_{v_1}(H_{mid,+})=S_{v_1}(H_{mid})$ is a substack of $S_{v_1}(H)$.
Hence, by construction of $Z$ and Lemma~\ref{lem:stack-1}, stack $S_{v_1'}(H_{mid,+}) = S_{v_1'}(H_{mid}')$ satisfies coordinate array $Z_{v_1',j}$ for all $j\in[k']$ and also satisfies coordinate array $Z_*$, and the stack $S_{v_1}(H_{mid,+}) = S_{v_1}(H_{mid})$ at the root node of $H_{mid,+}$ satisfies $Z_{v_1,j}$ for all $j\in[k']$ and also satisfies coordinate arrays $Z_{v_1',1}$ and $Z_*$.
Hence, configuration $H_{mid,+}$ is edge-satisfying, and thus a valid configuration.
By a symmetric argument, configuration $H_{mid,+}'$ is valid.
Hence, configurations $H_{mid}$ and $H_{mid}'$ are adjacent in the diameter instance with an edge of weight 1, and we have a path from $H$ to $H'$ through $H_{mid}$ and $H_{mid}'$ of length $1+2\cdot \floor{k/2} = k$ by Claim~\ref{lem:no-4}. 

In either case, we have shown that, when $A$ has no $k$ orthogonal vectors, then for any two configurations $H$ and $H'$, there is a length $k$ path from $H$ to $H'$.
This completes the proof of the no case.

\subsection{Yes case.}
We now prove that the Diameter of $G$ is at least $2k-1$ in the Yes case.
Suppose $A$ has an orthogonal $k$-tuple $(a_1,\dots,a_k)$.
Throughout this section fix $v\in[2k']$ to be an arbitrary edge label (say $v=1$).
Let $H$ be the 1-stack configuration with a single node $v$ assigned with a stack $S_v(H)=(a_1,\dots,a_{k-1})$ (and a trivial ordering).
Let $H'$ be the 1-stack configuration with a single node labeled $v$ assigned with a stack $S_{v}(H')=(a_k,\dots,a_2)$. 
We claim configurations $H$ and $H'$ are at distance $2k-1$ in the Diameter graph $G$.

Consider a path $H_0=H,H_1,\dots,H_{r+1}=H'$ from $H$ to $H'$ using edges of $G$, and assume for contradiction this path has length $2k-2$ (if it has length less than $2k-2$, we may assume without loss of generality that in one of the $t$-stack vertices for $t\ge 2$, there are trivial valid full operations (e.g., node insertion followed by node deletion), which give self loop edges of weight 1, increasing the path length to $2k-2$).
This path contains some valid full operation edges, possibly some weight-0 flip edges if $k$ is even, and possibly some weight-0 permutation edges between equivalent configurations.
By Corollary~\ref{cor:perm-3}, we may assume without loss of generality that all weight-0 permutation edges are at the end of the path, and furthermore if there are multiple permutations $\pi_1,\dots,\pi_\ell:[2k']\to[2k']$, we may replace them by a single permutation $\pi=\pi_1\circ\cdots\pi_\ell$ by Lemma~\ref{lem:perm-0}.
Hence, we may assume that our path has $2k-2$ valid full operation edges, followed by a single weight-0 edge applying a permutation $\pi$.

Thus, we may assume that $r=2k-2$, and configuration $H'$ is $\pi(H_{2k-2})$ for some $\pi:[2k']\to[2k']$, so that configuration $H_{2k-2}$ contains a single stack at node $v'\defeq \pi^{-1}(v)$, and so that for $i=1,\dots,2k-2$, configuration $H_i$ can be reached from $H_{i-1}$ by an operation edge, and possibly a flip edge.
For each $i=0,\dots,2k-3$, the valid full operation on $H_i$ has one valid vector/node insertion, possibly followed by a flip operation, followed by one valid vector/node deletion, possibly followed by a flip, so we can let $H_{i+0.5}$ denote the result of only applying the insertion and possibly a flip to $H_i$, so that $H_{i+1}$ is the result of applying a deletion, possibly followed by a flip, to $H_{i+0.5}$.
By definition of a valid half-operation, configuration $H_{i+0.5}$ is valid (and has size $k+1$).

The following two claims reason about the stacks and the edge-constraints that must be on the path.
\begin{claim}
  If an edge $(w,w')$ appears in configuration $H_i$ for some integer $i=1,\dots,2k-3$, it also appears (with the corresponding edge-constraint) in configurations $H_{i-0.5}$ and $H_{i+0.5}$. 
\label{lem:yes-1.6}
\end{claim}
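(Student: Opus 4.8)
The plan is to trace each edge of $H_i$ through the two half-operations that make up the full operations on either side of $H_i$, using the fact (from Definition~\ref{def:op}) that $H_i$ is obtained from $H_{i-0.5}$ by a single vector or node deletion together with at most one flip, and that $H_{i+0.5}$ is obtained from $H_i$ by a single vector or node insertion together with at most one flip. Both $H_{i-0.5}$ and $H_{i+0.5}$ are defined because $1\le i\le 2k-3$. The whole argument is a direct case analysis on the three kinds of half-operation involved (vector insertion/deletion, node insertion/deletion, flip), so I do not anticipate a genuine obstacle; the points requiring care are only bookkeeping ones, noted at the end.

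For $H_{i-0.5}$: since $H_i$ comes from $H_{i-0.5}$ by one deletion and possibly a flip, I would argue that none of these steps removes an edge other than (in the node-deletion case) the unique edge incident to the deleted leaf node, and none of them changes the edge-constraint attached to a surviving edge. A vector deletion changes only a stack; a node deletion removes one non-root leaf node and its incident edge, leaving all other edges and their edge-constraints intact; a flip changes only the root and the order $\prec$, and in particular fixes the edge set and all edge-constraints (recall that the edge-constraint of an edge is the same object whether written $X^{w,w'}$ or $X^{w',w}$, by Definition~\ref{def:configuration}). Hence any edge $(w,w')$ of $H_i$ is already present in $H_{i-0.5}$ with the same edge-constraint.

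For $H_{i+0.5}$: the argument is symmetric. Since $H_{i+0.5}$ comes from $H_i$ by one insertion and possibly a flip, and a vector insertion changes only a stack, a node insertion adds a node and one new edge to the root without deleting any edge or altering any existing edge-constraint, and a flip fixes the edge set and edge-constraints, every edge of $H_i$ survives into $H_{i+0.5}$ carrying the same edge-constraint. Combining the two directions gives the claim. The only things to be careful about are (a) matching up the correct half-operation with the correct side --- the deletion half connects $H_{i-0.5}$ to $H_i$, the insertion half connects $H_i$ to $H_{i+0.5}$ --- and (b) verifying that flips (which Definition~\ref{def:op} allows inside a full operation, and which may also occur as weight-$0$ flip edges of $G$) neither create nor destroy edges nor perturb edge-constraints; both are immediate from the relevant definitions.
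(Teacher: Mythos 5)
Your proof is correct and follows essentially the same argument as the paper: insertions (plus flips) never remove edges, and deletions (plus flips) only ever remove edges, so every edge of $H_i$ with its edge-constraint persists into both $H_{i-0.5}$ and $H_{i+0.5}$. The paper phrases the $H_{i-0.5}$ direction by inverting the deletion into an insertion, but this is the same observation.
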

\begin{proof}
  Configuration $H_{i+0.5}$ is obtained by applying a vector or node insertion to $H_i$, possibly followed by a flip, so no node, and thus no edge is deleted from $H_i$ to $H_{i+0.5}$.
  Configuration $H_{i}$ is obtained by applying a vector or node deletion to $H_{i-0.5}$, possibly followed by a flip, so $H_{i-0.5}$ is obtained by possibly applying a flip to $H_i$, followed by a node or vector insertion, and again no edge is deleted.
\end{proof}

\begin{claim}
  For $0\le s\le k-1$, we have $S_v(H_s)$ and $S_v(H_{s+0.5})$ both contain $(a_1,\dots,a_{k-1-s})$ as a substack.
  For $k-1\le s\le 2k-2$, we have $S_{v'}(H_{s})$ and $S_{v'}(H_{s-0.5})$ both contain $(a_k,\dots,a_{2k-s})$ as a substack.
\label{lem:yes-2}
\end{claim}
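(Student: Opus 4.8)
The plan is to prove the two statements by induction on $s$, moving outward from the endpoints of the path. For the first statement I would induct on $s$ from $0$ up to $k-1$. The base case $s=0$ holds since $H_0 = H$ has $S_v(H_0) = (a_1,\dots,a_{k-1})$ by definition; and $H_{0.5}$ is obtained by an insertion (possibly a flip), which cannot remove vectors from $S_v$, so $S_v(H_{0.5})$ still contains $(a_1,\dots,a_{k-1})$ as a substack. For the inductive step, suppose $S_v(H_s)$ contains $(a_1,\dots,a_{k-1-s})$ as a substack; I want to show $S_v(H_{s+1})$ and $S_v(H_{s+1.5})$ both contain $(a_1,\dots,a_{k-2-s})$ as a substack. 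The only way to lose a vector from the bottom portion of $S_v$ going from $H_s$ to $H_{s+1}$ is to delete, via a vector deletion, the top vector of $S_v(H_{s+0.5})$, and a single full operation contains only one deletion; hence at most one vector can be removed from $S_v$ in one step, so $S_v(H_{s+1})$ still contains the substack $(a_1,\dots,a_{k-2-s})$, and as before the subsequent insertion to form $H_{s+1.5}$ cannot shrink it further.

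The subtlety I expect to be the main obstacle is that the node label $v$ may not persist along the path, or the stack at $v$ could be flipped with another stack — i.e., the informal identification of ``stack $S_v$ across configurations'' needs care. The key point that resolves this is that the construction contracts configurations equivalent under relabeling (the permutation edges were pushed to the end of the path), and flips only occur in $2$-stack configurations and swap the roles of the two stacks; so I would track the stack \emph{containing the substack $(a_1,\dots,a_{k-1-s})$} rather than the literal label. As long as that substack is nonempty, the node carrying it is determined (the $a_i$ are pairwise distinct since there are no $k-1$ orthogonal vectors, as $a_1,\dots,a_k$ are orthogonal and pairwise distinct by the argument used for $k=4$), so ``$S_v$'' is unambiguous; when $s=k-1$ the substack $(a_1,\dots,a_0)$ is empty and the statement is vacuous, which is exactly where the induction stops. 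I would make this precise by appealing to footnote~\ref{footnote:path}'s labeling convention, under which $S_v$ literally denotes the same labeled stack throughout, and observing that a deletion removes only the top vector while a flip on a $2$-stack configuration does not delete anything.

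The second statement, that $S_{v'}(H_s)$ and $S_{v'}(H_{s-0.5})$ contain $(a_k,\dots,a_{2k-s})$ as a substack for $k-1\le s\le 2k-2$, is proved symmetrically by downward induction on $s$ from $2k-2$ to $k-1$: the base case is $H_{2k-2}$, whose unique stack at node $v'$ is $(a_k,\dots,a_2)$ by the definition of $H' = \pi(H_{2k-2})$ (so $S_{v'}(H_{2k-2}) = S_v(H') = (a_k, \dots, a_2)$, which contains $(a_k,\dots,a_{2k-(2k-2)}) = (a_k,a_{k-1},\dots,a_2)$), and $H_{2k-2.5}$ is obtained from $H_{2k-2}$ by an insertion (reading the path backwards, via Lemma~\ref{lem:op-1}), so again no bottom vectors are lost; the inductive step going from $H_{s}$ down to $H_{s-1}$ uses that passing from $H_{s-0.5}$ to $H_{s-1}$ — i.e. the inverse of a full operation — removes at most one vector from $S_{v'}$. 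The two inductions together cover all of $s = 0,\dots,2k-2$, with $s = k-1$ lying in the overlap, which is consistent since at $s = k-1$ the first statement's substack $(a_1,\dots,a_0)$ and, for $s = 2k-2$, analogously the relevant substacks shrink gracefully. This completes the proof.
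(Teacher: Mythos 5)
Your proposal is correct and matches the paper's argument: the paper likewise observes that each of the first $s$ full operations deletes at most one vector (always from the top of a stack), so the bottom $k-1-s$ vectors of $S_v$ survive, and handles the half-integer configurations by noting they are reached from the adjacent integer configurations by insertions only; your induction is just a reorganization of this direct count. Your fallback to the labeling convention of footnote~\ref{footnote:path} for tracking $S_v$ across flips is the right resolution (the alternative of tracking ``the stack containing the substack'' is unnecessary, and the distinctness justification via ``no $k-1$ orthogonal vectors'' is not established in the general construction).
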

\begin{proof}
  For the first item, we have $S_v(H_0)=(a_1,\dots,a_{k-1})$, and each of the first $s$ full operations deletes at most one vector from this stack, so stack $S_v(H_s)$ has $(a_1,\dots,a_{k-1-s})$ as a substack.
  By Claim~\ref{lem:yes-1.6}, $S_v(H_{i+0.5})$ does as well.
  Similarly, we have stack $S_v(H_{2k-2})=(a_k,\dots,a_2)$.
  Applying $2k-2-s$ full operations from $H_{2k-2}$ gives $H_{s}$, but each operation deletes at most one vector from the starting stack $S_{v'}(H_{2k-2})=(a_k,\dots,a_2)$.
  Hence, stack $S_v(H_{s})$ has $(a_k,\dots,a_{2k-s})$ as a substack, and by Claim~\ref{lem:yes-1.6}, stack $S_v(H_{s-0.5})$ does as well.
\end{proof}

Let $s$ be the largest index such that node $v$ is in configurations $H_0,\dots,H_s$ ($s$ exists because $H_0$ contains node $v$).
Let $s'$ be the smallest index such that node $v'$ is in configurations  $H_{s'},\dots,H_{2k-2}$ (again $s'$ exists because $H_{2k-2}$ contains node $v'$).
By the maximality of $s$ (and since we assume no permutation edges are used in $H_0,\dots,H_{2k-2}$), node $v$ has an empty stack in configuration graph $H_s$.
Node $v$ also has a size $k-1$ stack in $H_0$.
Since each valid full operation can delete at most one vector from some stack, we have that $s\ge k-1$.
Similarly, we have that $s'\le k-1$, so $s'\le s$.
Thus, nodes $v$ and $v'$ both appear in each of the configurations $H_{s'},\dots,H_s$.
We have three cases, and in each case, we show that our path contradicts Lemma~\ref{lem:yes-1}.

\textbf{Case 1. $v=v'$.}
This implies that $s'=0$ and $s=r$, and node $v$ appears in every configuration $H_0,\dots,H_{2k-2}$.
We have that the stack $S_v(H_0)=(a_1,\dots,a_{k-1})$, and $S_v(H_{2k-2})=(a_k,\dots,a_2)$.
Thus, to obtain $S_v(H_{r})$ from $S_v(H_0)$, one needs to apply $k-1$ vector deletions followed by $k-1$ vector insertions.
Since each valid full operation applies at most one vector insertion followed by at most one vector deletion, the first $k-1$ full operations of our path must include a vector deletion at node $v$, and the last $k-1$ edges must include a vector insertion at node $v$, inserting the vectors $a_k,\dots,a_2$ in that order.
In particular, we have $S_v(H_k) = (a_k)$.

Because valid full operations must have one endpoint with at least two nodes, $H_0$ to $H_1$ operation must include a node insertion of some node $w\neq v$ with an edge $(v,w)$.
By Claim~\ref{lem:yes-1.6} the edge $(v,w)$ with an edge constraint $X^{v,w}$ appears in configuration $H_{0.5}$, so stack $S_v(H_{0.5}) = (a_1,\dots,a_{k-1})$ satisfies coordinate array $X^{v,w}_{*}$.
Furthermore, since there are no node-deletions in the first $k-1$ valid full operations (because each full operation deletes either a vector or node, not both), we know the edge $(v,w)$ exists in each of $H_1,\dots,H_{k-1}$. 
By Claim~\ref{lem:yes-1.6} the edge $(v,w)$ labeled with the edge constraint $X^{v,w}$ also exist in configuration $H_{k-0.5}$.
Additionally, as we reasoned earlier, $S_v(H_{k-0.5})=(a_k)$, so stack $(a_k)$ satisfies coordinate array $X^{v,w}_*$.
However, this means that stacks $(a_1,\dots,a_{k-1})$ and $(a_k)$ both satisfy $X^{v,w}_*$, which is a contradiction of Lemma~\ref{lem:yes-1}.

\textbf{Case 2. $v\neq v'$ and nodes $v$ and $v'$ are adjacent in configuration $H_{s'}$.}
Clearly we have $s'\ge 1$ and $s\le 2k-3$ in this case.
In configuration $H_{s'}$, node $v'$ is a non-root leaf node with an empty stack $S_{v'}(H_{s'})=\emptyset$ and incident edge $(v,v')$.
Furthermore, from configuration $H_{s'}$ to $H_{s+1}$, node $v$ is deleted, but node $v'$ is in configurations $H_{s'},\dots,H_{s+1}$.
Hence, by Lemma~\ref{lem:yes-3}, we have $(s+1)-s'\ge k-1$.

By Claim~\ref{lem:yes-1.6}, both configurations $H_{s'-0.5}$ and $H_{s+0.5}$ contain the edge $(v,v')$ with edge-constraint $X^{v,v'}$.
By Claim~\ref{lem:yes-2}, in configuration $H_{s'-0.5}$, node $v$ is labeled with a stack that contains $(a_1,\dots,a_{k-s'})$ as a substack, so by Lemma~\ref{lem:stack-1}, stack $(a_1,\dots,a_{k-s'})$ satisfies coordinate array $X^{v,v'}_*$.
Similarly, by Claim~\ref{lem:yes-2}, in configuration $H_{s+0.5}$, node $v'$ is labeled with a stack that contains $(a_k,\dots,a_{2k-1-s})$ as a substack, so stack $(a_k,\dots,a_{2k-1-s})$ satisfies coordinate array $X^{v,v'}_*$.
Since $k-s'\ge (2k-1-s)-1$, we have that, for $j=k-s'$, both stacks $(a_1,\dots,a_j)$ and $(a_k,\dots,a_{j+1})$ satisfies coordinate array $X^{v,v'}_{*}$, which is a contradiction by Lemma~\ref{lem:yes-1}.

\textbf{Case 3. $v\neq v'$ and nodes $v$ and $v'$ are not adjacent in configuration $H_{s'}$.}
In any configuration, the root node is adjacent to all other vertices, so $v$ and $v'$ must both be non-root nodes.
Suppose that in configuration $H_{s'}$, the root node is $w=\rho(H_{s'})$.
Since only leaf nodes in a configuration can be deleted, and since nodes $v$ and $v'$ are not deleted in $H_{s'},\dots,H_{s}$, we have that node $w$ exists and has degree at least two in each of $H_{s'},\dots,H_s$, and therefore must be the root node in each of $H_{s'},\dots,H_s$.
In particular, since the total order $\prec_H$ and root node of a configuration $H$ can only be changed when there are at most two vertices, no full operations from $H_{s'}$ to $H_s$ include flip operations.
Consequently, nodes $v$ and $v'$ have the same order with respect to orderings $\prec_{H_{s'}}$ and $\prec_{H_s}$

Assume without loss of generality that $v\prec_{H_{s'}} v'$ and $v\prec_{H_s} v'$ (the reverse direction is symmetric).
Let $t'$ be the largest index such that node $w$ is in configuration $H_{t'}$ ($t'\le 2k-3$ because configuration $H_{2k-2}$ only contains node $v'$). 
By maximality of $t'$, from configuration $H_{t'}$ to $H_{t'+1}$, node $w$ is deleted, so by Lemma~\ref{lem:yes-3}, $t'-s'\ge k-2$.
By Claim~\ref{lem:yes-1.6}, both $v$ and $v'$ are in $H_{s'-0.5}$.
Let $i_v$ be such that $v$ is the $i_v$th smallest node in configuration $H_{s'-0.5}$ according to $\prec_{H_{s'-0.5}}$.
Because $v\prec_{H_{s'-0.5}}v'$, and since configuration $H_{s'-0.5}$ is valid, Definition~\ref{def:edge} gives that stack $S_v(H_{s'-0.5})$ satisfies coordinate array $X^{v',w}_{v',i_v}$.
By Claim~\ref{lem:yes-2}, $(a_1,\dots,a_{k-s'})$ is a substack of $S_v(H_{s'-0.5})$, so by Lemma~\ref{lem:stack-1}, stack $(a_1,\dots,a_{k-s'})$ also satisfies coordinate array $X^{v',w}_{v',i_v}$.
On the other hand, by Claim~\ref{lem:yes-2}, $(a_k,\dots,a_{2k-1-t'})$ is a substack of $S_v(H_{t'+0.5})$.
Additionally, by Claim~\ref{lem:yes-1.6}, edge $(v',w)$ is also in $H_{t'+0.5}$, so stack $S_v(H_{t'+0.5})$, and thus stack $(a_k,\dots,a_{2k-1-t'})$, satisfies coordinate array $X^{v',w}_{v',i_v}$.
Since $k-s'\ge (2k-1-t)-1$, we have that for $j=k-s'$, stacks $(a_1,\dots,a_j)$ and $(a_k,\dots,a_{j+1})$ satisfy the same coordinate array $X^{v',w}_{v',i_v}$, which is a contradiction by Lemma~\ref{lem:yes-1}.

In all cases of $v$ and $v'$, we have shown a contradiction.
Thus, the path from configuration $H$ to configuration $H'$ in the Diameter instance $G$ cannot have length $2k-2$. 
Thus, when $A$ has $k$ orthogonal vectors, the Diameter of $G$ is at least $2k-1$.
This completes the proof.

\section*{Acknowledgments} \noindent We thank Nicole Wein for helpful discussions, and we thank anonymous reviewers for helpful comments that improved the writing. 


\bibliographystyle{alpha}
\bibliography{bib} 

\appendix
\section{Main theorem for $k=5$}
\label{app:k5}
In this section, we prove Theorem~\ref{thm:main} (again) for $k=5$.
This proof shows how the $k=4$ proof in Section~\ref{sec:k4} can be easily modified to give a hardness reduction for $k=5$.
We include this proof because it is simpler than the $k=5$ instantiation of the general-$k$ proof in Section~\ref{sec:all}, so it may help to reader gain intuition for the general construction.
To avoid confusion, we highlight the main differences between the proof in this section and the general proof specialized to $k=5$.
\begin{itemize}
\item In the general proof specialized to $k=5$, vertices have up to three stacks. In this proof, vertices have up to two stacks. This difference is the main simplification.
\item To make this simplification work, we include ``coordinate change edges'' (as in the $k=4$ proof). By contrast, the general proof does not have such edges.
\item To make this simplification work, we also let coordinate arrays constrain stacks differently. In the general construction, if a coordinate array $x$ constrains two stacks $S$ and $S'$, that means both $S$ and $S'$ satisfy $x$. Here, we only require $S\circ S'$ or $S'\circ S$ to satisfy $x$.
\end{itemize}
\begin{theorem}
   Assuming SETH, for all $\varepsilon>0$ a $(\frac{9}{5}-\varepsilon)$-approximation of Diameter in \emph{unweighted, undirected} graphs on $n$ vertices needs $n^{5/4-o(1)}$ time.
   \label{thm:95}
\end{theorem}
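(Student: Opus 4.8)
The plan is to reuse the $k=4$ construction of Theorem~\ref{thm:74} with the modifications indicated right after that proof, and then re-run the NO-case and YES-case analyses with the larger parameters. First I would reduce from a 5-OV instance $A\subseteq\{0,1\}^{\mathbb{d}}$ with $|A|=n_{OV}$ and $\mathbb{d}=O(\log n_{OV})$: in time $O(n_{OV}^4)$ check whether $A$ contains four orthogonal vectors (if so, adding an arbitrary fifth vector yields a 5-OV solution), and otherwise assume no four — hence no three, no two — vectors of $A$ are orthogonal. Then I build $G$ exactly as in the $k=4$ case, but with $L_1$ the set of stacks of size $4$, and $L_2$ the set of tuples $(\{S_1,S_2\},x,y)$ with $|S_1|+|S_2|=3$ and $x,y\in[\mathbb{d}]^{4}$ satisfying the same ``$S_1\circ S_2$ satisfies $x$ and $S_2\circ S_1$ satisfies $y$, OR \dots'' condition; the four edge families ($L_1$–$L_2$ edges, type-1 and type-2 vector-change edges in $L_2$, and coordinate-change edges in $L_2$) are described verbatim. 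A routine count gives $\tilde O(n_{OV}^{4})$ vertices and edges, so with $n$ the number of vertices we have $n=\tilde O(n_{OV}^4)$; the goal is then to show $G$ has diameter at most $5$ in the NO case and at least $2\cdot 5-1=9$ in the YES case, so that a $(\tfrac95-\varepsilon)$-approximation distinguishes them, and 5-OV hardness gives the $n^{5/4-o(1)}$ lower bound.

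For the NO case I would carry out the same casework as for $k=4$, now routing the ``middle'' of each path through $L_2$-vertices whose two stacks have sizes summing to $3$. The two ingredients are unchanged: by Lemma~\ref{lem:stack-2} any two of the relevant stacks (all of size $\le 4$, using that no five vectors are orthogonal) share a common coordinate array in $[\mathbb{d}]^{4}$, and by Lemma~\ref{lem:stack-1} a vector that is all-ones on a coordinate array can be pushed onto any stack satisfying it. Concretely: for two vertices in $L_1$, drop each to $L_2$ via an $L_1$–$L_2$ edge and morph in the middle with vector-change edges over a common coordinate array; for $L_1$ to $L_2$, and $L_2$ to $L_2$ with small stacks, proceed as in the corresponding $k=4$ cases but with one extra vector in play; and reduce every case involving a stack of size $3$ (and empty partner) to a case with all stacks of size $\le 2$ by the neighborhood-containment observation used in the last bullet of the $k=4$ NO case (the neighborhood of such a vertex contains, via coordinate-change edges, the neighborhood of the corresponding $L_1$-vertex). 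Each case yields a path of length at most $5$.

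For the YES case, let $a_1,\dots,a_5\in A$ be orthogonal; they are pairwise distinct since there is no orthogonal triple. I would show $u_0=(a_1,a_2,a_3,a_4)_{L_1}$ and $u_8=(a_5,a_4,a_3,a_2)_{L_1}$ are at distance at least $9$ by assuming a path of length exactly $8$ (pad with coordinate-change self-loops) that, as in the $k=4$ proof, visits $L_1$ only at its ends. For $i=1,2,3,4$ define $p_i$ as the largest index such that $u_0,\dots,u_{p_i}$ all contain a stack with $(a_1,\dots,a_i)$ as a substack (so $p_4=0$), and $q_i$ as the smallest index such that $u_{q_i},\dots,u_8$ all contain a stack with $(a_5,\dots,a_{6-i})$ as a substack (so $q_4=8$). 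The analogue of Claim~\ref{cl:4-yes} is: for $i=1,2,3,4$ there is a coordinate-change edge between $u_{p_i}$ and $u_{q_{5-i}}$; its proof is the same, because absent such an edge one extracts stacks $(a_1,\dots,a_j)$ and $(a_5,\dots,a_{j+1})$ that satisfy a common coordinate array, contradicting Lemma~\ref{lem:yes-1}. As in \eqref{eq:4-yes-1}, the eight edges $u_{p_i}u_{p_i+1}$ and $u_{q_i-1}u_{q_i}$ for $i=1,\dots,4$ are all non-coordinate-change (by maximality of $p_i$ and minimality of $q_i$), the four ``$p$''-edges are pairwise distinct, the four ``$q$''-edges are pairwise distinct, and $u_{p_4}u_{p_4+1}=u_0u_1$ (resp.\ $u_{q_4-1}u_{q_4}=u_7u_8$) equals no ``$q$''-edge (resp.\ no ``$p$''-edge) since $u_1$ contains no $a_5$ and $u_7$ contains no $a_1$; hence there are at least $5$ distinct distinguished edges among the $8$ edges of the path, so the path has between $1$ and $3$ coordinate-change edges, and by the Claim at least one.

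The main obstacle, and where the $k=5$ proof is genuinely more involved than the $k=4$ one, is the concluding counting argument over the number of coordinate-change edges. In the one-edge case, the Claim forces every ``$p$''-edge strictly before and every ``$q$''-edge strictly after the single coordinate-change edge, making all $8$ distinguished edges distinct, for a total of $9>8$, a contradiction. In the two- and three-edge cases one combines: (i) the ordering forced by the Claim — each ``$q$''-edge lies after the first coordinate-change edge and each ``$p$''-edge lies before the last one; (ii) the pigeonhole count forcing several coincidences $u_{p_i}u_{p_i+1}=u_{q_j-1}u_{q_j}$; and (iii) the nesting $p_1\ge p_2\ge p_3\ge p_4$ and $q_1\le q_2\le q_3\le q_4$, which propagates a coincidence to neighbouring indices and thereby traps some pair $u_{p_i}$, $u_{q_{5-i}}$ strictly between two consecutive coordinate-change edges with no coordinate-change edge between them, contradicting the Claim. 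Getting this bookkeeping exactly right — which distinguished edges must lie on which side of which coordinate-change edge, and how maximality/minimality of $p_i,q_i$ interacts with the forced overlaps — is the delicate part, and is why the full argument is deferred to this appendix rather than folded into Section~\ref{sec:k4}.
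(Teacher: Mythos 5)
Your overall architecture matches the paper's: same graph (stacks of size $4$ in $L_1$, stacks summing to $3$ in $L_2$), same $p_i/q_i$ bookkeeping, the same analogue of Claim~\ref{cl:4-yes}, the same count showing at least $5$ distinct distinguished edges, and casework on $1$, $2$, or $3$ coordinate-change edges. But there are two concrete gaps. First, in the NO case your claim that ``the two ingredients are unchanged'' is not right: pushing an all-ones vector onto a satisfying stack (which is all you need for $k=4$) does not suffice here. The intermediate $L_2$-vertices of the paths, such as $(\{(a',b'),(a)\},x,y)$ in the $L_1$-to-$L_2$ case, require showing that a stack like $(a,b,a')$ or $(a,a',b')$ satisfies a coordinate array when $a'$ is only guaranteed to be $1$ on a size-$3$ subset of the $4$ coordinates. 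The paper isolates this as Lemma~\ref{lem:5-no-3}: if $(a,b)$ and $(a',b')$ both satisfy $x$, then $(a,b,b')$ satisfies $x$, proved by the intersection bound $|I_2\cap J_2|\ge 3+3-4=2$ on the witnessing index sets (plus two similar statements). Without some such merging lemma the middle vertices of your NO-case paths cannot be verified.

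Second, and more importantly, your proposed way of closing Case 2 of the YES case --- ``trap some pair $u_{p_i},u_{q_{5-i}}$ between two consecutive coordinate-change edges, contradicting the Claim'' --- does not work there. In Case 2 the Claim forces exactly $u_{p_1},u_{p_2},u_{q_1},u_{q_2}$ between the two coordinate-change edges and pushes $u_{p_3}$ before the first and $u_{q_3}$ after the second; the pairs covered by the Claim are $(p_i,q_{5-i})$, and none of $\{p_1,p_2\}\times\{q_1,q_2\}$ is such a pair, so no contradiction with the Claim arises. The paper instead deduces the forced coincidences $u_{p_1}u_{p_1+1}=u_{q_2-1}u_{q_2}$ and $u_{p_2}u_{p_2+1}=u_{q_1-1}u_{q_1}$, concludes $q_1=p_1=p_2+1=q_2-1$, and then derives a \emph{structural} contradiction: the vertex $u_{p_1}=u_{q_1}$ would have one stack whose only vector is $a_1$ (by maximality of $p_1$) and another whose only vector is $a_5$ (by minimality of $q_1$), i.e.\ two stacks with two vectors total, violating the $L_2$ requirement $|S_1|+|S_2|=3$. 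This extra ingredient --- using the size constraint on $L_2$ vertices rather than the Claim --- is the missing idea; Case 3, by contrast, does close via the Claim exactly as you describe.
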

\begin{proof}
Start with a 5-OV instance $\Phi$ given by a set $A\subset \{0,1\}^{\mathbb{d}}$ with $|A|=n_{OV}$ and $\mathbb{d} = c\log n_{OV}$.
We can check in time $n_{OV}^4$ where there are 4 orthogonal vectors in $A$, if so, we know $\Phi$ has 5 orthogonal vectors, so assume otherwise.
We construct a graph with $\tilde O(n_{OV}^4)$ vertices and edges from the 5-OV instance such that (1) if $\Phi$ has no solution, any two vertices are at distance 5, and (2) if $\Phi$ has a solution, then there exists two vertices at distance 9.
Any $(9/5-\varepsilon)$-approximation for Diameter distinguishes between graphs of diameter 5 and 9.
Since solving $\Phi$ needs $n_{OV}^{5-o(1)}$ time under SETH, a $9/5-\varepsilon$ approximation of diameter needs $n^{5/4-o(1)}$ time under SETH.
\paragraph{Construction of the graph}
The vertex set $L_1\cup L_2$ is defined on
\begin{align}
  L_1 &= \{(a,b,c,d)\in A^4\},  \nonumber\\
  L_2 &= \big\{(\{S_1,S_2\},x,y): \text{$S_1,S_2$ are stacks with $|S_1|+|S_2|=3$, }  \nonumber\\
   & \qquad\qquad\qquad\qquad\qquad \text{$x,y\in[\mathbb{d}]^3$ are coordinate arrays such that} \nonumber\\
   & \qquad\qquad\qquad\qquad\qquad \text{$S_1\circ S_2$ satisfies $x$ and $S_2\circ S_1$ satisfies $y$, OR} \nonumber\\
   & \qquad\qquad\qquad\qquad\qquad \text{$S_1\circ S_2$ satisfies $y$ and $S_2\circ S_1$ satisfies $x$}\big\}  
\end{align}
Throughout, we identify tuples $(a,b,c,d)$ and $(\{S_1,S_2\},x,y)$  with vertices of $G$, and we denote vertices in $L_1$ and $L_2$ by $(a,b,c)_{L_1}$ and $(\{S_1,S_2\},x,y)_{L_2}$ respectively.
The (undirected unweighted) edges are the following.
\begin{itemize}
\item ($L_1$ to $L_2$) Edge between $(a,b,c,d)_{L_1}$ and $(\{(a,b,c), ()\}, x,y)_{L_2}$ if stack $(a,b,c,d)$ satisfies both $x$ and $y$.
\item (vector change in $L_2$) For some vector $a\in A$ and stacks $S_1, S_2$ with $|S_1|\ge 1$, an edge between $(\{S_1,S_2\},x,y)_{L_2}$ and $(\{\popped(S_1),S_2+a\},x,y)_{L_2}$ if both vertices exist.
\item (vector change in $L_2$, part 2) For some vector $a\in A$ and stacks $S_1, S_2$ with $|S_1|\ge 1$, an edge between $(\{S_1,S_2\},x,y)_{L_2}$ and $(\{\popped(S_1)+a,S_2\},x,y)_{L_2}$ if both vertices exist.
\item (coordinate change in $L_2$) Edge between $(\{S_1,S_2\},x,y)_{L_2}$ and $(\{S_1,S_2\},x',y')_{L_2}$ if both vertices exist.
\end{itemize}

There are $n_{OV}^4$ vertices in $L_1$ and at most $n_{OV}^3\mathbb{d}^8$ vertices in $L_2$.
Note that each vertex of $L_1$ has $O(\mathbb{d}^8)$ neighbors, each vertex of $L_2$ has $O(n_{OV}+\mathbb{d})$ neighbors. 
The total number of edges and vertices, and thus the construction time, is $O(n_{OV}^4\mathbb{d}^8)=\tilde O(n_{OV}^4)$.
We now show that this construction has diameter 5 when $\Phi$ has no solution and diameter at least 9 when $\Phi$ has a solution.

\paragraph{5-OV no solution}
Assume that the 5-OV instance $A\subset\{0,1\}^{\mathbb{d}}$ has no solution, so that no five (or four or three or two) vectors are orthogonal.
We begin with the following lemma:
\begin{lemma}
  If stacks $(a,b)$ and $(a')$ satisfy $x$, then $(a,b,a')$ and $(a',a,b)$ satisfy $x$.
  If stacks $(a,b)$ and $(a',b')$ satisfy coordinate array $x$, then the stack $(a,b,b')$ satisfies coordinate array $x$.
  If stacks $(e',a',b')$ and $(a)$ satisfy coordinate array $x$, then stack $(a,a',b')$ satisfies coordinate array $x$.
\label{lem:5-no-3}
\end{lemma}
\begin{proof}
  For the first item, $(a,b,a')$ satisfies $x$ because $(a,b)$ satisfies $x$ and $a'$ is 1 in every coordinate of $x$.
  Similarly, $(a',a,b)$ satisfies $x$ because $a$ and $a'$ are 1 in every coordinate of $x$, and $b$ is 1 in at least 3 coordinates of $x$.

  For the second item, since $(a,b)$ and $(a',b')$ satisfy $x$, we have $a[x[i]]=1$ for $i\in [4]$, and there exists $I_2, J_2\subset[4]$ of size 3 such that $b[x[i]]=1$ for $i\in I_2$ and $b'[x[i]]=1$ for $i\in J_2$.
  We have $|I_2\cap J_2|=|I_2|+|J_2|-|I_2\cup J_2|\ge 3+3-4 = 2$.
  Thus, $I_1\supset I_2\supset (I_2\cap J_2)$ certifies that $(a,b,b')$ satisfies $x$.

  For the third item, because stack $(e',a',b')$ satisfies $x$, there exists $[4]=I_1\supset I_2\supset I_3$ with $a'[x[i]]=1$ for $i\in I_2$ and $b'[x[i]]=1$ for $i\in I_3$.
  Since $a[x[i]]=1$ for all $i\in [4]$, we thus have $I_1\supset I_2\supset I_3$ certifies that $(a,a',b')$ satisfies $x$.
\end{proof}

We show that any pair of vertices have distance at most 4, by casework on which of $L_1,L_2$ the two vertices are in.
\begin{itemize}
\item \textbf{Both vertices are in $L_1$:} Let the vertices be $(a,b,c,d)_{L_1}$ and $(a',b',c',d')_{L_1}$. 
By Lemma~\ref{lem:stack-2} there exists coordinate array $x$ satisfied by both stacks $(a,b,c,d)$ and $(a',b',c',d')$.
Then 
\begin{align}
  (a,b,c,d)_{L_1} 
  &- (\{(),(a,b,c)\},x,x)_{L_2}\nonumber\\
  &-(\{(a,b),(a')\},x,x)_{L_2}\nonumber\\
  &-(\{(a),(a',b')\},x,x)_{L_2}\nonumber\\
  &-(\{(),(a',b',c')\},x,x)_{L_2}
  -(a',b',c',d')_{L_1} 
\end{align}
is a valid path.
Indeed, the first edge and second vertex are valid because $(a,b,c,d)$ satisfies $x$ (and thus, by Lemma~\ref{lem:stack-1}, stack $(a,b,c)$ satisfies $x$).
By the same reasoning the last edge and fifth vertex are valid.
The third vertex is valid because $(a)$ and $(a',b')$ both satisfy $x$ and thus both $(a,a',b')$ and $(a',b',a)$ satisfy $x$ by the first part of Lemma~\ref{lem:5-no-3}.
By the same reasoning, the fourth vertex is valid.

\item \textbf{One vertex is in $L_1$ and the other vertex is in $L_2$ with stacks of size 1 and 2:} Let the vertices be $(a,b,c,d)_{L_1}$ and $(\{(a',b'),(e')\},x',y')_{L_2}$.
By Lemma~\ref{lem:stack-2}, there exists a coordinate array $x$ that is satisfied by stacks $(a,b,c,d)$ and $(a',b',e')$, and there exists a coordinate array $y$ satisfied by both stacks $(a,b,c,d)$ and $(e',a',b')$.
We claim the following is a valid path:
\begin{align}
  (a,b,c,d)_{L_1}
  &-(\{(a,b,c),()\},x,y)_{L_2} \nonumber\\
  &-(\{(a'),(a,b)\},x,y)_{L_2} \nonumber\\
  &-(\{(a',b'),(a)\},x,y)_{L_2} \nonumber\\
  &-(\{(a',b'),(e')\},x,y)_{L_2}
  -(\{(a',b'),(e')\},x',y')_{L_2}.
\end{align}

The first edge and second vertex are valid because $(a,b,c,d)$ satisfies $x$.

For the third vertex, we have $(a,b,c,d)$ and $(a',b',e')$ satisfy coordinate array $x$, so by Lemma~\ref{lem:stack-1}, stacks $(a,b)$ and $(a')$ satisfy coordinate array $x$.
Then by the first part of Lemma~\ref{lem:5-no-3}, stack $(a',a,b)$ satisfies $x$.
Similarly, $(a,b,c,d)$ and $(e',a',b')$ satisfy coordinate array $y$, so stacks $(a,b)$ and $(e',a')$ satisfy coordinate array $y$, so by the second part of Lemma~\ref{lem:5-no-3}, stack $(a,b,a')$ satisfies $y$.
Thus, the third vertex $(\{(a'),(a,b)\},x,y)_{L_2}$ is valid.

For the fourth vertex, we similarly have stacks $(a',b')$ and $(a)$ satisfy $x$, so  stack $(a',b',a)$ satisfy $y$.
Additionally, stacks $(e',a',b')$ and $(a)$ satisfy $y$ so $(a,a',b')$ satisfies $y$.
Thus the fourth vertex $(\{(a',b'),(a)\},x,y)_{L_2}$ is valid.

The fifth vertex  $(\{(a',b'),(e')\},x,y)_{L_2}$ is valid because $(a',b',e')$ satisfies $x$ and $(e',a',b')$ satisfy $y$ by construction of $x$ and $y$.

Hence, this is a valid path.

\item \textbf{Both vertices are in $L_2$ and have two stacks of size 1 and 2:} Let the vertices be $(\{(a,b),(e)\},x',y')_{L_2}$ and $(\{(a',b'),(e')\},x'',y'')_{L_2}$.
By Lemma~\ref{lem:stack-2}, there exists a coordinate array $x$ that is satisfied by $(a,b,e)$ and $(e',a',b')$, and there exists a coordinate array $y$ satisfied by both stacks $(e,a,b)$ and $(a',b',e')$.
Then the following is a valid path:
\begin{align}
  (\{(a,b),(e)\},x',y')_{L_2}
  &-(\{(a,b),(e)\},x,y)_{L_2} \nonumber\\
  &-(\{(a,b),(a')\},x,y)_{L_2} \nonumber\\
  &-(\{(a),(a',b')\},x,y)_{L_2}\nonumber\\
  &-(\{(e'),(a',b')\},x,y)_{L_2}
  -(\{(a',b'),(e')\},x'',y'')_{L_2}.
\end{align}
By construction of coordinate arrays $x$ and $y$, vertices $(\{(a,b),(e)\},x,y)_{L_2}$ and $(\{(a',b'),(e')\},x,y)_{L_2}$ are valid.
We now show vertex $(\{(a,b),(a')\},x,y)_{L_2}$ is valid, and the fact that vertex $(\{(a),(a',b')\},x,y)_{L_2}$ is valid follows by a symmetric argument.
We have stacks $(a,b)$ and $(e',a')$ satisfy $x$, so $(a,b,a')$ satisfies $x$ by the second part of Lemma~\ref{lem:5-no-3}.
Furthermore $(e,a,b)$ and $(a')$ satisfy $y$, so stack $(a',a,b)$ satisfies $y$ by the third part of Lemma~\ref{lem:5-no-3}.

\item \textbf{One vertex is in $L_2$ with two stacks of size 3 and 0:} 
For every vertex $u=(\{(a,b,c),()\},x,y)_{L_2}$ in $L_2$ with stacks of size 3 and 0, any vertex of the form $v=(a,b,c,d)_{L_1}$ in $L_1$ has the property that the neighborhood of $u$ is a superset of the neighborhood of $v$ (by consider coordinate change edges from $u$).
Thus, any vertex that $v$ can reach in 5 edges can also be reached by $u$ is 5 edges.
In particular, since any two vertices in $L_1$ are at distance at most 5, any vertex in $L_1$ is distance at most 5 from any vertex in $L_2$ with stacks of size 3 and 0.
Applying a similar reasoning, any two vertices in $L_2$ with stacks of size 3 and 0 are at distance at most 5, and any vertex in $L_2$ with stacks of size 3 and 0 is distance at most 5 from any vertex in $L_2$ with stacks of size 2 and 1.
\end{itemize}
We have thus shown that any two vertices are at distance at most 5, proving the diameter is at most 5.

\paragraph{5-OV has solution}
Now assume that the 5-OV instance has a solution.
That is, assume there exists $a_1,a_2,a_3,a_4,a_5\in A$ such that $a_1[i]\cdot a_2[i]\cdot a_3[i]\cdot a_4[i]\cdot a_5[i] = 0$ for all $i$.
Since we assume there are no 4 orthogonal vectors, we may assume that $a_1,a_2,a_3,a_4,a_5$ are pairwise distinct.

Suppose for contradiction there exists a path of length at most 8 from $u_0=(a_1,a_2,a_3,a_4)_{L_1}$ to $u_6=(a_5,a_4,a_3,a_2)_{L_1}$.
Since all vertices in $L_2$ have self-loops with trivial coordinate-change edges, we may assume the path has length exactly 8.
Let the path be $u_0=(a_1,a_2,a_3,a_4)_{L_1}, u_1,\dots,u_8=(a_5,a_4,a_3,a_2)_{L_1}$.
We may assume the path never visits $L_1$ except at the ends: if $u_i=(S)_{L_1}\in L_1$, then $u_{i-1}=(\{\popped(S),()\},x,y)_{L_2}$ and $u_{i+1}=(\{\popped(S),()\},x',y')_{L_2}$ are in $L_2$, and in particular $u_{i-1}$ and $u_{i+1}$ are adjacent by a coordinate change edge, so we can replace the path $u_{i-1}-u_i-u_{i+1}$ with $u_{i-1}-u_{i+1}-u_{i+1}$, where the last edge is a self-loop.

For $i=1,2,3,4$, let $p_i$ denote the largest index such that $u_0,u_1,\dots,u_{p_i}$ all contain a stack that has stack $(a_1,\dots,a_i)$ as a substack.
In this way, $p_4 = 0$.
For $i=1,\dots,4$, let $q_i$ be the smallest index such that vertices $u_{q_i},\dots,u_8$ all contain a stack with stack $(a_5,\dots,a_{6-i})$ as a substack.
In this way, $q_4 = 8$.
We show that, 
\begin{claim}
For $i=1,\dots,4$, between vertices $u_{p_i}$ and $u_{q_{5-i}}$, there must be a coordinate change edge.
\label{cl:5-yes}
\end{claim}
\begin{proof}
Suppose for contradiction there is no coordinate change edge between $u_{p_i}$ and $u_{q_{5-i}}$.

First, consider $i=4$.
Here, $u_{p_i}=u_0 = (a_1,a_2,a_3,a_4)_{L_1}$.
Then, $u_{q_1}$ is a vertex of the form $(\{S_1,S_2\},x,y)$ where $(a_5)$ is a substack of $S_1$.
Since there is no coordinate change edge, we must have $u_1 = (\{(a_1,a_2,a_3),()\},x,y)$ for the same coordinate arrays $x$ and $y$, so stack $(a_1,a_2,a_3,a_4)$ satisfies both $x$ and $y$.
Then $S_1$, and thus $(a_5)$, satisfies one of $x$ and $y$, so there is some coordinate array satisfied by both $(a_1,a_2,a_3,a_4)$ and $(a_5)$, which is a contradiction of Lemma~\ref{lem:yes-1}
By a similar argument, we obtain a contradiction with $i=1$.

Now suppose $i=3$.
Vertex $u_{p_3}$ is of the form $(\{(a_1,a_2,a_3),()\},x,y)$.
Then stack $(a_1,a_2,a_3)$ satisfies both coordinate arrays $x$ and $y$.
Vertex $u_{q_2}$ is of the form $(\{S_1',S_2'\},x,y)$ where $(a_5,a_4)$ is a substack of $S_1'$.
Then stack $S_1'\circ S_2'$ satisfies one of $x$ or $y$, and thus $(a_5,a_4)$, a substack of $S_1'\circ S_2'$, satisfies one of $x$ or $y$.
Thus, there is some coordinate array satisfied by both $(a_5,a_4)$ and $(a_1,a_2,a_3)$, which is a contradiction of Lemma~\ref{lem:yes-1}.
By a similar argument, we obtain a contradiction with $i=2$.

Thus, we have shown that for all $i=1,\dots,4$, there must be a coordinate change edge between $u_{p_i}$ and $u_{q_{5-i}}$.
\end{proof}

Since coordinate change edges do not change any vectors, by maximality of $p_i$, the edge $u_{p_i}u_{p_{i}+1}$ cannot be a coordinate change edge for all $i=1,\dots,4$.
Similarly, by minimality of $q_i$, the edge $u_{q_{i}-1}u_{q_i}$ cannot be a coordinate change edge for all $i=1,\dots,4$.

Consider the set of edges
\begin{align}
  \label{eq:5-yes-1}
  u_{p_4}u_{p_4+1}, u_{p_3}u_{p_3+1}, u_{p_2}u_{p_2+1}, u_{p_1}u_{p_1+1}, 
  u_{q_4-1}u_{q_4}, u_{q_3-1}u_{q_3}, u_{q_2-1}u_{q_2}, u_{q_1-1}u_{q_1}.
\end{align}
By above, none of these edges are coordinate change edges.
These edges are among the 8 edges $u_0u_1,\dots,u_7u_8$.
Additionally, the edges $u_{p_i}u_{p_i+1}$ for $i=1,\dots,4$ are pairwise distinct, and the edges $u_{q_i-1}u_{q_i}$ for $i=1,\dots,4$ are pairwise distinct.
Edge $u_{p_4}u_{p_4-1}$ cannot be any of $u_{q_i-1}u_{q_i}$ for $i=1,\dots,4$, because we assume our orthogonal vectors $a_1,a_2,a_3,a_4,a_5$ are pairwise distinct and $u_{p_4-1}=u_1$ does not have any stack containing vector $a_5$.
Similarly, $u_{q_4-1}u_{q_4}$ cannot be any of $u_{p_i}u_{p_i+1}$ for $i=1,\dots,4$.
Thus, the edges in \eqref{eq:5-yes-1} have at least 5 distinct edges, so our path has at most 3 coordinate change edges.
By Claim~\ref{cl:5-yes}, there must be at least one coordinate change edge.
We now casework on the number of coordinate change edges.

\textbf{Case 1: the path $u_0,\dots,u_8$ has one coordinate change edge.}
By Claim~\ref{cl:5-yes}, since vertex $u_{p_4}=u_0$ is before the coordinate change edge, edge $u_{q_1-1}u_{q_1}$ must be after the coordinate change edge, and similarly edge $u_{p_1}u_{p_1+1}$ must be before the coordinate change edge.
Then all of the edges in \eqref{eq:5-yes-1} are pairwise distinct, so then the path has 8 edges from \eqref{eq:5-yes-1} plus a coordinate change edge, for a total of 9 edges, a contradiction.

\textbf{Case 2: the path has two coordinate change edges.}
Again, by Claim~\ref{cl:5-yes}, for $i=1,\dots,4$, edges $u_{q_i-1}u_{q_i}$ must be after the first coordinate change edge, and edge $u_{p_i}u_{p_i+1}$ must be before the second coordinate change edge. 
Since we have 8 edges total, we have at most 6 distinct edges from \eqref{eq:5-yes-1}, so there must be at least two pairs $(i,j)$ such that the edges $u_{p_i}u_{p_i+1}$ and $u_{q_j-1}u_{q_j}$ are equal, and by above this edge must be between the two coordinate change edges.
Thus, each of $u_{p_4}u_{p_4+1}, u_{p_3}u_{p_3+1}, u_{p_2}u_{p_2+1}, u_{p_1}u_{p_1+1}$ and $u_{q_4-1}u_{q_4}, u_{q_3-1}u_{q_3}, u_{q_2-1}u_{q_2}, u_{q_1-1}u_{q_1}$ have at least two edges between the two coordinate change edges.
This means that vertices $u_{p_2},u_{p_1}, u_{q_2},u_{q_1}$ are all between the two coordinate change edges.
By Claim~\ref{cl:5-yes}, vertices $u_{p_3}$ and $u_{q_3}$ cannot be between the two coordinate change edges.
Thus, we must have $u_{p_1}u_{p_1+1}=u_{q_2-1}u_{q_2}$ and $u_{p_2}u_{p_2+1}=u_{q_1-1}u_{q_1}$.
Since we use at most 8 edges total and exactly 6 distinct edges from \eqref{eq:5-yes-1}, we have $q_1=p_1=p_2+1=q_2-1$.
However, this is impossible, because that means node $u_{p_1}=u_{q_1}$ has two stacks, one containing vector $a_1$ and one containing vector $a_5$. By maximality of $p_1$, the stack containing vector $a_1$ has no other vectors, and by minimality of $q_1$, the stack containing vector $a_5$ has no other vectors, so vertex $u_{p_1}=u_{q_1}$ has two stacks with a total of only two vectors, a contradiction of the definition of a vertex in $L_2$.

\textbf{Case 3: the path has three coordinate change edges.}
Since the distinct edges of \eqref{eq:5-yes-1} are 
\begin{align}
  \label{eq:5-yes-2}
  u_{p_4}u_{p_4+1}, u_{p_3}u_{p_3+1}, u_{p_2}u_{p_2+1}, u_{p_1}u_{p_1+1}, u_{q_4-1}u_{q_4},
\end{align}
we must have 
\begin{align}
  u_{p_3}u_{p_3+1} \ &= \   u_{q_1-1}u_{q_1} \nonumber\\
  u_{p_2}u_{p_2+1} \ &= \   u_{q_2-1}u_{q_2} \nonumber\\
  u_{p_1}u_{p_1+1} \ &= \   u_{q_3-1}u_{q_3} \nonumber\\
\end{align}
Hence, by Claim~\ref{cl:5-yes}, there must be a coordinate change edge between any two edges in \eqref{eq:5-yes-2}, so we must have four coordinate change edges, a contradiction.

This proves that there cannot be a length 8 path from $(a_1,a_2,a_3,a_4)$ to $(a_5,a_4,a_3,a_2)$, showing that the diameter is at least 9, as desired.

\end{proof}

\end{document}